\documentclass[a4paper]{article}
\usepackage[utf8]{inputenc}
\usepackage[english]{babel} 
\usepackage{amsmath,amssymb,amsthm} 
\usepackage[margin=2cm]{geometry}
\usepackage{esint}
\usepackage[colorlinks]{hyperref}
\usepackage{enumerate}
\usepackage{framed,comment}
\usepackage{upgreek}
\usepackage{pgfplots}
\usepackage{float}
\usepackage{tikz,tikz-cd}
\usepackage{rotating}
\usepackage{todonotes}

\newtheorem{theorem}{Theorem}[section]
\newtheorem{corollary}{Corollary}[theorem]
\newtheorem{definition}{Definition}[section]

\newtheorem{lemma}[theorem]{Lemma}

\newtheorem{remark}[theorem]{Remark}

\usepackage{parskip}
\parskip=3pt
\parindent=1cm

\numberwithin{equation}{section}




\def\cd{\cdot}

\def\zh{\mathbf{\hat{z}}}

\def\bx{{\bf x}}

\def\bR{{\bf R}}

\def\bu{{\bf u}}

\newcommand{\sym}[1]{\boldsymbol{#1}}

\pgfplotsset{compat=1.14}

\begin{document}
\title{Stochastic mesoscale circulation dynamics in the thermal ocean \\
\bigskip\Large
Darryl D.Holm\quad\hbox{and}\quad Erwin Luesink\quad\hbox{and}\quad Wei Pan \\ \bigskip\large
Department of Mathematics, Imperial College London SW7 2AZ, UK
\\ \bigskip\normalsize
d.holm@ic.ac.uk and e.luesink16@ic.ac.uk and wpan1@ic.ac.uk} 
\date{}                                           

\maketitle

\makeatother

\noindent
\begin{abstract}
{\bf Keywords.} \{buoyancy fronts, oceanic cyclogenesis, stochastic models, geostrophic balance\}\\

In analogy with similar effects in adiabatic compressible fluid dynamics, the effects of buoyancy gradients on incompressible stratified flows are said to be `thermal'. 
The thermal rotating shallow water (TRSW) model equations contain three small nondimensional parameters. These are the Rossby number, the Froude number and the buoyancy parameter. 
Asymptotic expansion of the TRSW model equations in these three small parameters leads to the deterministic thermal versions of the Salmon's L1 (TL1) model and the thermal quasi-geostrophic (TQG) model, upon expanding in the neighbourhood of thermal quasi-geostrophic balance among the flow velocity and the gradients of free surface elevation and buoyancy. The linear instability of TQG at high wave number tends to create circulation at small scales. Such a high wave number instability could be unresolvable in many computational simulations, but its presence at small scales may contribute significantly to fluid transport at resolvable scales. Sometimes such effects are modelled via `stochastic backscatter of kinetic energy'. Here we try another approach. Namely, we model `stochastic transport' in the hierarchy of models TRSW/TL1/TQG. The models are derived via the approach of  \emph{stochastic advection by Lie transport} (SALT) as obtained from a recently introduced stochastic version of the Euler--Poincar\'e variational principle. We also indicate the potential next steps for applying these models in uncertainty quantification and data assimilation of the rapid, high wavenumber effects of buoyancy fronts at these three levels of description by using the data-driven stochastic parametrisation algorithms derived previously using the SALT approach.

\end{abstract}


\section{Introduction}

In this paper we are dealing with the thermal rotating shallow water (TRSW) equations, which can be regarded as the vertically averaged version of the primitive equations with a buoyancy variable \cite{zeitlin2018geophysical}.

In the balanced 2D model hierarchy of TRSW, TL1 and TQG, we are investigating a certain stochastic model of potential vorticity dynamics as a basis for stochastic parametrisation of the dynamical creation of unresolved degrees of freedom in computational simulations of upper ocean dynamics. Specifically, we have chosen the SALT (Stochastic Advection by Lie Transport) algorithm introduced in \cite{holm2019stochastic} and applied in \cite{cotter2018modelling, cotter2019numerically} as our modelling approach. The SALT approach preserves the Kelvin circulation theorem and an infinite family of integral conservation laws.
The goal of the SALT algorithm is to quantify the uncertainty in the process of upscaling, or coarse graining of either observed or synthetic data at fine scales, for use in computational simulations at coarser scales. The present work prepares us to take the next step from (ii) to (iii) in the well-known path of discovery in oceanography, weather prediction and climate science, which is
\begin{enumerate}[(i)]
\item
driven by large datasets and new methods for its analysis;
\item
informed by rigorous mathematical derivations and analyses of stochastic geophysical fluid equations;
\item
quantified using computer simulations, evaluated for uncertainty, variability and model error;
\item
optimized by cutting edge data assimilation techniques, then
\item
compared with new observation datasets to determine what further analysis and improvements will be
needed.
\end{enumerate}

The objective in applying the SALT algorithm to coarse grained simulations is to answer the following question, 
enunciated in \cite{cotter2018modelling, cotter2019numerically}:
``How can one use computationally simulated surrogate data at highly resolved scales, in combination
with the mathematics of stochastic processes in nonlinear dynamical systems, to estimate and model
the effects on the simulated variability at much coarser scales of the computationally unresolvable, small,
rapid, scales of motion at the finer scales?''
The present paper will lay the theoretical foundations for addressing this question in the 2D context of the thermal rotating shallow water (TRSW) model and its balanced thermal quasi-geostrophic (TQG) model. Our eventual goal is to apply the SALT algorithm to calibrate our stochastic models for assimilating data, e.g., from satellite observations of the cascade in the upper ocean dynamics of horizontally circulating structures to smaller scales, as shown in Figure \ref{Fig:globcurrent1} below.
\begin{figure}[h!]
\begin{center}
\includegraphics[width=0.8\textwidth]{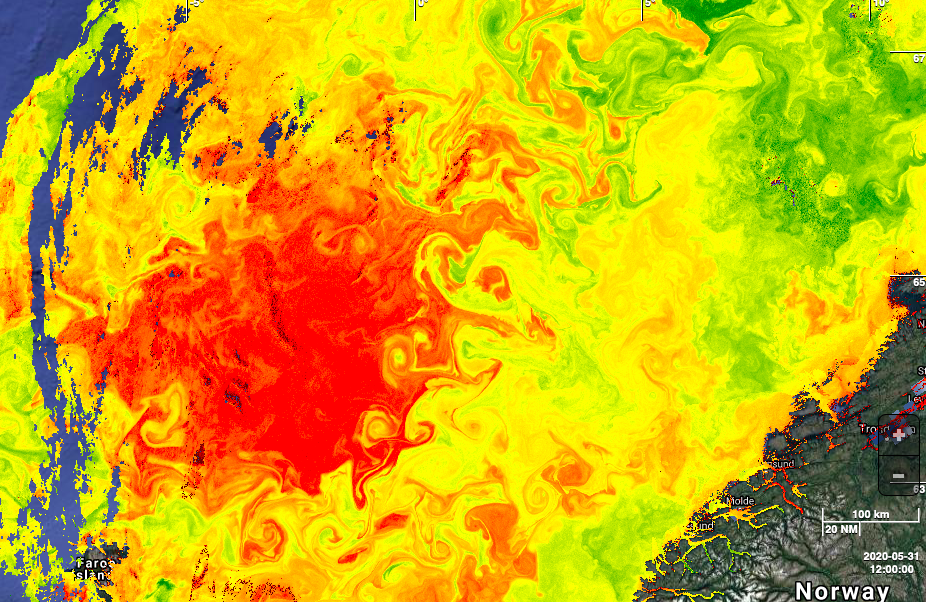}
\caption{\footnotesize This image of the Lofoten Vortex, courtesy of 
\href{https://ovl.oceandatalab.com/?date=1590926400268\&timespan=1d\&zoom=6\&extent=-2750822.3231634\%2C8280531.8867367\%2C1945468.6940242\%2C10572419.74252\&products=3857_Sentinel-2_RGB\%2C3857_AMSR_sea_ice_concentration\%2C3857_Sentinel-3A_OLCI_Chlorophyll_a_oc4me\%2C3857_Sentinel-3B_OLCI_Chlorophyll_a_oc4me\&opacity=100\%2C100\%2C100\%2C100\&stackLevel=95\%2C10\%2C84\%2C84.01}{https://ovl.oceandatalab.com/}
 illustrates the configurations of submesoscale currents obtained from ESA Sentinel-3 OLCI instrument observation of chlorophyll on the surface of the Norwegian Sea in the Lofoten Basin, near the Faroe Islands. Fueled by warm saline Atlantic waters crossing the Norwegian Sea, the Lofoten Basin is a major reservoir of heat whose buoyancy gradient interacts with the bathymetry gradient to sustain  a large anticyclonic vortex exhibiting intense mesoscale and submesoscale activity.  In particular, the figure shows many of the features of submesoscale currents surveyed in \cite{mcwilliams2019survey}. High resolution ($4km$) computational simulations of the Lofoten Vortex have recently discovered that its time-mean circulation is primarily barotropic, \cite{volkov2015formation}, thereby making the Lofoten Vortex a reasonable candidate for investigation using vertically averaged dynamics such as the TQG approach. For more information about the Lofoten Vortex, see, e.g., \cite{filyushkin2018evolution,bashmachnikov2017vertical, bashmachnikoveddies, bashmachnikov2018pattern, fedorov2020interaction} and references therein. 
}\label{Fig:globcurrent1}
\end{center}
\end{figure}

\begin{figure}[h!]
\begin{center}
\includegraphics[width=.465\textwidth]{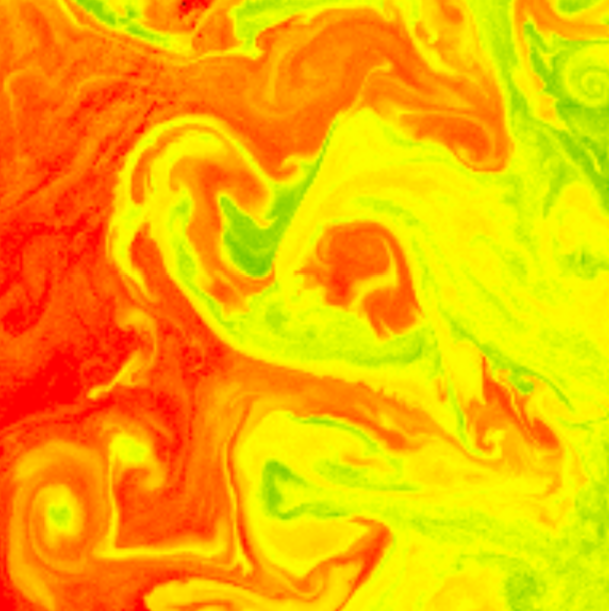}
\quad
\includegraphics[width=.4\textwidth]{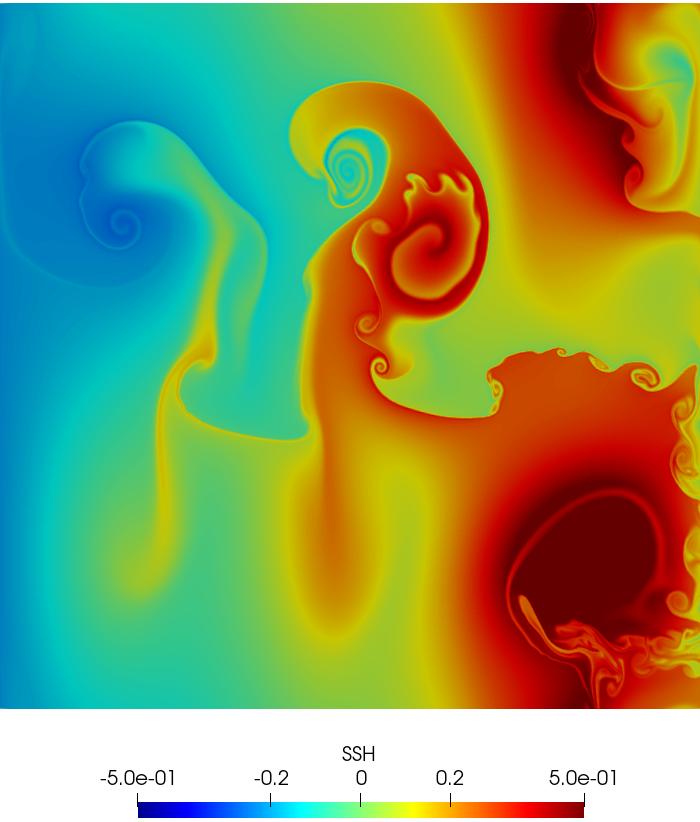}
\end{center}
\caption{\footnotesize An excerpt of the Lofoten Vortex satellite data from the centre of Figure \ref{Fig:globcurrent1} is compared with an image of sea surface height taken from an evolutionary computational simulation of deterministic TQG. The latter image illustrates the result of the creation of a buoyancy front from an initial state and its subsequent emergent cascade of circulation. More details of this computation are given in the caption of Figure \ref{Fig:CourtesyofWeiPan2}. The structural features of the two figures are similar in appearance. }\label{Fig:CourtesyofWeiPan-Intro}
\end{figure}

\paragraph{The TRSW equations.}
As shown in figure \ref{Fig:Tree-GFDapprox}, the TRSW equations arise in a series of nested approximations leading from the 3D Euler equations for inhomogeneous, stratified, rotating incompressible fluids, first to the 3D Euler--Boussinesq equations for small stratification, then to the rotating thermal Green-Naghdi equations, which were derived and investigated in \cite{holm2019stochastic}. Upon further neglecting the nonhydrostatic pressure effects which are present in the Green-Nahgdi model, the TRSW model is obtained. The TRSW equations (which were first called the IL$^0$ model in \cite{ripa1995low}) and their quasi-geostrophic approximation, the TQG equations, comprise standard models of thermal effects in GFD, as reviewed, e.g., in \cite{brocker2018mathematics, zeitlin2018geophysical}. The TQG equations are discussed in the GFD literature by Warneford and Dellar \cite{warneford2013quasi}, for example, following earlier work by Ripa  \cite{ripa1993conservation, ripa1995low} and \cite{ripa1999validity}. In fact, the TRSW and TQG equations and their high wave-number instabilities have been rederived several times, as recounted in \cite{zeitlin2018geophysical}. A multilayer extension of the shallow water model with stratification and shear can be found in \cite{beronvera2020multilayer}, which includes a historical background on the developments of the thermal rotating shallow water model. In this paper, we will derive the SALT \emph{stochastic versions} of TRSW and TQG, as well as TL1, which is  a thermal version of an intermediate theory known in the GFD literature as Salmon's L1 model \cite{salmon1983practical}.  In deriving the SALT versions of these  deterministic equations, we will follow the geometric approach of \cite{holm2015variational} which is based on Hamilton's variational principle for Eulerian fluid flows \cite{holm1998euler}. 

\begin{figure}[H]
\footnotesize
\centering
\begin{tikzcd}
[row sep = 3em, 
column sep=2em, 
cells = {nodes={top color=green!20, bottom color=blue!20,draw=black}},
arrows = {draw = black, rightarrow, line width = .01cm}]
& 
&
	{\begin{matrix}
	\text{3D Euler equations for}\\ \text{for stratified, rotating}
	\\ 
	\text{incompressible fluids}
	\end{matrix}} 
	\arrow[d,"{\begin{matrix} \footnotesize \text{small buoyancy}\\ \text{stratification}\end{matrix}}"', 
	shorten <= 1mm, shorten >= 1mm] 
&
\\
&	 
	{\begin{matrix} \text{Primitive equations} 
	\\ 
	\text{with stratification} 
	\end{matrix}} 
	\arrow[d,
		"{\begin{matrix} 
		\text{vertical average,}
		\\ 
		\text{small wave amplitude}
		\end{matrix}}"', 
	shorten <= 1mm, shorten >= 1mm]
& 
	{\begin{matrix} 
	\text{3D Euler--Boussinesq equations}
	\\ 
	\text{for stratified, rotating,}
	\\ 
	\text{incompressible fluids}
	\end{matrix}} 
	\arrow[dr, 
		"{\begin{matrix} \text{vertical average,}
		\\ 
		\text{very small wave amplitude}\end{matrix}}", 
		shorten <= 4mm, shorten >= 4mm] 
		\arrow[d, "{\begin{matrix} \text{vertical average,}
		\\ \text{small wave amplitude}\end{matrix}}"', 
		shorten <= 1mm, shorten >= 1mm] 
	\arrow[l,
		"\begin{tabular}{l}
		\tiny \text{hydrostatic} 
		\\ 
		\tiny \text{approx.} 
		\\ 
		\text{ } \end{tabular}"', 
		shorten <= 1mm, shorten >= 1mm] 
&
\\
&
	{\begin{matrix} 
	\text{Thermal rotating}
	\\ 
	\text{shallow water} 
	\end{matrix}}
	\arrow[dd,
	"\text{restrict to 1D}", 
	shorten <= 1mm, shorten >= 1mm] 
	\arrow[dl,
	"\begin{tabular}{l} 
	\tiny \text{asymptotic expansion around} 
	\\
	\tiny \text{thermal geostrophic balance}
	\end{tabular}"', 
	shorten < =4mm, shorten >= 4mm]
& 
	{\begin{matrix} 
	\text{Thermal rotating}
	\\ 
	\text{Green--Naghdi equations} 
	\end{matrix}} 
	\arrow[dd, "\text{restrict to 1D}"', 
	shorten <= 1mm, shorten >= 1mm] 
	\arrow[l,
	"\begin{tabular}{l} 
	\\ 
	\tiny \text{hydrostatic}
	\\ 
	\tiny \text{approx.} 
	\end{tabular}", 
	shorten <= 1mm, shorten >= 1mm] 
	\arrow[r,
	"\begin{tabular}{l} \text{ } 
	\\ 
	\tiny \text{rigid lid} 
	\end{tabular}"', 
	shorten <= 1mm, shorten >= 1mm]
& 
	{\begin{matrix} 
	\text{Thermal rotating}
	\\ 
	\text{Great Lake equations} 
	\end{matrix}}
	\quad 
	\arrow[d, 
	"\begin{tabular}{l} 
	\tiny \text{hydrostatic} 
	\\ 
	\tiny \text{approx.} 
	\end{tabular}"', 
	shorten <= 1mm, shorten >= 1mm] 
\\
	{\begin{matrix} 
	\text{Thermal L1 model,} 
	\\ 
	\text{Thermal quasi-geostrophic} 
	\\ 
	\text{ (TQG) equations} 
	\\ 
	\text{ } 
	\\ 
	\text{TQG reduces to QG} 
	\\ 
	\text{for constant buoyancy.} 
	\end{matrix}}
&
&
& 
	{\begin{matrix} 
	\text{Thermal rotating}
	\\ 
	\text{Lake equations}
 	\end{matrix}}
\\
& 	
	{\begin{matrix} 
	\text{TS-V, S-V, Burgers}
	\\ \text{Bore solutions}
	\end{matrix}}
& 	
	{\begin{matrix} 
	\text{TCH3, CH2, CH, KdV}
	\\ 
	\text{Soliton solutions}
	\end{matrix}} 
	\arrow[l,
	"\begin{tabular}{l} 
	\\ \tiny \text{hydrostatic} 
	\\ \tiny \text{approx.} 
	\end{tabular}", 
	shorten <= 1mm, shorten >= 1mm] 
&
\end{tikzcd}
\caption{\footnotesize {This tree-diagram describes how the various asymptotic expansions, hydrostatic assumptions and vertical averages lead to a plethora of intimately related GFD approximations. The present work concentrates on the paths leading from the thermal rotating shallow water (TRSW) equations to the thermal quasi-geostrophic (TQG) equations via the the thermal version of Salmon's L1 model.}}
\label{Fig:Tree-GFDapprox}
\end{figure}
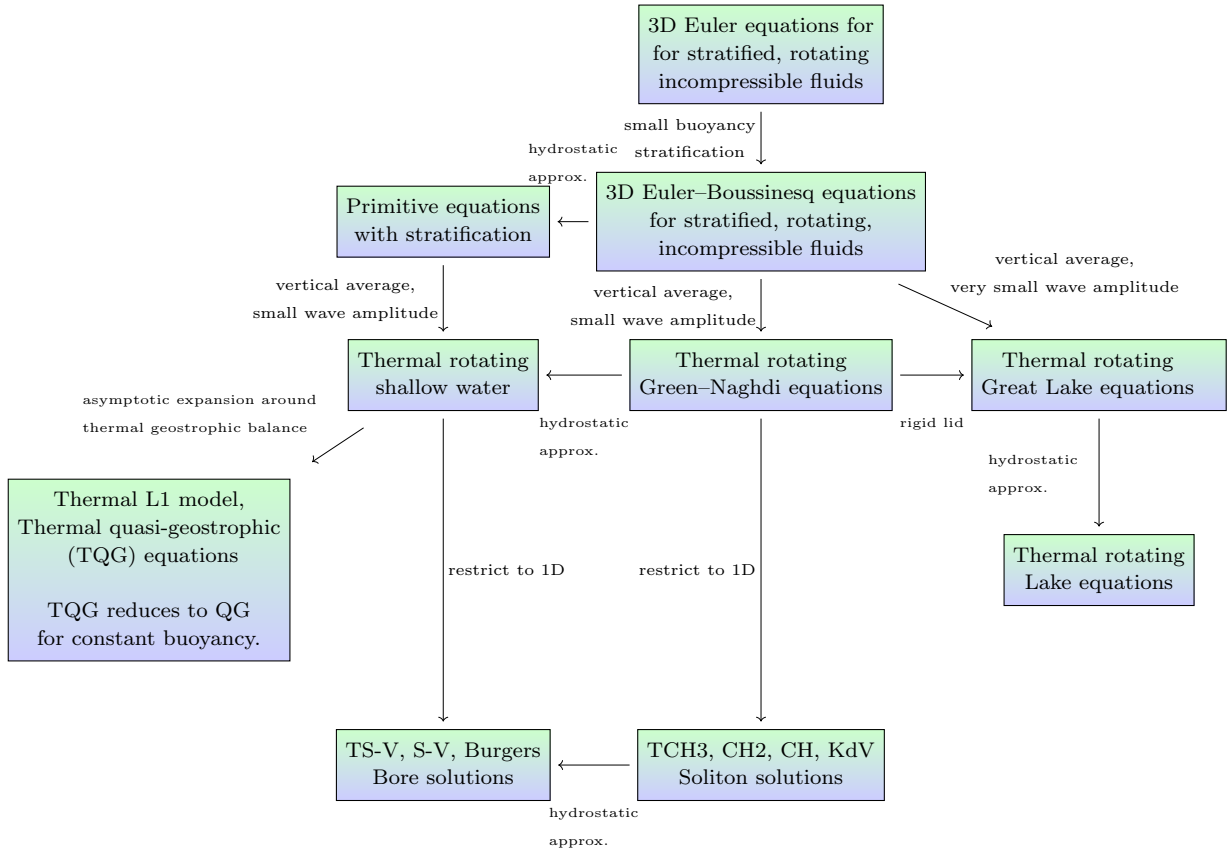
\normalsize

The TRSW, TL1 and TQG models separate the wave and current aspects of their flows into gravity waves and Rossby waves on the free surface, and fluid circulation in the region between the free surface and the bottom topography. The Kelvin circulation theorems for TRSW, TL1 and TQG show that horizontal gradients of the buoyancy in the fluid region (e.g., at thermal fronts) can couple to either the elevation gradient at the upper interface, or to the bathymetry gradient at the lower interface. Namely, horizontal circulation of the fluid is produced whenever either of the gradients at the upper and lower interfaces are misaligned with the horizontal gradient of the buoyancy. Thus, both waves on the surface and variations of the bottom topography can create horizontal fluid circulation when the buoyancy is spatially inhomogeneous. 

The generation of submesoscale circulations involves a wide range of time scales, as well as many couplings among the various degrees of freedom and the boundaries. The `irreducible imprecision' of numerical simulations \cite{mcwilliams2007irreducible} and the sparsity of observed data in both space and time produce uncertainty in forecasts of rapid, high-wavenumber GFD processes and thereby present a `grand challenge' for data assimilation. 

In preparation for meeting this challenge, the present paper develops the stochastic variational principles for the TRSW, TL1 and TQG models. This mathematical foundation is needed in applying the SALT  (Stochastic Advection by Lie Transport) approach to the derivation of stochastic fluid equations which preserve the geometric structure of fluid dynamics \cite{holm2015variational}.  The physical effects of stochasticity in the SALT approach for deriving the stochastic TRSW, TL1 and TQG models are revealed in their Kelvin circulation theorems. Namely, the corresponding material loops defining the circulation integrals for these models are shown to move along stochastic Lagrangian paths.  The  motivations and recent results of applications of the SALT approach for uncertainty quantification and for data assimilation are laid out in \cite{cotter2018modelling,cotter2019numerically}. \medskip

\begin{remark}[Separation of scales behaviour in TQG versus QG]
High Reynolds-number two-dimensional Navier-Stokes turbulence relaxes through a combination of vortex merger and filamentation, in which energy tends to flow to large scales and enstrophy tends to flow and be dissipated on fine scales. The combined conservation of two integral conservation laws of energy and enstrophy produces the famous \emph{inverse cascade} \cite{kraichnan1967inertial}, which also occurs in QG turbulence. However, TQG turbulence is different. First, TQG does not preserve an enstrophy because there are two active degrees of freedom in TQG, both the momentum and the buoyancy. Second, TQG possesses high-wavenumber instabilities. That is, linear analysis reveals that high wave numbers in a TQG flow can suddenly become unstable. Turbulence tends to mimic the properties and locations of its energy source. Consequently, numerical simulations of the nonlinear processes in TQG reveal the nonlinear sudden creation of coherent structures at the scale of highest growth rate at the onset of the high-wavenumber, small-lengthscale instabilities. Thus, unlike QG turbulence, a scale-separation exists in TQG turbulence. Therefore, one can expect TQG dynamics to mimic QG dynamics at low wavenumbers and to transform into another type of motion when the flow enters the onset of the high-wavenumber instability for TQG.  When this happens, the turbulence will be limited to the scale of the wavelength at which the highest growth-rate occurs, which might be significantly smaller than the scale of the entire domain, see figure \ref{Fig:CourtesyofWeiPan2}. This means TQG possesses a scale separation in its solution behaviour which can be radically different from the solution behaviour of other models in the QG family. In particular, one can expect that TQG will require different approaches from QG in the methods of its analysis and parameterisation. Indeed, the misalignment of gradients of bathymetry with gradients of buoyancy can create circulation even when there is no initial flow. Because of the scale gap in its solution behaviour, one may expect that TQG may require different methods of analysis,  parameterisation and numerical implementation  in the same computational simulation of a given TQG flow. 
\end{remark}

\paragraph{Content of the paper}
\begin{enumerate}

\item In section \ref{DTRSW-sec} we review the deterministic TRSW model by re-deriving its equations in the Euler--Poincar\'e variational framework of \cite{holm1998euler}. In the Euler--Poincar\'e framework, we prove the Kelvin--Noether circulation theorem and discuss steady solution properties of the deterministic TRSW equations. This derivation of the TRSW equations with stochastic advection by Lie transport (SALT) is intended to be the mathematical foundation for a  systematic means of introducing data-driven  parametrisations of stochastic transport for uncertainty quantification and data assimilation for upper ocean dynamics.  This type of uncertainty quantification and data assimilation has already been accomplished using this approach for the 2D Euler equations in a square domain with fixed boundaries and the 2-layer QG equations in a periodic channel, in \cite{cotter2018modelling} and \cite{cotter2019numerically}, respectively.

\item In section \ref{sec:detEliassen} we discuss the deterministic thermal Eliassen approximation, or TL1 model, of TRSW, as derived from a combination of the Euler--Poincar\'e variational approach and asymptotic expansions in the vorticity--divergence representation of the fluid velocity. In the derivation of TL1, we use a modified version of the Euler--Poincar\'e framework introduced in \cite{allen1996extended} which expresses the approximate momentum in terms of gradients of advected quantities.

\item In section \ref{sec:stochEliassen} we derive the TL1 equations with stochastic advection by Lie transport (SALT) in the modified Euler--Poincar\'e variational framework. These stochastic TL1 equations would be useful for uncertainty quantification and data assimilation at this intermediate level of approximation. 

\item In section \ref{sec:TQGmodel} we take the next step in the asymptotic expansion to derive and discuss the deterministic version of the thermal quasi-geostrophic (TQG) equations in section \ref{sec:detTQG}. Section \ref{sec:TQG-numerical example} specifies the numerical details of an example implementation of the TQG solution shown in Figure \ref{Fig:CourtesyofWeiPan-Intro}. We then investigate the Hamiltonian framework of the TQG equations. The Hamiltonian formulation of the TQG equations can be used to derive the SALT stochastic TQG equations. This is described in section \ref{sec:hamTQG}.

\item In section \ref{sec:openprobs} we conclude by outlining a few next steps and open problems to which the present work has led us, but which we feel are beyond the scope of the present paper.

\end{enumerate}

\section{The thermal rotating shallow water (TRSW) model}\label{DTRSW-sec}

The thermal rotating shallow water (TRSW) model describes the motion of a single two dimensional layer of fluid with horizontally varying buoyancy and bottom topography (or, bathymetry). 
The TRSW model is an extension of the rotating shallow water model and a simplification of the various three dimensional models such as the Primitive Equations and the Euler-Boussinesq model, which are commonly used for computationally simulating large-scale ocean and atmosphere circulation dynamics. The thermal rotating shallow water equations may also be interpreted as a model for an upper active layer of fluid on top of a lower inert layer. For that reason the TRSW model is sometimes called a 1.5 layer model \cite{warneford2013quasi}. A stochastic version of this model has already been derived from a variational point of view in Appendix B of \cite{holm2019stochastic}. For a related deterministic discussion of a fully multilayer variational model with nonhydrostatic pressure, see \cite{cotter2010square}.

\subsection{Deterministic TRSW equations}
The deterministic TRSW equations in a rotating planar domain ${\cal D}\in\mathbb{R}^2$ with boundary $\partial {\cal D}$ are expressed using the following notation.  The depth is denoted $\eta =  \eta(\bx, t)$, where $\bx=(x,y)$ is the horizontal vector position, and $t$ is time.  The (nonnegative) horizontal buoyancy is written as $b(\bx,t) = \rho(\bx,t)/\bar{\rho}$, where  $\rho(\bx,t)$ is the mass density, $\bar{\rho}$ is the uniform reference mass density. The nondimensional deterministic TRSW equations for the Eulerian horizontal vector velocity $\bu(\bx,t)$, thickness $\eta(\bx, t)$, and buoyancy $b(\bx,t)$ of the active fluid layer are 
given by
\begin{equation}   
\frac{D}{Dt} \bu+ \frac{1}{{\rm Ro}}f\zh\times\bu   
= - \frac{\alpha}{{\rm Fr}^2}\nabla\big((1+\mathfrak{s}b) \zeta\big) + \frac{\mathfrak{s}}{2{\rm Fr}^2}(\alpha\zeta-h)\nabla b \, ,   
\qquad 
\frac{\partial \eta}{\partial t} + \nabla\cd (\eta\bu) = 0\, ,   
\qquad
\frac{D}{Dt}b = 0\,.
\label{trsw-eqns}  
\end{equation}
The other notation is $f$ for the Coriolis parameter, $\alpha\zeta=\eta-h$ for the free surface elevation, where $h(\bx)$ is the time-independent mean depth, $\mathfrak{s}$ for the stratification parameter, $\alpha$ for the wave amplitude, ${\rm Ro}=U/(f_0 L)$ for the Rossby number and ${\rm Fr}^2=U^2/(g H)$ for the Froude number. The material time derivative for scalar advected quantities is denoted by $\frac{D}{Dt}:=\partial_t + \bu\cdot\nabla$. In defining the dimensionless numbers, $U$ denotes the horizontal velocity scale, $L$ is the horizontal length scale, $f_0$ is the typical rotation frequency, $g$ is the gravitational acceleration and $H$ is the typical depth. The stratification parameter $\mathfrak{s}$ is introduced so that the buoyancy variable has size $\mathcal{O}(1)$ and the importance of buoyancy is governed by the size of the stratification parameter. The stratification parameter is particularly important in introducing the Boussinesq approximation. The Boussinesq approximation is necessary to derive the thermal rotating shallow water equations. The wave amplitude $\alpha$ is the typical free surface elevation $\zeta_0$ divided by the typical depth $H$ and is introduced so that $\zeta$ has size $\mathcal{O}(1)$ and the importance of free surface waves is governed by the size of the wave amplitude. The notation $\zh$ is used to denote the unit vector perpendicular to the flow domain $\cal D$. The boundary conditions are 
\begin{equation} 
\mathbf{\hat{n}}\cd \bu=0
\quad\hbox{and}\quad
\mathbf{\hat{n}}\times\nabla b = 0
\quad\hbox{on the boundary $\partial{\cal D}$,}
\label{trsw-bdy}  
\end{equation}
meaning that fluid velocity $\bu$ is tangential and buoyancy $b$ is constant on the boundary $\partial{\cal D}$, fixed in the frame rotating at time-independent angular frequency $\zh f(\bx)/2$. In the boundary conditions $\mathbf{\hat{n}}$ denotes the outward unit normal. Periodic boundary conditions may also be considered.

\paragraph{Variational formulation.} The TRSW equations \eqref{trsw-eqns} can be derived by means of the Euler--Poincar\'e variatonal principle, as is shown in \cite{brocker2018mathematics}. When the equations of motion are derived in this framework, there is a natural way to express three fundamental relations. The first fundamental relation is the Kelvin circulation theorem, the second one is the advection equation for potential vorticity and the third one is an infinity of conserved integral quantities arising from Noether's theorem for the symmetry of Eulerian fluid quantities under Lagrangian particle relabelling. For example, in rotating shallow water, without a buoyant scalar, the enstrophy is among these integral quantities. This framework turns out to be ideal for introducing stochasticity, as shown in \cite{holm2015variational, de2020implications} and one can similarly introduce rough paths \cite{crisan2020variational} into continuum mechanics. In applications, this framework provides a means to consistently introduce data-driven parametrisations of stochastic transport into a large class of fluid models \cite{cotter2018modelling,cotter2019numerically}.

\subsection{The Euler--Poincar\'e theorem}\label{sec-EPframework}

\paragraph{Variational derivatives of functionals.}
The Euler--Poincar\'e theorem relies on variational derivatives of functionals. This type of derivative is given by the following definition.
\begin{definition}\label{def:variationalderivatives}
A functional $F[\rho]$ is defined as a map  $F: \rho\in C^\infty(\cal D)\to \mathbb{R}$. The variational derivative of $F(\rho)$, denoted $\delta F/ \delta \rho$, is defined by the linear functional
\begin{align}\label{eq:var-op-def}
\delta F[\rho]
:= \lim_{\varepsilon\to 0}\frac{F[\rho+\varepsilon \phi]-F[\rho]}{\varepsilon} 
=:  \frac{d}{d\varepsilon}F[\rho+\varepsilon \phi]\bigg|_{\epsilon=0}
=
 \int_\Omega \frac{\delta F}{\delta\rho}(x) \phi(x) \; dx 
 =: \left\langle \frac{\delta F}{\delta\rho}\,,\,\phi \right\rangle
\,.\end{align}
In this definition, $\varepsilon\in \mathbb{R}$ is a real parameter, $\phi$  is an arbitrary smooth function and the angle brackets $\langle\,\cdot\,,\,\cdot\,\rangle$ indicate $L^2$ real symmetric  pairing of integrable smooth functions on the flow domain $\cal D$. The function $ \phi(x)$ above is called the `variation of $\rho$' and will be denoted as $\delta \rho:= \phi(x) $. Since the variation is a linear operator on functionals, we can define the functional derivative $\delta$ in \eqref{eq:var-op-def} operationally as
\begin{equation}
\delta F[\rho] := \left\langle \frac{\delta F}{\delta\rho}\,,\,\delta \rho \right\rangle.
\end{equation}
\end{definition}

\paragraph{Euler--Poincar\'e theorem.}
Given the boundary conditions and definitions above, the following form of the Euler--Poincar\'e theorem  will provide the deterministic equations of motion derived from Hamilton's principle. Suppose a deterministic Lagrangian functional $\ell:\mathfrak{X}\times V^*\to \mathbb{R}$ is defined on the domain of flow, $\cal D$. Here $\mathfrak{X}$ denotes the space of smooth vector fields on $\cal D$ and $V^*$ is the vector space of advected quantities. Advected quantities are tensor fields of various types which are preserved along the flow. The space of smooth vector fields is a Lie algebra under the action of the Jacobi--Lie bracket, which is denoted as $[\,\cdot\,,\,\cdot]:\mathfrak{X}\times \mathfrak{X}\to \mathfrak{X}$, and is defined for $u,v\in \mathfrak{X}$ by the commutator relation
\begin{equation}\label{eq:ad-def}
\big[u,v\big] := \big( (\mathbf{u}\cdot\nabla)\mathbf{v} 
-
(\mathbf{v}\cdot\nabla)\mathbf{u} \big)\cdot\nabla
\,.
\end{equation}
\begin{theorem}[Euler--Poincar\'e equations \cite{holm1998euler}]
\label{thm:EP}$\,$\\
The following two statements are equivalent:
\begin{enumerate}[i)]
\item Hamilton's variational principle in Eulerian coordinates, with $\mathbf{u}\in\mathfrak{X}(\cal D)$ and $b, \eta\in V^*(\cal D)$,\begin{equation}
\delta S := \delta\int_{t_1}^{t_2}\ell(\mathbf{u},b,\eta) \,dt= 0,
\label{eq:hvp}
\end{equation}
holds on $\mathfrak{X}({\cal D})\times V^*$, using variations of the form
\begin{equation}
\delta \mathbf{u} = \frac{\partial}{\partial t}\mathbf{v}- [\bu, \mathbf{v}], \qquad \delta b = -(\mathbf{v}\cdot\nabla) b \,, \qquad \delta \eta = -\nabla\cdot(\eta\mathbf{v})\,,
\end{equation}
where the vector field $\mathbf{v}\in\mathfrak{X}(\cal D)$ is arbitrary and vanishes on the endpoints $t_1$ and $t_2$.
\item The Euler--Poincar\'e equations hold. These equations are
\begin{equation}
\frac{\partial}{\partial t}\frac{\delta \ell}{\delta \mathbf{u}} 
+ (\bu\cdot\nabla)\frac{\delta\ell}{\delta \mathbf{u}} 
+ (\nabla\bu)\cdot\frac{\delta\ell}{\delta \mathbf{u}} 
+ \frac{\delta\ell}{\delta \mathbf{u}} (\nabla\cdot \bu)
= -\frac{\delta \ell}{\delta b}\nabla b + \eta\nabla\frac{\delta\ell}{\delta \eta}
\,,\label{eq:EPeq}
\end{equation}
or, equivalently, in two-dimensional vector calculus notation,
\begin{equation}
\frac{\partial}{\partial t} \frac{\delta \ell}{\delta \mathbf{u}} 
+ \left(\nabla^\perp \cdot  \frac{\delta\ell}{\delta \mathbf{u}} \right)\bu^\perp  
+ \nabla \left(\bu \cdot\frac{\delta\ell}{\delta \mathbf{u}} \right)
+ \frac{\delta\ell}{\delta \mathbf{u}} (\nabla\cdot\bu)
= -\frac{\delta \ell}{\delta b}\nabla b + \eta\nabla\frac{\delta\ell}{\delta \eta}
\,,\label{eq:EPeq-Cartan2D}
\end{equation}
or, finally, as an embedding in three dimensional space,
\begin{equation}
\frac{\partial}{\partial t} \frac{\delta \ell}{\delta \mathbf{u}} 
- \bu\times \left(\nabla \times  \frac{\delta\ell}{\delta \mathbf{u}} \right) 
+ \nabla \left(\bu \cdot\frac{\delta\ell}{\delta \mathbf{u}} \right)
+ \frac{\delta\ell}{\delta \mathbf{u}} (\nabla\cdot\bu)
= -\frac{\delta \ell}{\delta b}\nabla b + \eta\nabla\frac{\delta\ell}{\delta \eta}
\,,\label{eq:EPeq-Cartan}
\end{equation}
with advection equations 
\begin{align}
\frac{\partial}{\partial t} b = -\, \bu \cdot \nabla b
\quad\hbox{and}\quad
\frac{\partial}{\partial t} \eta = -\, \nabla\cdot(\eta\bu)
\,.
\label{eq:EB-advectionLaws}
\end{align}
\end{enumerate}
\end{theorem}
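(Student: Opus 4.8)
The plan is to prove the two statements equivalent by a single direct computation of the first variation $\delta S$ of the action in \eqref{eq:hvp}, carried out forwards for the implication (i)$\Rightarrow$(ii) and read backwards for the converse. Writing $\mathbf{m} := \delta\ell/\delta\mathbf{u}$ for the momentum one-form density, the first step is to expand $\delta S$ by the chain rule using the pairing of Definition~\ref{def:variationalderivatives},
\[
\delta S = \int_{t_1}^{t_2}\left(\left\langle \frac{\delta\ell}{\delta\bu}, \delta\bu\right\rangle + \left\langle \frac{\delta\ell}{\delta b}, \delta b\right\rangle + \left\langle \frac{\delta\ell}{\delta \eta}, \delta \eta\right\rangle\right)dt,
\]
and then to substitute the constrained (Lin) variations $\delta\bu = \partial_t\mathbf{v} - [\bu,\mathbf{v}]$, $\delta b = -(\mathbf{v}\cdot\nabla)b$ and $\delta\eta = -\nabla\cdot(\eta\mathbf{v})$. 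The entire argument then consists of moving every derivative off the arbitrary field $\mathbf{v}$ by integration by parts, so that $\delta S$ is presented as a single pairing $\int_{t_1}^{t_2}\langle \mathcal{E}, \mathbf{v}\rangle\,dt$; the fundamental lemma of the calculus of variations then forces $\mathcal{E}=0$, and I will identify $\mathcal{E}=0$ with \eqref{eq:EPeq}.

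Next I would treat the three pairings in turn. In the velocity pairing the term $\langle\mathbf{m},\partial_t\mathbf{v}\rangle$ is integrated by parts in time; the endpoint contribution $[\langle\mathbf{m},\mathbf{v}\rangle]_{t_1}^{t_2}$ vanishes because $\mathbf{v}$ is required to vanish at $t_1,t_2$, leaving the time-derivative term $\partial_t\mathbf{m}$ of \eqref{eq:EPeq}. The bracket term $\langle\mathbf{m},[\bu,\mathbf{v}]\rangle$ is rearranged by integration by parts in space so that all derivatives fall on $\mathbf{v}$; this identifies it with the coadjoint transport operator $\mathrm{ad}^{*}_{\bu}\mathbf{m}$, which for a one-form density $\mathbf{m}$ a short index computation shows to equal $(\bu\cdot\nabla)\mathbf{m} + (\nabla\bu)\cdot\mathbf{m} + (\nabla\cdot\bu)\mathbf{m}$ --- exactly the transport operator on the left-hand side of \eqref{eq:EPeq}. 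For the two advected quantities I would use that their variations are minus Lie derivatives, $\delta b = -\mathcal{L}_{\mathbf{v}}b$ and $\delta\eta = -\mathcal{L}_{\mathbf{v}}\eta$, and rearrange (integrating by parts in space where needed) to transfer all derivatives onto $\mathbf{v}$; identifying the outcomes as the diamond products $\tfrac{\delta\ell}{\delta b}\diamond b = -\tfrac{\delta\ell}{\delta b}\nabla b$ (scalar case, no integration by parts) and $\tfrac{\delta\ell}{\delta\eta}\diamond\eta = \eta\nabla\tfrac{\delta\ell}{\delta\eta}$ (density case, one integration by parts) reproduces the right-hand side of \eqref{eq:EPeq}.

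Collecting the three contributions gives $\delta S = \int_{t_1}^{t_2}\langle\mathcal{E},\mathbf{v}\rangle\,dt$ with $\mathcal{E}$ the difference of the two sides of \eqref{eq:EPeq}; arbitrariness of $\mathbf{v}\in\mathfrak{X}(\mathcal{D})$ yields (ii), and reading the same chain of equalities backwards gives (i) from (ii), which settles the equivalence. The alternative forms \eqref{eq:EPeq-Cartan2D} and \eqref{eq:EPeq-Cartan} then follow purely algebraically from the vector-calculus identity $(\bu\cdot\nabla)\mathbf{m} + (\nabla\bu)\cdot\mathbf{m} = \nabla(\bu\cdot\mathbf{m}) - \bu\times(\nabla\times\mathbf{m})$ together with its planar reduction $\bu\times(\nabla\times\mathbf{m}) = -(\nabla^\perp\cdot\mathbf{m})\bu^\perp$ in two dimensions. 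Finally, the advection laws \eqref{eq:EB-advectionLaws} are not additional output of the stationarity condition: they are the defining transport relations for the advected quantities $b,\eta\in V^*$, i.e.\ the reconstruction relations from which the constrained variations were themselves obtained, so they hold identically throughout.

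The part I expect to be delicate is bookkeeping rather than conceptual depth. One must keep strict track of the tensor character of each field --- $\mathbf{m}$ as a one-form density, $b$ as a scalar function, $\eta$ as a density --- since the coadjoint operator $\mathrm{ad}^{*}_{\bu}$ and the diamond operator $\diamond$ take a different explicit form in each case, and a single misplaced sign in the bracket orientation \eqref{eq:ad-def} or in the adjointness relation propagates through to the wrong sign of the transport term. The second point requiring care is that every spatial integration by parts produces a boundary integral over $\partial\mathcal{D}$, and these must be shown to vanish: this is exactly where the boundary conditions \eqref{trsw-bdy}, namely $\mathbf{\hat{n}}\cdot\bu=0$ --- whence $\mathbf{\hat{n}}\cdot\mathbf{v}=0$ for admissible variations --- together with $\mathbf{\hat{n}}\times\nabla b=0$, are invoked to discard the surface terms. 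Apart from these two bookkeeping issues the derivation is a routine, if lengthy, integration by parts.
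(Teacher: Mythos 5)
Your proposal is correct and follows essentially the same route as the paper's proof: expand $\delta S$ by the chain rule, substitute the constrained variations, integrate by parts in time and space so all derivatives fall on the arbitrary field $\mathbf{v}$, invoke its arbitrariness (and vanishing at $t_1,t_2$), and obtain the alternative forms \eqref{eq:EPeq-Cartan2D}--\eqref{eq:EPeq-Cartan} from the standard vector identity. Your explicit attention to the boundary terms and to the tensorial character of $\delta\ell/\delta\mathbf{u}$, $b$ and $\eta$ (the $\mathrm{ad}^*$ and diamond bookkeeping) is sound and in fact slightly more careful than the paper, which passes over these points silently.
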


\begin{remark}
The abstract statement of the Euler--Poincar\'e Theorem \ref{thm:EP}, formulated on general semidirect product Lie groups, is presented in \cite{holm1998euler} deterministically, in \cite{holm2015variational,de2020implications} stochastically and finally in \cite{crisan2020variational} on rough paths.
\end{remark}\medskip

\begin{remark}\label{rem:perpnotation}
In Theorem \ref{thm:EP}, the operator $\delta$ in \eqref{eq:hvp} is the functional derivative defined in \eqref{eq:var-op-def}, the brackets $[\,\cdot\,,\,\cdot\,]$ denote the commutator of vector fields defined in \eqref{eq:ad-def}, and $\mathbf{v} \in\mathfrak{X}(\cal D)$ is an arbitrary vector field in two dimensions which vanishes at the endpoints in time, $t_1$ and $t_2$. Equations \eqref{eq:EPeq-Cartan2D} and \eqref{eq:EPeq-Cartan} are equivalent and can be transformed into each other by the conventions $\bu^\perp = \zh\times\bu$ and $\nabla^\perp = \zh\times\nabla$, where $\zh$ is the outward unit vector perpendicular to the planar domain $\cal D$.
\end{remark}\medskip

\begin{remark}
One may interpret the Euler--Poincar\'e equation \eqref {eq:EPeq} in terms of Newton's law for fluid motion. Namely, the rate of change of the covector momentum density $\mathbf{P}:=\delta\ell/\delta \mathbf{u}$ along the flow of the fluid velocity vector field $u$ equals the sum of covector force densities on the right hand side of equation \eqref{eq:EPeq}.
\end{remark}

\begin{proof}
Hamilton's variational principle implies
\begin{align*}
0 &= \int_{t_1}^{t_2}\left[ \left\langle\frac{\delta\ell}{\delta\mathbf{u}},\delta\mathbf{u} \right\rangle_\mathfrak{X} + \left\langle \frac{\delta\ell}{\delta b},\delta b \right\rangle_{V^*} + \left\langle\frac{\delta\ell}{\delta \eta},\delta \eta \right\rangle_{V^*}\right]\,dt
\\
&=  \int_{t_1}^{t_2}\left[ \left\langle\frac{\delta\ell}{\delta\mathbf{u}},\frac{\partial}{\partial t}\mathbf{v} - [\bu, \mathbf{v}]\right\rangle_\mathfrak{X} + \left\langle \frac{\delta\ell}{\delta b},-(\mathbf{v}\cdot\nabla)b \right\rangle_{V^*} + \left\langle\frac{\delta\ell}{\delta \eta},-\nabla\cdot(\eta \mathbf{v})\right\rangle_{V^*}\right]\, dt
\\
&= \int_{t_1}^{t_2}\left[ \left\langle -\frac{\partial}{\partial t} \frac{\delta\ell}{\delta \mathbf{u}}- (\bu \cdot\nabla)\frac{\delta\ell}{\delta \mathbf{u}} - (\nabla\bu)\cdot\frac{\delta\ell}{\delta \mathbf{u}} + \frac{\delta\ell}{\delta\mathbf{u}}(\nabla\cdot\bu),\mathbf{v}\right\rangle_\mathfrak{X} + \left\langle -\frac{\delta\ell}{\delta b}\nabla b, \mathbf{v}\right\rangle_\mathfrak{X}\right.\\
&\qquad\quad \left.+ \left\langle \eta\nabla\frac{\delta\ell}{\delta \eta}, \mathbf{v}\right\rangle_\mathfrak{X} \right]\,dt.
\end{align*}
The subscripts $\mathfrak{X}$ and $V^*$ on the $L^2$ pairings indicate over which space that the pairing is defined. Since $\mathbf{v}$ is arbitrary and vanishes at the endpoints $t_1$ and $t_2$ in time, the following equation holds,
\begin{align*}
\frac{\partial}{\partial t}\frac{\delta \ell}{\delta \mathbf{u}} + (\bu \cdot\nabla)\frac{\delta\ell}{\delta \mathbf{u}} + (\nabla\bu)\cdot\frac{\delta\ell}{\delta \mathbf{u}} + \frac{\delta\ell}{\delta\mathbf{u}}(\nabla\cdot\bu)= -\frac{\delta \ell}{\delta b}\nabla b + \eta\nabla\frac{\delta\ell}{\delta \eta}\,.
\end{align*}
This finishes the proof of the stochastic Euler--Poincar\'e equation in \eqref{eq:EPeq}. The equivalent forms in equations \eqref{eq:EPeq-Cartan2D} and \eqref{eq:EPeq-Cartan} follow by means of a standard vector identity. 
\end{proof}

\subsection{Kelvin--Noether circulation theorem}\label{sec-KNthm}
A straightforward calculation using the second advection equation in \eqref{eq:EB-advectionLaws} shows that \eqref{eq:EPeq} may be written equivalently as follows.
\medskip

\begin{lemma}\label{lemma:circulationform}
The Euler--Poincar\'e equation in \eqref{eq:EPeq} is equivalent to the following,
\begin{equation}
\frac{\partial}{\partial t}\left(\frac{1}{\eta}\frac{\delta \ell}{\delta \mathbf{u}}\right)
+ (\bu\cdot\nabla)\left(\frac{1}{\eta}\frac{\delta \ell}{\delta \mathbf{u}}\right)
+ (\nabla \bu)\cdot\left(\frac{1}{\eta}\frac{\delta \ell}{\delta \mathbf{u}}\right)
= -\,\frac{1}{\eta}\frac{\delta \ell}{\delta b}\nabla b + \nabla\frac{\delta\ell}{\delta \eta}
\,.\label{eq:EPeq-circ}
\end{equation}
\end{lemma}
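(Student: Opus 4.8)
The plan is to treat the two displayed equations as two forms of a single transport equation for the momentum density $\mathbf{P} := \delta\ell/\delta\mathbf{u}$, and to pass between them by multiplying \eqref{eq:EPeq-circ} through by the thickness $\eta$ (equivalently, dividing \eqref{eq:EPeq} by $\eta$) while invoking the thickness advection law $\partial_t\eta = -\nabla\cdot(\eta\bu)$ from \eqref{eq:EB-advectionLaws}. Since the operations are reversible (multiplication by the strictly positive $\eta$ and by its inverse), it suffices to verify one direction; I would start from \eqref{eq:EPeq-circ}.

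First I would multiply \eqref{eq:EPeq-circ} by $\eta$ and expand the three terms on the left by the Leibniz rule. The time-derivative term and the advective term each carry a derivative that falls on the factor $1/\eta$, giving
\begin{align*}
\eta\,\partial_t\!\left(\tfrac{1}{\eta}\mathbf{P}\right) &= \partial_t\mathbf{P} - \tfrac{\mathbf{P}}{\eta}\,\partial_t\eta, \\
\eta\,(\bu\cdot\nabla)\!\left(\tfrac{1}{\eta}\mathbf{P}\right) &= (\bu\cdot\nabla)\mathbf{P} - \tfrac{\mathbf{P}}{\eta}\,(\bu\cdot\nabla)\eta,
\end{align*}
whereas the third term $(\nabla\bu)\cdot(\tfrac{1}{\eta}\mathbf{P})$ is purely algebraic in $\mathbf{P}$, so the factor $\eta$ cancels outright and leaves $(\nabla\bu)\cdot\mathbf{P}$. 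Multiplying the right-hand side of \eqref{eq:EPeq-circ} by $\eta$ already produces $-\frac{\delta\ell}{\delta b}\nabla b + \eta\nabla\frac{\delta\ell}{\delta\eta}$, matching the right-hand side of \eqref{eq:EPeq} verbatim.

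The key step is then to collect the two leftover terms proportional to $\mathbf{P}/\eta$ and recognise in them the material derivative of $\eta$:
\[
-\frac{\mathbf{P}}{\eta}\Big(\partial_t\eta + (\bu\cdot\nabla)\eta\Big)
= -\frac{\mathbf{P}}{\eta}\Big(-\eta\,(\nabla\cdot\bu)\Big)
= \mathbf{P}\,(\nabla\cdot\bu),
\]
where I use the second law in \eqref{eq:EB-advectionLaws} in the form $\partial_t\eta + (\bu\cdot\nabla)\eta = -\eta(\nabla\cdot\bu)$, obtained by expanding $\nabla\cdot(\eta\bu)$. Substituting this back reproduces precisely the divergence term $\frac{\delta\ell}{\delta\mathbf{u}}(\nabla\cdot\bu)$ of \eqref{eq:EPeq}, completing the identification of the two forms.

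There is no genuine obstacle here; the computation is routine. The only point requiring care is the bookkeeping: one must remember that $(\nabla\bu)\cdot\mathbf{P}$ is an algebraic contraction of the matrix $\nabla\bu$ with the vector $\mathbf{P}$, so no derivative falls on $1/\eta$ in that term, and one must track signs so that the continuity equation converts the two surviving $\mathbf{P}/\eta$ contributions into exactly $+\mathbf{P}(\nabla\cdot\bu)$ rather than its negative.
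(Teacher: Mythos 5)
Your proposal is correct and is exactly the ``straightforward calculation using the second advection equation'' that the paper alludes to: multiplying through by $\eta$, applying the Leibniz rule to the transport terms, and converting the leftover $-\tfrac{\mathbf{P}}{\eta}\big(\partial_t\eta+(\bu\cdot\nabla)\eta\big)$ into $\mathbf{P}\,(\nabla\cdot\bu)$ via the continuity equation. Your remark that positivity of $\eta$ makes the step reversible, so that one direction suffices for the claimed equivalence, is a correct and welcome piece of bookkeeping that the paper leaves implicit.
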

\medskip

\noindent One of the main features of Theorem \ref{thm:EP} for fluid dynamics is that its Euler--Poincar\'e equations satisfy the following Kelvin circulation theorem.
\medskip

\begin{theorem}[Kelvin-Noether circulation] \label{thm:Kelvin}
For an arbitrary loop $c(t)$ which is advected by the velocity field $\bu$, the following dynamics holds for the circulation integral ${\mathcal{I}}$, given by
\begin{equation}\label{eq:EPeq-circ-thm}
{\mathcal{I}} := \oint_{c(\bu)}\frac{1}{\eta}\frac{\delta\ell}{\delta \mathbf{u}}\cdot d\mathbf{x} 
\,,\qquad
\frac{d\mathcal{I}}{d t} 
= -\oint_{c(\bu)} \left(\frac{1}{\eta}\frac{\delta\ell}{\delta b}\right)\nabla b\cdot d\mathbf{x}\,.
\end{equation}
\end{theorem}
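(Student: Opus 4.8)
The plan is to reduce the statement entirely to Lemma~\ref{lemma:circulationform}, which has already rewritten the Euler--Poincar\'e equation in the ``circulation form'' \eqref{eq:EPeq-circ}. Writing $\bfm := \tfrac{1}{\eta}\tfrac{\delta\ell}{\delta \mathbf{u}}$ for the circulation one-form density, the key observation is that the entire left-hand side of \eqref{eq:EPeq-circ},
\[
\frac{\partial}{\partial t}\bfm + (\bu\cdot\nabla)\bfm + (\nabla\bu)\cdot\bfm,
\]
is precisely the vector-calculus representation of $\big(\partial_t + \mathcal{L}_{\bu}\big)(\bfm\cdot d\mathbf{x})$, the advective (Lie) transport of the one-form $\bfm\cdot d\mathbf{x}$ along the velocity field $\bu$. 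So the whole problem is to recognise this structure and then apply the Kelvin transport theorem.

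First I would recall the transport theorem for a loop $c(t)=c(\bu)$ that is frozen into (advected by) the flow of $\bu$: for any smooth one-form $\alpha=\bfm\cdot d\mathbf{x}$,
\[
\frac{d}{dt}\oint_{c(\bu)}\alpha = \oint_{c(\bu)}\Big(\frac{\partial}{\partial t}+\mathcal{L}_{\bu}\Big)\alpha.
\]
Then I would verify the component identity $\mathcal{L}_{\bu}(\bfm\cdot d\mathbf{x}) = \big((\bu\cdot\nabla)\bfm + (\nabla\bu)\cdot\bfm\big)\cdot d\mathbf{x}$, where $(\nabla\bu)\cdot\bfm$ denotes the vector with $i$-th component $m_j\,\partial_i u^j$; this is the standard expression $(\mathcal{L}_{\bu}\alpha)_i = u^j\partial_j m_i + m_j\partial_i u^j$ written in vector form. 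Combining these two facts identifies the left-hand side of \eqref{eq:EPeq-circ}, integrated around $c(\bu)$, with $\tfrac{d}{dt}\mathcal{I}$, where $\mathcal{I}$ is the circulation integral in \eqref{eq:EPeq-circ-thm}.

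Finally I would substitute the right-hand side of \eqref{eq:EPeq-circ} under the loop integral,
\[
\frac{d\mathcal{I}}{dt}
= \oint_{c(\bu)}\Big(-\frac{1}{\eta}\frac{\delta\ell}{\delta b}\nabla b + \nabla\frac{\delta\ell}{\delta\eta}\Big)\cdot d\mathbf{x},
\]
and observe that $\nabla\tfrac{\delta\ell}{\delta\eta}$ is an exact differential, so its integral around the closed loop $c(\bu)$ vanishes; this leaves exactly \eqref{eq:EPeq-circ-thm}. The only genuine work, and the step most prone to sign and index errors, is the component-level verification that the triple of terms on the left of \eqref{eq:EPeq-circ} really is $\partial_t$ plus the Lie derivative $\mathcal{L}_{\bu}(\bfm\cdot d\mathbf{x})$. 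Once the geometric reading of \eqref{eq:EPeq-circ} as $(\partial_t+\mathcal{L}_{\bu})(\bfm\cdot d\mathbf{x}) = (\text{force})\cdot d\mathbf{x}$ is secured, the circulation theorem is immediate from the transport theorem and the vanishing of the closed-loop integral of a gradient.
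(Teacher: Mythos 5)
Your proposal is correct and takes essentially the same route as the paper: the pull-back-to-initial-position computation in \eqref{proof:Kelvin-det} is precisely the loop transport theorem you invoke, producing the operator $\partial_t + \bu\cdot\nabla + (\nabla\bu)\,\cdot$ that you identify with $\partial_t + \mathcal{L}_{\bu}$ acting on the one-form $\tfrac{1}{\eta}\tfrac{\delta\ell}{\delta\mathbf{u}}\cdot d\mathbf{x}$. Both arguments then substitute the circulation form \eqref{eq:EPeq-circ} of the Euler--Poincar\'e equation (Lemma \ref{lemma:circulationform}) and kill the exact term $\nabla\tfrac{\delta\ell}{\delta\eta}$ on the closed loop by the fundamental theorem of calculus.
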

\begin{remark}\label{rem:kelvin}
The notation $c(\bu)$ indicates that the material loop $c$ is transported by the flow $\phi_t$ which is generated by the vector field $u:=\bu\cdot\nabla$. To be precise, $c(\bu) = \phi_{t*}c(0)$, where $\phi_{t*}$ is the pull-back by the inverse of the flow $\phi_t$, also known as the push-forward by $\phi_t$.
\end{remark}
\begin{proof}
The Kelvin circulation law \eqref{eq:EPeq-circ-thm} follows from Newton's law of motion obtained from the  Euler--Poincar\'e equation \eqref {eq:EPeq-circ} for the evolution of momentum per unit mass $\eta^{-1}\delta\ell/\delta \mathbf{u}$ concentrated on an advecting material loop, $c(t) = \phi_t c(0)$, where $\phi_t$ is the flow map which is generated by the vector field $\bu$. Upon changing variables by pulling back the integrand to its initial position, the time derivative can be moved inside the integral and the product rule may be applied. Then, by inverting the pull-back we obtain the following
\begin{equation}
\begin{aligned}
 \frac{d}{d t}\oint_{c(\bu)}\frac{1}{\eta}\frac{\delta \ell}{\delta \mathbf{u}}\cdot d\mathbf{x} 
&= \oint_{c(\bu)} \left(\frac{\partial}{\partial t} + \bu\cdot\nabla 
+ (\nabla \bu)\cdot\right)\left(\frac{1}{\eta}\frac{\delta \ell}{\delta \mathbf{u}}\right)\cdot d\mathbf{x}
\\
&= 
-\oint_{c(\bu)} \frac{1}{\eta}\frac{\delta \ell}{\delta b}\nabla b \cdot d\mathbf{x} 
+ \oint_{c(\bu)}\nabla \frac{\delta\ell}{\delta \eta} \cdot d\mathbf{x}
\\
&=  
-\oint_{c(\bu)} \left(\frac{1}{\eta}\frac{\delta \ell}{\delta b}\right)\nabla b \cdot d\mathbf{x}\,.
\end{aligned}
\label{proof:Kelvin-det}
\end{equation}
In the second line, we have used the Euler--Poincar\'e equation \eqref{eq:EPeq} and the advection equation for the density. The last step applies the fundamental theorem of calculus to prove vanishing of the last loop integral in the second line. For the corresponding proof in the general case, see \cite{holm1998euler}.
\end{proof}
\medskip

\begin{corollary}\label{KN-Stokes} 
By Stokes Law, equation  \eqref{eq:circLaw} in the Kelvin--Noether circulation theorem \ref{thm:Kelvin} implies 
\begin{equation}\label{eq:circLaw}
\frac{d\mathcal{I}}{d t}  = 
-\int\!\!\int_{\partial S = c(\bu)} 
\zh\cdot\nabla\left(\frac{1}{\eta}\frac{\delta\ell}{\delta b}\right)\times\nabla b\, dx\,dy\,.
\end{equation}
Therefore, circulation is created by misalignment of the gradients of buoyancy $b$ and its dual quantity $\eta^{-1}\delta\ell/\delta b$.
\end{corollary}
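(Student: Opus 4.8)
The plan is to recognise the right-hand side of the Kelvin--Noether theorem \ref{thm:Kelvin} as the circulation of a single planar vector field around the material loop $c(\bu)$ and then convert it into a surface integral by Stokes' theorem. First I would read the loop integrand as the vector field $\big(\tfrac{1}{\eta}\tfrac{\delta\ell}{\delta b}\big)\nabla b$, so that
\[
\frac{d\mathcal{I}}{dt} = -\oint_{c(\bu)}\Big(\frac{1}{\eta}\frac{\delta\ell}{\delta b}\Big)\nabla b\cdot d\mathbf{x}
\]
is precisely the circulation of this field around $c(\bu)$. Since $c(\bu)$ is a closed material loop in the planar domain $\cal D$, I would choose a surface $S$ with $\partial S = c(\bu)$ and invoke Stokes' theorem to rewrite the loop integral as the flux through $S$ of the curl of $\big(\tfrac{1}{\eta}\tfrac{\delta\ell}{\delta b}\big)\nabla b$, paired against the unit normal $\zh$.

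The key computation is then the curl of a scalar multiple of a gradient. By the product rule for the curl,
\[
\nabla\times\Big(\Big(\frac{1}{\eta}\frac{\delta\ell}{\delta b}\Big)\nabla b\Big)
= \nabla\Big(\frac{1}{\eta}\frac{\delta\ell}{\delta b}\Big)\times\nabla b
+ \Big(\frac{1}{\eta}\frac{\delta\ell}{\delta b}\Big)\,\nabla\times\nabla b \,,
\]
and the last term vanishes identically because the curl of any gradient is zero. Pairing the surviving term with $\zh$ and substituting into the output of Stokes' theorem produces exactly
\[
\frac{d\mathcal{I}}{dt} = -\int\!\!\int_{\partial S = c(\bu)}\zh\cdot\nabla\Big(\frac{1}{\eta}\frac{\delta\ell}{\delta b}\Big)\times\nabla b\,dx\,dy \,,
\]
which is the asserted identity. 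The physical reading that follows at once---that circulation is generated precisely when $\nabla b$ and $\nabla(\eta^{-1}\delta\ell/\delta b)$ are misaligned---is immediate, since the cross product of two gradients vanishes if and only if they are parallel.

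I do not expect a genuine obstacle here, as the argument is a direct application of Stokes' theorem together with the vanishing of the curl of a gradient. The only points requiring a word of care are that the material loop $c(\bu)$ should bound a surface $S\subset\cal D$ (which holds for contractible loops and which I would take as part of the hypotheses, noting that the flow $\phi_t$ carries bounding surfaces to bounding surfaces, so the property is preserved in time), and that the scalar $\eta^{-1}\delta\ell/\delta b$ and the buoyancy $b$ are smooth enough for both the product-rule identity and Stokes' theorem to apply---both of which are guaranteed by the smoothness assumptions carried throughout the Euler--Poincar\'e framework.
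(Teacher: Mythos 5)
Your proposal is correct and is essentially the paper's argument: the corollary is proved there by the same one-line invocation of Stokes' theorem, with your product-rule computation $\nabla\times\big(a\,\nabla b\big)=\nabla a\times\nabla b$ (using $\nabla\times\nabla b=0$) being exactly the step the paper leaves implicit. Your added remarks on the loop bounding a surface and on smoothness are sound but not points the paper dwells on.
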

\medskip

\noindent The thermal rotating shallow water equations \eqref{trsw-eqns} in a planar domain $\cal{D}$ are obtained by applying the Euler--Poincar\'e theorem \ref{thm:EP} to the Lagrangian
\begin{equation}
\begin{aligned}
\ell_{TRSW}(\mathbf{u},\eta,b)
&=
\int_{\cal D} \frac{1}{2} \eta|\bu|^2 + \frac{1}{{\rm Ro}}\eta\bu\cdot\bR(\bx) - \frac{1}{2\,{\rm Fr}^2} (1+\mathfrak{s}b)(\eta^2-2\eta h) \,dx\,dy\,.
\end{aligned}
\label{lag:TRSW}
\end{equation}
This is easily shown once we have computed the variational derivatives of the Lagrangian, since these derivatives can simply be substituted into \eqref{eq:EPeq}, or into one of the equivalent formulations \eqref{eq:EPeq-Cartan2D} or \eqref{eq:EPeq-Cartan}. These variational derivatives are obtained by using definition \ref{def:variationalderivatives} to find,
\begin{equation}
\begin{aligned}
\frac{\delta\ell_{TRSW}}{\delta\bu}  &= \eta\bu + \eta \bR,\\
\frac{\delta\ell_{TRSW}}{\delta\eta} &= \frac{1}{2}|\bu|^2 + \frac{1}{{\rm Ro}}\bu\cdot \bR - \frac{\alpha}{{\rm Fr}^2}(1+\mathfrak{s}b)\zeta,\\
\frac{\delta\ell_{TRSW}}{\delta b} &= -\frac{\mathfrak{s}}{2\,{\rm Fr}^2}(\eta^2 - 2\eta h).
\end{aligned}
\label{eq:trsw-varderivatives}
\end{equation}
With these variations, we obtain \eqref{trsw-eqns} and by means of theorem \ref{thm:Kelvin}, we see that the TRSW equations satisfy the following Kelvin circulation theorem.
\medskip

\begin{theorem}[Kelvin theorem for deterministic TRSW]\label{Kel-thmTRSW}
The deterministic TRSW equations \eqref{trsw-eqns} imply the following Kelvin circulation law
\begin{equation}
\frac{d}{dt}\oint_{c(u)}\left(\bu + \frac{1}{{\rm Ro}}\bR\right) \cdot d\bx 
= \frac{\mathfrak{s}}{2\,{\rm Fr}^2}  \oint_{c(u)} (\alpha\zeta-h) \nabla b \cdot d\bx
= \frac{\mathfrak{s}}{2\,{\rm Fr}^2}  \int\!\!\int_{\partial S = c(u)}
\zh\cdot\nabla(\alpha\zeta-h) \times\nabla b\, dx\,dy
\,,
\label{Kel-trsw}   
\end{equation}
where $c(u)$ is any closed loop moving with horizontal fluid velocity $\bu(\bx,t)$ in two dimensions. 
\end{theorem}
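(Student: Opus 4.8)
The plan is to obtain this result as a direct specialization of the general Kelvin--Noether circulation theorem (Theorem \ref{thm:Kelvin}) to the TRSW Lagrangian \eqref{lag:TRSW}, feeding in the three variational derivatives already recorded in \eqref{eq:trsw-varderivatives}. Since Theorem \ref{thm:Kelvin} already discharges the only nontrivial analytic content---namely that the material loop is transported by the flow of $\bu$, that the time derivative may be pulled inside the advected loop integral, and that the exact-differential contribution $\oint_{c(\bu)}\nabla(\delta\ell/\delta\eta)\cdot d\bx$ vanishes by the fundamental theorem of calculus---what remains is essentially substitution and algebraic simplification.

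First I would identify the circulation integrand. The momentum one-form per unit mass is $\tfrac{1}{\eta}\,\delta\ell_{TRSW}/\delta\bu$; differentiating the velocity-dependent terms of \eqref{lag:TRSW} gives $\delta\ell_{TRSW}/\delta\bu = \eta\bu + \tfrac{1}{{\rm Ro}}\eta\bR$, so that $\tfrac{1}{\eta}\,\delta\ell_{TRSW}/\delta\bu = \bu + \tfrac{1}{{\rm Ro}}\bR$, which is exactly the integrand appearing on the left-hand side of \eqref{Kel-trsw}. Next I would compute the buoyancy forcing on the right-hand side of Theorem \ref{thm:Kelvin}. From \eqref{eq:trsw-varderivatives}, $\delta\ell_{TRSW}/\delta b = -\tfrac{\mathfrak{s}}{2\,{\rm Fr}^2}(\eta^2 - 2\eta h)$, hence $\tfrac{1}{\eta}\,\delta\ell_{TRSW}/\delta b = -\tfrac{\mathfrak{s}}{2\,{\rm Fr}^2}(\eta - 2h)$ and therefore $-\tfrac{1}{\eta}\,\delta\ell_{TRSW}/\delta b\,\nabla b = \tfrac{\mathfrak{s}}{2\,{\rm Fr}^2}(\eta - 2h)\nabla b$. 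Using the definition of the free-surface elevation, $\alpha\zeta = \eta - h$, I would rewrite $\eta - 2h = \alpha\zeta - h$, which converts this forcing into $\tfrac{\mathfrak{s}}{2\,{\rm Fr}^2}(\alpha\zeta - h)\nabla b$. Substituting the two computed expressions into the statement of Theorem \ref{thm:Kelvin} yields the first equality of \eqref{Kel-trsw}.

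For the second equality I would invoke Corollary \ref{KN-Stokes} (equivalently, apply Stokes' theorem directly to the loop integral \eqref{eq:circLaw}). Because $\nabla b$ is itself a gradient, for any scalar $g$ one has $\nabla\times(g\,\nabla b) = \nabla g\times\nabla b$, the term $g\,\nabla\times\nabla b$ vanishing identically. Taking $g = \tfrac{\mathfrak{s}}{2\,{\rm Fr}^2}(\alpha\zeta - h)$ and using that $\tfrac{\mathfrak{s}}{2\,{\rm Fr}^2}$ is constant, the surface integrand becomes $\tfrac{\mathfrak{s}}{2\,{\rm Fr}^2}\,\zh\cdot\nabla(\alpha\zeta - h)\times\nabla b$, which is precisely the area-integral form on the right of \eqref{Kel-trsw}.

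I expect no serious obstacle here; the work is entirely routine bookkeeping. The only points requiring care are tracking the sign and the factor $\tfrac{\mathfrak{s}}{2\,{\rm Fr}^2}$ through the division by $\eta$, and correctly applying the algebraic identity $\eta - 2h = \alpha\zeta - h$ that ties the depth $\eta$, the mean depth $h$, and the elevation $\alpha\zeta$ together. It is also worth remarking, as a sanity check, that the Coriolis contribution $\tfrac{1}{{\rm Ro}}\bR$ enters only the circulation integrand and never the right-hand side, since $\ell_{TRSW}$ depends on $\bR(\bx)$ solely through the velocity-coupling term and $\bR$ is both time-independent and independent of $b$; consequently the creation of circulation is driven purely by the misalignment of $\nabla(\alpha\zeta - h)$ with $\nabla b$, in accordance with Corollary \ref{KN-Stokes}.
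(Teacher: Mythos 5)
Your proposal is correct and follows exactly the paper's route: the paper's proof is the one-line appeal to the Kelvin--Noether theorem \ref{thm:Kelvin}, and you have simply filled in the routine substitutions of the variational derivatives \eqref{eq:trsw-varderivatives}, the identity $\eta - 2h = \alpha\zeta - h$, and Stokes' theorem via Corollary \ref{KN-Stokes}. As a minor bonus, you correctly carry the factor $\tfrac{1}{{\rm Ro}}$ in $\delta\ell_{TRSW}/\delta\bu = \eta\bu + \tfrac{1}{{\rm Ro}}\eta\bR$, which is typographically dropped in the paper's displayed formula \eqref{eq:trsw-varderivatives}.
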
\medskip

\begin{proof}
This result follows from the Kelvin--Noether theorem  \ref{thm:Kelvin} for Euler--Poincar\'e fluid equations. 
\end{proof}

\begin{remark}\label{rem:keltrsw}
Equation \eqref{Kel-trsw} implies that misalignment between the horizontal gradients of either the free surface elevation $\alpha\zeta$, or the bathymetry $h$ with the buoyancy $b$ will generate circulation in a horizontal plane. When the free surface elevation is negligible, that is when the wave amplitude $\alpha\ll 1$, circulation is still being generated due to gradients of bathymetry misaligning with gradients of buoyancy. When the contribution of buoyancy is negligible, that is when the stratification parameter $\mathfrak{s}\ll 1$, circulation is conserved. In the sections to come, we will use theorem \ref{Kel-trsw} to interpret the properties of the approximate equations we will derive.
\end{remark}\medskip

\begin{corollary}[Circulation on the boundary]\label{Kel-corollary}
The circulation $\oint_{\,\Gamma_k}\left(\bu + \frac{1}{{\rm Ro}}\bR\right) \cdot d\bx$ on each connected component of the boundary $\Gamma_k\in \partial\cal{D}$ is conserved by the deterministic TRSW equations.
\end{corollary}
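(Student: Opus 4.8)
The plan is to recognise each connected boundary component $\Gamma_k$ as an admissible advected loop and then to invoke the Kelvin--Noether law for TRSW in Theorem \ref{Kel-thmTRSW}, showing that the boundary conditions \eqref{trsw-bdy} annihilate the circulation source on the right-hand side of \eqref{Kel-trsw}.

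First I would argue that $\Gamma_k$ is a material curve for the flow map $\phi_t$ generated by $\bu$. The boundary condition $\mathbf{\hat{n}}\cdot\bu = 0$ in \eqref{trsw-bdy} states that $\bu$ is everywhere tangent to $\partial\mathcal{D}$, so no fluid particle crosses the boundary; hence $\phi_t$ maps $\partial\mathcal{D}$ diffeomorphically onto itself. Since $\phi_t$ is a homeomorphism and $\Gamma_k$ is a connected component of $\partial\mathcal{D}$, its image $\phi_t(\Gamma_k)$ is again a connected component, and by continuity in $t$ (with $\phi_0 = \mathrm{id}$) it must remain $\Gamma_k$ itself. Therefore $\Gamma_k$ is a legitimate choice of advecting loop $c(u)$ in Theorem \ref{Kel-thmTRSW}.

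With $c(u) = \Gamma_k$, equation \eqref{Kel-trsw} gives $\frac{d}{dt}\oint_{\Gamma_k}\big(\bu + \frac{1}{{\rm Ro}}\bR\big)\cdot d\bx = \frac{\mathfrak{s}}{2\,{\rm Fr}^2}\oint_{\Gamma_k}(\alpha\zeta - h)\nabla b\cdot d\bx$, so it remains to show the loop integral on the right vanishes. The second boundary condition $\mathbf{\hat{n}}\times\nabla b = 0$ in \eqref{trsw-bdy} forces $\nabla b$ to be parallel to the outward normal $\mathbf{\hat{n}}$ along $\partial\mathcal{D}$, i.e. $\nabla b$ has no tangential component there, equivalently $b$ is constant on each $\Gamma_k$. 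Parametrising $\Gamma_k$ by arclength with unit tangent $\hat{\mathbf{t}}$, we have $\nabla b\cdot d\bx = (\nabla b\cdot\hat{\mathbf{t}})\,ds = 0$ pointwise along $\Gamma_k$, so the whole right-hand side vanishes regardless of the factor $(\alpha\zeta - h)$, and the stated conservation follows.

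The only genuinely nontrivial step is the first one: confirming that a single connected component $\Gamma_k$ --- rather than merely the full boundary $\partial\mathcal{D}$ --- is carried into itself, so that Theorem \ref{Kel-thmTRSW} may be applied componentwise. This is where the topological input (connectedness together with continuity of the flow) is needed; the remaining algebra is immediate from the boundary conditions \eqref{trsw-bdy}.
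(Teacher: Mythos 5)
Your proof is correct and follows essentially the same route as the paper: the tangency condition $\mathbf{\hat{n}}\cdot\bu=0$ makes a boundary circuit an admissible advected loop for Theorem \ref{Kel-thmTRSW}, and the condition $\mathbf{\hat{n}}\times\nabla b=0$ kills the source term since $\nabla b\cdot d\bx=0$ along $\partial\mathcal{D}$. Your only addition is the explicit topological remark that continuity of the flow keeps each \emph{connected component} $\Gamma_k$ invariant (not merely the full boundary), a point the paper leaves implicit; this is a welcome refinement but not a different argument.
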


\begin{proof}
Preservation of circulation on each connected component of the boundary follows from the boundary conditions in \eqref{trsw-bdy} and Kelvin's theorem for TRSW in \eqref{Kel-trsw}. The first boundary condition in \eqref{trsw-bdy} implies that the velocity is tangent to the boundary. Hence, a circuit $c(\bu)$ on the boundary remains on the boundary. Consequently, Kelvin's theorem for TRSW in \eqref{Kel-trsw} applies to a boundary circuit.
The second boundary condition in \eqref{trsw-bdy} implies that $\nabla b \cdot d\bx = 0$ on the boundary $\partial{\cal D}$. Hence, the right-hand side of \eqref{Kel-trsw} vanishes for a circuit $c(\bu)$ on the boundary and the circulation $\oint_{\,\Gamma_k}\left(\bu + \frac{1}{{\rm Ro}}\bR\right) \cdot d\bx$ is conserved on the boundary.  
\end{proof}
\medskip

\noindent  
The potential vorticity for the thermal rotating shallow water equations \eqref{trsw-eqns} is defined as
\begin{equation}
q := \frac1\eta \zh\cdot\nabla\times\Big(\bu + \frac{1}{{\rm Ro}}\bR\Big) \,.
\label{def:pvtrsw}
\end{equation}
Even though this is the same definition of potential vorticity as for the rotating shallow water equations, in thermal rotating shallow water, the potential vorticity is not conserved along Lagrangian paths. Rather, the potential vorticity satisfies the following equation
\begin{equation}
\frac{\partial}{\partial t}q + \bu\cdot\nabla q = \frac{\mathfrak{s}}{2\,{\rm Fr}^2\eta}\zh\cdot \nabla(\alpha\zeta-h)\times\nabla b\,.
\label{eq:pvtrsw}
\end{equation}
Not unexpectedly, the mechanism responsible for the generation of circulation in the Kelvin circulation theorem \ref{Kel-thmTRSW} is also rate of creation of potential vorticity, $q$, along fluid particle trajectories in equation \eqref{eq:pvtrsw}. When the horizontal buoyancy gradients are negligible, that is $\mathfrak{s}\ll 1$, potential vorticity is conserved along Lagrangian paths.

\paragraph{Conservation laws for deterministic TRSW.} The deterministic TRSW equations \eqref{trsw-eqns} conserve the energy
\begin{equation}
\mathcal E_{TRSW}(\bu,\eta,b) =  \frac12 \int_{\cal D} \eta |\bu|^2 + \frac{1}{{\rm Fr}^2}(1+\mathfrak{s}b)(\eta^2-2\eta h)\,dx\,dy\,.
\label{eq:erg-trsw}
\end{equation}
The conservation of energy \eqref{eq:erg-trsw} can be proved directly by using the TRSW equations \eqref{trsw-eqns} and the boundary conditions \eqref{trsw-bdy}. The TRSW equations \eqref{trsw-eqns} also conserve an infinity of integral conservation laws, determined by two arbitrary differentiable functions of buoyancy $\Phi(b)$ and $\Psi(b)$ as
\begin{equation}
C_{\Phi,\Psi} = \int_{\cal D} \eta\Phi(b) + \varpi\Psi(b) \,dx\,dy = \int_{\cal D} \Big( \Phi(b) + q \Psi(b)\Big)\eta\,dx\,dy\,,
\label{eq:casimirstrsw}
\end{equation}
where $\varpi = \eta q= \zh\cdot\nabla\times(\bu+\frac{1}{{\rm Ro}}\bR)$ is the total vorticity. That is, for any choice of differentiable $\Phi$ and $\Psi$, the quantity $C_{\Phi,\Psi}$ is conserved in time. The conservation of \eqref{eq:casimirsstrsw} can also be proved as a direct calculation using equations \eqref{trsw-eqns} and the boundary conditions \eqref{trsw-bdy}. 

\paragraph{Noether's theorem.} Conservation of the integral quantities in equations \eqref{eq:erg-trsw} and \eqref{eq:casimirstrsw} is associated by Noether's theorem with smooth transformations which leave invariant the Eulerian fluid quantities in the Lagrangian \cite{abraham1978foundations}. For example, conservation of energy \eqref{eq:erg-trsw} arises from invariance of the Lagrangian in \eqref{lag:TRSW} under translations in time; since this Lagrangian does not depend explicitly on time. Likewise, the conserved quantities in \eqref{eq:casimirstrsw} are associated by Noether's theorem with the smooth flows which translate the fluid parcels along steady solutions of the equations of motion; since, of course these transformations preserve the Eulerian fluid variables in the Lagrangian \cite{holm1985nonlinear}. Upon introducing stochasticity via the Euler--Poincar\'e theorem in the next section, the latter transformations and their Noether conservation laws will persist. However, energy conservation will not persist because the stochastic Lagrangian will depend explicitly on time through the Brownian noise. The geometrical significance of the conservation laws in equation \eqref{eq:casimirstrsw} which persist for stochastic TRSW will be discussed further in remark \ref{remark:Casimirs}. 

\subsection{TRSW with stochastic advection by Lie transport (SALT)}\label{STRSW-sec}
By modifying the fluid transport vector field in the Euler--Poincar\'e theorem \ref{thm:EP}, one can derive the stochastic equations of motion which preserve the geometric properties of their deterministic counterparts. Following \cite{holm2015variational}, we introduce the stochastic vector field for fluid transport  in semimartingale form
\begin{equation}
{\sf d}\boldsymbol \chi_{t} := \mathbf{u}(\mathbf{x},t)dt + \sum_{k=1}^M \boldsymbol \xi_k(\mathbf{x})\circ dW_t^k\,,
\label{def:chi}
\end{equation}
where $\bx\in{\cal D}\subset\mathbb{R}^2$. The time-independent vector fields $\boldsymbol \xi_k$, with $k=1,2,\dots,M$, represent spatially correlated sources of temporal uncertainty. The vector fields $\boldsymbol \xi_k$ must satisfy the same boundary condition as $\bu$. The circle notation $\circ$ means that the stochastic integral is to be understood in the Stratonovich sense. Stratonovich calculus possesses the ordinary chain rule and product rule, which are crucial for defining the variational derivative. These properties are written in integral form, though, because stochastic equations are not differentiable with respect to time. The sources of the stochasticity are the independent, identically distributed Brownian motions $W_t^k$ associated to each $\boldsymbol \xi_k$. The Brownian motions are defined with respect to the standard probability space, see \cite{ito_1984}.
One may regard the $\boldsymbol \xi_k(\mathbf{x})$ as eigenvectors of the velocity-velocity correlation tensor. In practice, the eigenvectors $\boldsymbol \xi_k(\mathbf{x})$ are expected to be obtained using the SALT algorithm developed in \cite{cotter2018modelling,cotter2019numerically}. This algorithm is based on empirical orthogonal function analysis, and the number $M$ of $\boldsymbol \xi_k$ needed in \eqref{def:chi} would be decided by how much of the variance is required to be represented. \medskip

\noindent
The SALT version of Theorem \ref{thm:EP} may be stated, as follows. 
\begin{theorem}[Stochastic Euler--Poincar\'e equations \cite{holm2015variational,de2020implications}]
\label{thm:SEP}$\,$\\
The following two statements are equivalent: 
\begin{enumerate}[i)]
\item The stochastic Hamilton's variational principle in Eulerian coordinates, with $\mathbf{u}\in\mathfrak{X}(\cal D)$ and $b, \eta\in V^*(\cal D)$,
\begin{equation}
\delta S := \delta\int_{t_1}^{t_2}\ell(\mathbf{u},b,\eta) \,dt= 0,
\label{eq:shvp}
\end{equation}
holds on $\mathfrak{X}({\cal D})\times V^*$, using variations of the form
\begin{equation}
\delta \mathbf{u} \,dt = {\sf d}\mathbf{v} - [{\sf d}\boldsymbol \chi_t, \mathbf{v}], \qquad \delta b\, dt = -(\mathbf{v}\cdot\nabla) b \,dt \,, \qquad \delta \eta\, dt = -\nabla\cdot(\eta\mathbf{v})\, dt\,,
\end{equation}
where the vector field $\mathbf{v}\in\mathfrak{X}(\cal D)$ is arbitrary and vanishes on the endpoints $t_1$ and $t_2$ and the semimartingale vector field ${\sf d}\boldsymbol \chi_t$ is defined in \eqref{def:chi}.
\item The stochastic Euler--Poincar\'e equations hold. These equations are
\begin{equation}
{\sf d}\frac{\delta \ell}{\delta \mathbf{u}} 
+ ({\sf d}\boldsymbol \chi_t\cdot\nabla)\frac{\delta\ell}{\delta \mathbf{u}} 
+ (\nabla {\sf d}\boldsymbol \chi_t)\cdot\frac{\delta\ell}{\delta \mathbf{u}} 
+ \frac{\delta\ell}{\delta \mathbf{u}} (\nabla\cdot {\sf d}\boldsymbol \chi_t)
= -\frac{\delta \ell}{\delta b}\nabla b\,dt + \eta\nabla\frac{\delta\ell}{\delta \eta}\,dt
\label{eq:SEPeq}
\end{equation}
or, equivalently, either in two dimensional vector calculus notation,
\begin{equation}
{\sf d} \frac{\delta \ell}{\delta \mathbf{u}} 
+ \left(\nabla^\perp \cdot  \frac{\delta\ell}{\delta \mathbf{u}} \right) {\sf d}\boldsymbol \chi_t^\perp
+ \nabla \left({\sf d}\boldsymbol \chi_t \cdot\frac{\delta\ell}{\delta \mathbf{u}} \right)
+ \frac{\delta\ell}{\delta \mathbf{u}} (\nabla\cdot{\sf d}\boldsymbol \chi_t)
= -\frac{\delta \ell}{\delta b}\nabla b\,dt + \eta\nabla\frac{\delta\ell}{\delta \eta}\,dt
\,,\label{eq:SEPeq-Cartan2D}
\end{equation}
or as an embedding in three dimensional space,
\begin{equation}
{\sf d} \frac{\delta \ell}{\delta \mathbf{u}} 
- {\sf d}\boldsymbol \chi_t \times \left(\nabla \times  \frac{\delta\ell}{\delta \mathbf{u}} \right) 
+ \nabla \left({\sf d}\boldsymbol \chi_t \cdot\frac{\delta\ell}{\delta \mathbf{u}} \right)
+ \frac{\delta\ell}{\delta \mathbf{u}} (\nabla\cdot{\sf d}\boldsymbol \chi_t)
= -\frac{\delta \ell}{\delta b}\nabla b\,dt + \eta\nabla\frac{\delta\ell}{\delta \eta}\,dt
\,,\label{eq:SEPeq-Cartan}
\end{equation}
with advection equations 
\begin{align}
{\sf d} b = -\, {\sf d}\boldsymbol \chi_t \cdot \nabla b
\quad\hbox{and}\quad
{\sf d} \eta = -\, \nabla\cdot(\eta\,{\sf d}\boldsymbol \chi_t)
\,.
\label{eq:SEB-advectionLaws}
\end{align}
\end{enumerate}
\end{theorem}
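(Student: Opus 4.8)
The plan is to run the proof of the deterministic Euler--Poincar\'e Theorem \ref{thm:EP} essentially verbatim, replacing the drift transport $\bu\,dt$ by the full semimartingale vector field ${\sf d}\boldsymbol \chi_t$ of \eqref{def:chi} everywhere it appears, and leaning on the fact stressed after \eqref{def:chi} that Stratonovich calculus obeys the ordinary product and chain rules. Concretely, I would begin from the stochastic Hamilton's principle \eqref{eq:shvp}, carry the variation inside the time integral, and substitute the constrained variations $\delta\bu\,dt = {\sf d}\mathbf{v} - [{\sf d}\boldsymbol \chi_t,\mathbf{v}]$, $\delta b\,dt = -(\mathbf{v}\cdot\nabla)b\,dt$, and $\delta\eta\,dt = -\nabla\cdot(\eta\mathbf{v})\,dt$, obtaining
\[
0 = \int_{t_1}^{t_2}\left[ \left\langle \frac{\delta\ell}{\delta\bu},\, {\sf d}\mathbf{v} - [{\sf d}\boldsymbol \chi_t,\mathbf{v}]\right\rangle + \left\langle \frac{\delta\ell}{\delta b},\, -(\mathbf{v}\cdot\nabla)b\right\rangle dt + \left\langle \frac{\delta\ell}{\delta\eta},\, -\nabla\cdot(\eta\mathbf{v})\right\rangle dt \right].
\]

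The three manipulations that follow are algebraic repetitions of the deterministic computation. First, for the time term I would apply the Stratonovich Leibniz rule ${\sf d}\langle \delta\ell/\delta\bu,\mathbf{v}\rangle = \langle {\sf d}(\delta\ell/\delta\bu),\mathbf{v}\rangle + \langle \delta\ell/\delta\bu,{\sf d}\mathbf{v}\rangle$ (which holds with no It\^o correction precisely because the integrals are Stratonovich), integrate from $t_1$ to $t_2$, and discard the endpoint term since $\mathbf{v}$ vanishes there; this transfers the stochastic differential onto the momentum $\delta\ell/\delta\bu$. Second, the bracket pairing $\langle \delta\ell/\delta\bu, -[{\sf d}\boldsymbol \chi_t,\mathbf{v}]\rangle$ is handled by the same spatial integration by parts (coadjoint / Lie-derivative identity) used deterministically, now with transport field ${\sf d}\boldsymbol \chi_t$ in place of $\bu$, producing exactly the transport terms $({\sf d}\boldsymbol \chi_t\cdot\nabla)\frac{\delta\ell}{\delta\bu} + (\nabla {\sf d}\boldsymbol \chi_t)\cdot\frac{\delta\ell}{\delta\bu} + \frac{\delta\ell}{\delta\bu}(\nabla\cdot{\sf d}\boldsymbol \chi_t)$ appearing in \eqref{eq:SEPeq}. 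Third, the $b$ and $\eta$ pairings integrate by parts in space to give $-\frac{\delta\ell}{\delta b}\nabla b\,dt$ and $\eta\nabla\frac{\delta\ell}{\delta\eta}\,dt$ against $\mathbf{v}$. Collecting the coefficient of the arbitrary $\mathbf{v}$ and invoking the fundamental lemma of the calculus of variations yields \eqref{eq:SEPeq}; the equivalent Cartan forms \eqref{eq:SEPeq-Cartan2D} and \eqref{eq:SEPeq-Cartan} then follow from the identities $\bu^\perp = \zh\times\bu$, $\nabla^\perp = \zh\times\nabla$ exactly as in Remark \ref{rem:perpnotation}. The advection laws \eqref{eq:SEB-advectionLaws} are not outputs of the variational principle but the defining stochastic advection Ansatz for the advected quantities $b$ and $\eta$, namely the same relations that fix $\delta b$ and $\delta\eta$ with the variational field $\mathbf{v}$ replaced by the transport field ${\sf d}\boldsymbol \chi_t$.

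I expect the only genuinely new obstacle to be the rigorous justification of the time integration by parts in the stochastic setting: namely, that $\delta$ commutes with ${\sf d}$ and that ${\sf d}\langle \delta\ell/\delta\bu,\mathbf{v}\rangle$ obeys the ordinary Leibniz rule, so that the It\^o correction terms which would appear in an It\^o formulation are genuinely absent. This is the single place where the semimartingale nature of ${\sf d}\boldsymbol \chi_t$ enters in an essential way; all the spatial integrations by parts (the coadjoint identity and the $b,\eta$ manipulations) are pathwise-in-$\omega$ identities identical to the deterministic ones. One must also verify that ${\sf d}\boldsymbol \chi_t$ inherits the boundary conditions imposed on $\bu$ — guaranteed by the stipulation after \eqref{def:chi} that each $\boldsymbol \xi_k$ satisfies the same boundary condition as $\bu$ — so that every spatial boundary term generated by integration by parts vanishes exactly as before. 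Granting the Stratonovich Leibniz rule, no new stochastic estimates are required and the theorem follows by substitution into the deterministic template.
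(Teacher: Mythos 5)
Your proposal is correct and follows exactly the route the paper intends: the paper gives no in-text proof of Theorem \ref{thm:SEP}, deferring instead to \cite{holm2015variational,de2020implications}, whose derivation is precisely your scheme --- the deterministic Euler--Poincar\'e computation of Theorem \ref{thm:EP} repeated with $\bu\,dt$ replaced by the semimartingale ${\sf d}\boldsymbol{\chi}_t$, with the ordinary product and chain rules justified by the Stratonovich convention, and with the advection laws entering as the defining transport Ansatz rather than as variational outputs. You also correctly isolate the one genuinely stochastic ingredient --- rigorously commuting ${\sf d}$ with the pairing and the variation so that no It\^o corrections appear --- which is exactly what \cite{de2020implications} supplies via the Kunita--It\^o--Wentzell formula, the same tool the paper itself invokes later in the proof of the stochastic Kelvin circulation theorem.
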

\noindent For the proof of this theorem and the technical details we refer to \cite{holm2015variational,de2020implications}. By taking the variational derivatives of the Lagrangian for thermal rotating shallow water as in \eqref{eq:trsw-varderivatives}, we obtain the stochastic TRSW equations
\begin{equation}
\begin{aligned}
{\sf d}\bu + ({\sf d}\boldsymbol \chi_t\cdot\nabla)\bu  + \sum_k(\nabla \boldsymbol \xi_k)\cdot\bu\circ dW_t^k 
&= - \frac{\alpha}{{\rm Fr}^2}\nabla\big((1+\mathfrak{s}b) \zeta\big)\,dt + \frac{\mathfrak{s}}{2\,{\rm Fr}^2}(\alpha\zeta-h)\nabla b\,dt 
\\&\qquad 
- \frac{1}{{\rm Ro}}f\hat{\mathbf{z}}\times {\sf d}\boldsymbol \chi_t - \frac{1}{{\rm Ro}}\sum_k\nabla(\boldsymbol \xi_k\cdot\mathbf{R})\circ dW_t^k
\,,\\
{\sf d}\eta + \nabla\cdot(\eta\, {\sf d}\boldsymbol \chi_t) &= 0
\,,\\
{\sf d} b + ({\sf d}\boldsymbol \chi_t\cdot\nabla)b &= 0\,.
\end{aligned}
\label{eq:STRSW}
\end{equation}
The boundary conditions are given by
\begin{equation}
\mathbf{\hat{n}}\cd \bu=0
\quad\hbox{and}\quad
\mathbf{\hat{n}}\cd \boldsymbol{\xi}_k = 0
\quad\hbox{and}\quad
\mathbf{\hat{n}}\times\nabla b = 0
\quad\hbox{on the boundary $\partial{\cal D}$.}
\label{eq:bdySTRW}
\end{equation}
The boundary condition on $\boldsymbol \xi_k$ is required to be satisfied for each $k$. The Kelvin circulation theorem has now become stochastic, because the circulation loop is transported by the stochastic vector field ${\sf d}\boldsymbol \chi_t$, rather than by the deterministic vector field $\bu$. Specifically, we have:
\begin{theorem}\label{KelThm-stochTRSW}
The stochastic Kelvin circulation law associated to the stochastic Euler--Poincar\'e theorem is
\begin{equation}
{\sf d}\oint_{c({\sf d}\boldsymbol \chi_t)} \frac{1}{\eta} \frac{{\delta} \ell}{{\delta} \mathbf{u}} \cdot d\bx 
= -\, \oint_{c({\sf d}\boldsymbol \chi_t)} \frac{1}{\eta} \frac{{\delta} \ell}{{\delta} b} \nabla b \cdot d\bx \,dt
\,,
\label{EP-Kelvin-Noether-trsw}   
\end{equation}
where $c({\sf d}\boldsymbol \chi_t)$ is a closed loop that is transported by the flow generated by the stochastic fluid velocity ${\sf d}\boldsymbol \chi_t$ in two dimensions.
\end{theorem}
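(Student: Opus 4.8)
The plan is to mirror the deterministic argument used to prove the Kelvin--Noether Theorem \ref{thm:Kelvin}, exploiting the fact that Stratonovich calculus obeys the ordinary product and chain rules, so that the stochastic computation is formally identical to the deterministic one in \eqref{proof:Kelvin-det}. The only structural change is that the deterministic transport field $\bu\,dt$ is everywhere replaced by the semimartingale ${\sf d}\boldsymbol\chi_t = \bu\,dt + \sum_k\boldsymbol\xi_k\circ dW_t^k$.

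First I would recast the stochastic Euler--Poincar\'e equation \eqref{eq:SEPeq} into its circulation form, the stochastic analogue of Lemma \ref{lemma:circulationform}. Dividing the momentum density $\delta\ell/\delta\mathbf{u}$ by $\eta$ and using the continuity law ${\sf d}\eta = -\nabla\cdot(\eta\,{\sf d}\boldsymbol\chi_t)$ from \eqref{eq:SEB-advectionLaws}, the same manipulation that produced \eqref{eq:EPeq-circ} yields
\begin{equation*}
{\sf d}\left(\frac1\eta\frac{\delta\ell}{\delta\mathbf{u}}\right) + ({\sf d}\boldsymbol\chi_t\cdot\nabla)\left(\frac1\eta\frac{\delta\ell}{\delta\mathbf{u}}\right) + (\nabla{\sf d}\boldsymbol\chi_t)\cdot\left(\frac1\eta\frac{\delta\ell}{\delta\mathbf{u}}\right) = -\frac1\eta\frac{\delta\ell}{\delta b}\nabla b\,dt + \nabla\frac{\delta\ell}{\delta\eta}\,dt\,.
\end{equation*}
Because the two forcing terms on the right are advected material quantities multiplied by $dt$, they carry no separate noise contribution.

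Second, I would apply the stochastic transport (Kunita--It\^o--Wentzell) theorem to the loop integral. Letting $c({\sf d}\boldsymbol\chi_t)=\phi_t c(0)$ denote the loop pushed forward by the stochastic flow $\phi_t$ generated by ${\sf d}\boldsymbol\chi_t$, the pull-back of the momentum one-form $\phi_t^*\big((\eta^{-1}\delta\ell/\delta\mathbf{u})\cdot d\bx\big)$ is a semimartingale-valued one-form on the fixed reference loop $c(0)$. Pulling back under $\phi_t$, taking the stochastic differential inside the now-fixed integral, applying the Stratonovich product rule, and pushing forward again gives
\begin{equation*}
{\sf d}\oint_{c({\sf d}\boldsymbol\chi_t)}\frac1\eta\frac{\delta\ell}{\delta\mathbf{u}}\cdot d\bx = \oint_{c({\sf d}\boldsymbol\chi_t)}\left({\sf d} + ({\sf d}\boldsymbol\chi_t\cdot\nabla) + (\nabla{\sf d}\boldsymbol\chi_t)\cdot\right)\left(\frac1\eta\frac{\delta\ell}{\delta\mathbf{u}}\right)\cdot d\bx\,,
\end{equation*}
which is exactly the operator appearing on the left-hand side of the circulation form above, replacing the first line of \eqref{proof:Kelvin-det}. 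Substituting the circulation-form equation into the integrand, the term $\nabla(\delta\ell/\delta\eta)\,dt$ is an exact differential and hence integrates to zero around the closed loop by the fundamental theorem of calculus, precisely as in the last line of \eqref{proof:Kelvin-det}; what remains is $-\oint_{c({\sf d}\boldsymbol\chi_t)}\eta^{-1}(\delta\ell/\delta b)\nabla b\cdot d\bx\,dt$, which is the claimed identity \eqref{EP-Kelvin-Noether-trsw}.

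I expect the main obstacle to be the rigorous justification of the transport step, namely the interchange of the stochastic differential ${\sf d}$ with the loop integral over a curve that is itself transported by the stochastic flow $\phi_t$. Unlike the deterministic case, where this is an elementary change of variables, here one must invoke the Kunita--It\^o--Wentzell formula for the differential of the composition of a stochastic field with a stochastic flow, together with the fact that $\phi_t$ is a flow of Stratonovich diffeomorphisms, so that $\frac{d}{dt}$ is replaced by the Stratonovich differential while the Lie-derivative structure of the transport operator is preserved. Once this transport formula is in hand, the remainder of the proof is a direct transcription of the deterministic computation, and the technical details are those established in \cite{holm2015variational,de2020implications}.
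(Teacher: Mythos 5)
Your proposal is correct and follows essentially the same route as the paper's own proof: recast the stochastic Euler--Poincar\'e equation \eqref{eq:SEPeq} in the circulation form of Lemma \ref{lemma:circulationform}, move the Stratonovich differential ${\sf d}$ inside the loop integral via the pull-back/push-forward argument (the Kunita--It\^o--Wentzell step you flag is precisely the technical ingredient the paper defers to \cite{holm2015variational,de2020implications}), and kill the exact term $\oint\nabla(\delta\ell/\delta\eta)\cdot d\bx\,dt$ by the fundamental theorem of calculus. No gaps; your explicit statement of the stochastic circulation-form lemma is, if anything, slightly more careful than the paper's condensed computation.
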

\begin{proof}
By following the proof \eqref{proof:Kelvin-det} of the deterministic Kelvin circulation theorem \ref{thm:Kelvin} and using the product rule and chain rule for the stochastic time differential {\sf d}, we have
\begin{equation}
\begin{aligned}
{\sf d}\oint_{c({\sf d}\boldsymbol \chi_t)}\frac{1}{\eta}\frac{\delta \ell}{\delta \mathbf{u}}\cdot d\mathbf{x} 
&= \oint_{c({\sf d}\boldsymbol \chi_t)} \left({\sf d} + {\sf d}\boldsymbol \chi_t\cdot\nabla 
+ (\nabla {\sf d}\boldsymbol \chi_t)\,\cdot\right)\left(\frac{1}{\eta}\frac{\delta \ell}{\delta \mathbf{u}}\right)\cdot d\mathbf{x}\, dt
\\
&= 
-\oint_{c({\sf d}\boldsymbol \chi_t)} \frac{1}{\eta}\frac{\delta \ell}{\delta b}\nabla b \cdot d\mathbf{x}\, dt
+ \oint_{c({\sf d}\boldsymbol \chi_t)}\nabla \frac{\delta\ell}{\delta \eta} \cdot d\mathbf{x}\, dt
\\
&=  
-\oint_{c({\sf d}\boldsymbol \chi_t)} \left(\frac{1}{\eta}\frac{\delta \ell}{\delta b}\right)\nabla b \cdot d\mathbf{x} dt\,.
\end{aligned}
\end{equation}
\end{proof}
\begin{remark}
For the stochastic TRSW equations \eqref{eq:STRSW}, we have
\begin{equation}
{\sf d}\oint_{c({\sf d}\boldsymbol \chi_t)}\left(\bu+\frac{1}{{\rm Ro}}\bR\right) \cdot d\bx 
= \frac{\mathfrak{s}}{2\,{\rm Fr}^2}  \oint_{c({\sf d}\boldsymbol \chi_t)} (\alpha\zeta-h) \nabla b \cdot d\bx\,dt
= \frac{\mathfrak{s}}{2\,{\rm Fr}^2}  \int\!\!\int_{\partial S = c({\sf d}\boldsymbol{\chi_t})}
\zh\cdot\nabla(\alpha\zeta-h) \times\nabla b \,dx\,dy\,dt
\,.
\label{EP-KN-trsw}   
\end{equation}
One sees in equation \eqref{EP-KN-trsw} that misalignment of the horizontal gradient of either the free surface elevation $\zeta$, or the bathymetry $h$ with the horizontal gradient of the buoyancy $b$ will generate circulation, cf. the corresponding deterministic TRSW Kelvin circulation theorem in equation \eqref{Kel-trsw}. 
\end{remark}
\bigskip

\begin{remark}
The evolution of potential vorticity on fluid parcels for the TRSW equations in \eqref{eq:STRSW} is given by
\begin{equation}
{\sf d}q + ({\sf d}\boldsymbol \chi_t\cdot\nabla)q = \frac{\mathfrak{s}}{2\,{\rm Fr}^2\,\eta}\zh\cdot\nabla(\alpha\zeta-h)\times\nabla b\,dt,
\end{equation}
where the potential vorticity is defined by
\begin{equation}
q:= \frac{\varpi}{\eta}, \qquad \hbox{and} \qquad \varpi := \hat{\mathbf{z}}\cdot\nabla\times\left(\bu+\frac{1}{{\rm Ro}}\bR\right).
\end{equation}
\end{remark}
\bigskip

\begin{remark}
The stochastic TRSW equations \eqref{eq:STRSW} have an infinite number of conserved integral quantities
\begin{equation}
C_{\Phi,\Psi} = \int_{\cal D} \Big( \Phi(b) +  q\, \Psi(b)\Big) \eta\,dxdy,
\label{eq:casimirsstrsw}
\end{equation}
for the boundary conditions given in \eqref{trsw-bdy} and any differentiable functions $\Phi$ and $\Psi$.
\end{remark}
\bigskip

\begin{remark}[Stochastic Hamiltonian formulation]\label{remark:Casimirs}
The Legendre transform which determines the Hamiltonian ${\sf d}h_{TRSW}$ for the stochastic TRSW equations is defined as
\begin{equation}
{\sf d}\hslash_{TRSW}  (\mu ,\eta,b)
:= 
\big\langle \mu , {\sf d}\boldsymbol \chi_t \big\rangle - \ell_{TRSW} (\mathbf{u},\eta,b)dt
\,,
\label{LegXform-VCH092}
\end{equation}
in which the angle brackets in the definition of the Legendre transform denote the $L^2$ pairing over the domain ${\cal D}$. Notice that the Hamiltonian ${\sf d}h_{TRSW}$ in \eqref{LegXform-VCH092} is a semimartingale. See \cite{street2020semi} for variational principles driven by semimartingales. The Hamiltonian form of the stochastic TRSW equations is given for a functional $F(\mu_i,\eta,b)$ by
\begin{align}
{\sf d} F =\Big\{ F, {\sf d}\hslash_{TRSW}\Big\}
=
- \int_{\cal D}
\begin{bmatrix}
{\delta F}/{\delta \mu_i} \\  {\delta F}/{ \delta \eta} \\ {\delta F}/{ \delta b} 
\end{bmatrix}^T
\begin{bmatrix}
\mu_j\partial_i + \partial_j \mu_i & \eta \partial_i & -\,b_{,i}
\\ 
\partial_j \eta & 0 & 0 \\
b_{,j} & 0 & 0
\end{bmatrix}
\begin{bmatrix}
{\delta ({\sf d}\hslash_{TRSW} }) / {\delta \mu_j}\
\\
{\delta ({\sf d}\hslash_{TRSW} })/{ \delta \eta}
\\  
{\delta ({\sf d}\hslash_{TRSW} })/{ \delta b}
\end{bmatrix}
\,dx\,dy\,.
\label{Ham-matrix-EB-stoch}
\end{align}
In this notation, repeated indices are summed over. The conserved integral quantities $C_{\Phi,\Psi}$ defined in \eqref{eq:casimirsstrsw} are Casimirs of the Lie--Poisson bracket in \eqref{Ham-matrix-EB-stoch}. That is, the vector of variational derivatives of $C_{\Phi,\Psi}$ comprises a null eigenvector of the Lie--Poisson bracket in \eqref{Ham-matrix-EB-stoch}. Consequently, their conservation persists when the Hamiltonian is made stochastic.  This means that the solutions of these equations describe stochastic coadjoint motion in function space on level sets of the Casimir functionals $C_{\Phi,\Psi}$. Thus, the introduction of SALT into the TRSW equations preserves the Lie--Poisson bracket in their Hamiltonian formulation and thereby preserves their geometric interpretation as coadjoint motion \cite{holm2009geometric}. 
\end{remark}

\section{Balanced interpretations of TRSW} \label{sec:detEliassen}
There exist several approximations of the rotating shallow water (RSW) equations, the most famous one being the quasi-geostrophic (QG) approximation. By assuming the motion to take place in a particular scaling regime, it can be shown that the largest component of the velocity field, called the geostrophic velocity field, is determined by a diagnostic equation, rather than a prognostic equation. The QG approximation is a small perturbation around this geostrophic velocity field. There exists an intermediate model which is more accurate than QG, but is still an approximation of RSW. In this section, we will derive the thermal geostrophic balance by identifying the correct scaling regime and use asymptotic expansions to simplify the TRSW equations. Next, we will show that the thermal rotating shallow water equations can be approximated geometrically to derive a class of equations which was first proposed by Eliassen \cite{eliassen1949quasi} and made into a variational theory by \cite{salmon1983practical}, where it is called L1. The Lagrangian corresponding to the equations proposed by Eliassen can be obtained via two approaches. The first approach involves the Helmholtz decomposition and the second approach follows \cite{allen1996extended}. The methods of \cite{allen1996extended} will be applied in the Euler--Poincar\'e framework to derive the corresponding equations of motion. Finally, the stochastic  thermal L1 (TL1) equations will be derived via the stochastic Euler--Poincar\'e theorem. 

\subsection{Thermal geostrophic balance}
To obtain the thermal geostrophic balance relation, we return to the nondimensional deterministic TRSW equations for the 
Eulerian horizontal vector velocity $\bu(\bx,t)$, thickness $\eta(\bx, t)$, buoyancy $b(\bx,t)$, 
and free surface elevation $\alpha\zeta = \eta - h$, with mean depth $h(\bx)$, given in \eqref{trsw-eqns} by
\begin{equation}   
\frac{D}{Dt} \bu+ \frac{1}{{\rm Ro}}f\zh\times\bu   
= - \frac{\alpha}{{\rm Fr}^2}\nabla\big((1+\mathfrak{s}b) \zeta\big) + \frac{\mathfrak{s}}{2\,{\rm Fr}^2}(\zeta-h)\nabla b \, ,   
\qquad 
\frac{\partial \eta}{\partial t} + \nabla\cd (\eta\bu) = 0\, ,   
\qquad
\frac{D}{Dt}b = 0\,,
\label{trsw-eqns-qg}  
\end{equation}
with boundary conditions in \eqref{trsw-bdy}. In order to find an asymptotic balance among these equations, a number of assumptions are necessary. First, we assume that the free surface elevation $\alpha\zeta = \eta - h$ is small,  that is $\alpha\ll1$. Second, in line with the Boussinesq approximation in three dimensional fluids, we assume that the buoyancy stratification is also small, meaning that the stratification parameter $\mathfrak{s}\ll 1$ . Third, we assume that the all dimensionless numbers have equal magnitude
\begin{equation}
\mathcal{O}(\alpha) = \mathcal{O}(\mathfrak{s}) = \mathcal{O}({\rm Ro}) = \mathcal{O}({\rm Fr}).
\label{trsw-order}
\end{equation}
Now, in the QG approximation one also assumes the gradients of bathymetry and Coriolis parameter are small, of order $\cal O({\rm Ro})$. Taken together, this amounts to the following set of assumptions:
\begin{equation}
\begin{aligned}
f(\bx) &= 1 + {\rm Ro}\,f_1(\bx),\\
h(\bx) &= 1 + {\rm Ro}\,h_1(\bx).
\end{aligned}
\label{qg-assumptions}
\end{equation} 
The beta plane approximation is included in the notation used in \eqref{qg-assumptions}, when $\beta f_0^{-1} = \mathcal{O}({\rm Ro})$ and $f_1(\bx) = y$. These assumptions were also made in derivation of this model in \cite{holm2019stochastic} and they are sufficient to derive the TRSW balance relation, which we will show now. Note that since all dimensionless numbers have the same magnitude as in \eqref{trsw-order}, we can continue the analysis with a single small parameter $\epsilon \ll 1$. First, we multiply the momentum equation in \eqref{trsw-eqns-qg} by $\epsilon$, then we substitute the assumptions \eqref{qg-assumptions} into the momentum equation to find
\begin{equation}
\begin{aligned}
\zh\times\bu  + \epsilon\,\frac{D}{Dt} \bu+ \epsilon\,f_1\zh\times\bu   
&= - \nabla\zeta_1 - \frac{1}{2}\nabla b - \epsilon b\nabla\zeta - \frac{\epsilon}{2}(\zeta + h_1)\nabla b_1
\,.\label{ordering}
\end{aligned}
\end{equation}
Thus, equation \eqref{ordering} implies the following relation
\begin{equation}   
\zh\times\bu = - \nabla\zeta - \frac{1}{2}\nabla b + \mathcal{O}(\epsilon)
\,.
\label{trsw-bal}  
\end{equation}
By operating with $\zh\times$ on the TRSW balance relation \eqref{trsw-bal}, we find the defining expression for the divergence-free thermal geostrophic velocity field, denoted as $\bu_{TG}$, 
\begin{equation}
\bu_{TG} = \zh\times\nabla\left(\zeta + \frac{1}{2}b\right) =: \zh\times\nabla\psi\,.
\label{trsw-tg}
\end{equation}
Here, $\psi(\bx,t)$ plays the role of the \emph{stream function} for the divergence-free leading order thermal geostrophic velocity vector field $\bu_{TG}$. Relation \eqref{trsw-bal} allows the total fluid velocity to be represented as the sum of the leading order thermal geostrophic velocity $\bu_{TG}$ and higher order terms,
\begin{equation}
\bu = \bu_{TG} + \epsilon\boldsymbol \upsilon 
\label{trsw-geostrophicdecomposition}
\end{equation}
In particular, \eqref{trsw-geostrophicdecomposition} implies that the difference, referred to as the ageostrophic component of the velocity field, satisfies $\bu-\bu_{TG} = \epsilon \boldsymbol \upsilon = \mathcal O(\epsilon)$. This decomposition of the velocity field into a leading order  divergence-free part plus higher order parts is similar to the Helmholtz decomposition, except the  divergence-free component is allowed to have both a leading order part and a higher order part, while the rotation free component has only higher order parts.

\subsection{Moving into the balanced frame}
One may transform the thermal rotating shallow water equations in \eqref{trsw-eqns} into a time-dependent local frame moving with the thermal geostrophically balanced velocity, $\bu_{TG}(\bx,t)$, by inserting the decomposition \eqref{trsw-geostrophicdecomposition} into the Lagrangian for the TRSW equations \eqref{lag:TRSW},

\begin{equation}
\begin{aligned}
\ell_{TRSW}(\mathbf{u},b,\eta)
&= \int_{\cal D} \frac{1}{2}\eta |\bu|^2 + \frac{1}{\epsilon}\eta\bu \cdot\bR - \frac{1}{2\epsilon^2} (1+\epsilon b)(\eta^2-2\eta h)\,dx\,dy \\
&= \int_{\cal D} \epsilon\eta \left( \frac{\epsilon}{2} |\sym{\upsilon}|^2 + \sym{\upsilon}\cdot\bu_{TG} + \frac{1}{\epsilon}\sym{\upsilon}\cdot\bR\right) 
+ \eta \left(  \frac{1}{2}|\bu_{TG}|^2 + \frac{1}{\epsilon} \bu_{TG}\cdot\bR\right) 
\\&\hspace{2cm}- \frac{1}{2\epsilon^2} (1+\epsilon b)(\eta^2-2\eta h)\,dx\,dy\,.
\end{aligned}
\label{lag:trswbalanced}
\end{equation}
One can apply Hamilton's variational principle to the Lagrangian \eqref{lag:trswbalanced} $0 = \delta S$ with $S = \int_{t_1}^{t_2} \ell_{TRSW} dt$. We substitute the velocity decomposition in \eqref{trsw-geostrophicdecomposition} to define the variation $\delta \bu = \delta \bu_{TG} + \epsilon \delta \sym{\upsilon}$ with
\begin{equation}
\delta \bu_{TG} =  \zh\times\nabla\left(\delta \eta + \frac{1}{2}\delta b\right) .
\label{trsw-tg1}
\end{equation}
Hamilton's principle then  yields the Euler--Poincar\'e equation \eqref{eq:EPeq-circ} in the form,
\begin{equation}
\frac{\partial}{\partial t}\left(\frac{1}{\eta}\frac{\delta \ell}{\delta \sym{\upsilon}}\right)
+ (\bu\cdot\nabla)\left(\frac{1}{\eta}\frac{\delta \ell}{\delta \sym{\upsilon}}\right)
+ (\nabla \bu)\cdot\left(\frac{1}{\eta}\frac{\delta \ell}{\delta \sym{\upsilon}}\right)
= -\,\frac{1}{\eta}\frac{\delta \ell}{\delta b}\nabla b + \nabla\frac{\delta\ell}{\delta \eta}
\,.\label{eq:EPeq-circ1}
\end{equation}
Thus, the relative motion equation for TRSW dynamics in the frame moving with the thermal geostrophic balance velocity $\bu_{TG}(\bx,t)$ in \eqref{trsw-tg} keeps its Euler--Poincar\'e form \eqref{eq:EPeq-circ}. Upon eliminating $\partial_t \bu_{TG}$ in \eqref{eq:EPeq-circ1} by using the advection equations for $(b,\eta)$ in \eqref{trsw-eqns-qg}, the system closes and thereby transforms the TRSW equations into the new variables $(\boldsymbol \upsilon,b,\eta)$ in the reference frame moving with velocity $\bu_{TG}(\bx,t)$. 

\paragraph{\emph{Stationary} thermal geostrophic balance as ``mean dynamic topography''.} To a good approximation, much of upper ocean dynamics is well-approximated by a \emph{mean dynamic topography} (MDT), which is monitored continuously with in situ instruments and satellites, see, e.g., \cite{maximenko2009mean}. Ocean dynamics is then envisioned as time-dependent variations in the steady moving frame of the MDT.  To apply this idea to TRSW dynamics, we envision TRSW dynamics as taking place in the moving reference frame defined by a time-independent mean thermal geostrophic velocity $\overline{\bu}_{TG}(\bx)$. In this situation, the Kelvin theorem \ref{Kel-trsw} for deterministic TRSW derived from equation  \eqref{eq:EPeq-circ1} takes the following form. \medskip

\begin{theorem}[Kelvin theorem for deterministic TRSW in a stationary balanced frame]\label{Kel-thmbalanceTRSW}
The deterministic TRSW equations \eqref{trsw-eqns} imply the following Kelvin circulation law in a stationary TG balanced frame moving with time-independent velocity $\overline{\bu}_{TG}(\bx)$, 
\begin{equation}
\frac{d}{dt}\oint_{c(\sym{\upsilon})} \hspace{-2mm}\Big(\epsilon\,\sym{\upsilon} +\bu_{TG}(\bx) 
+ \frac{1}{\epsilon}\bR(\bx)\Big) \cdot d\bx 
= \frac{1}{2\epsilon}  \oint_{c(\sym{\upsilon})} \hspace{-2mm} (\epsilon\zeta-h) \nabla b \cdot d\bx
= \frac{1}{2\epsilon}  \int\!\!\int_{\partial S = c(\sym{\upsilon})}
 \hspace{-2mm}
\zh\cdot\nabla(\epsilon\zeta-h) \times\nabla b\, dx\,dy
\,,
\label{Kel-trsw-frame}   
\end{equation}
where $c(\sym{\upsilon})$ is any closed loop moving with horizontal fluid velocity $\sym{\upsilon}(\bx,t)$ relative to the frame of motion whose velocity is $\overline{\bu}_{TG}(\bx)+ \epsilon^{-1}\bR(\bx)$ in two horizontal dimensions. 
\end{theorem}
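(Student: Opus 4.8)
The plan is to recognise Theorem~\ref{Kel-thmbalanceTRSW} as the deterministic TRSW Kelvin law of Theorem~\ref{Kel-thmTRSW}, merely re-expressed in the single small-parameter scaling \eqref{trsw-order} and read off in the stationary balanced frame built from the decomposition \eqref{trsw-geostrophicdecomposition}. Concretely, I would start from the already-established circulation law \eqref{Kel-trsw} and insert the common ordering ${\rm Ro}={\rm Fr}=\alpha=\mathfrak{s}=\epsilon$, so that the prefactor collapses as $\mathfrak{s}/(2{\rm Fr}^2)=1/(2\epsilon)$ and the surface factor becomes $\alpha\zeta-h=\epsilon\zeta-h$. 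Because the right-hand side of \eqref{Kel-trsw} involves only $b$, $\zeta$ and $h$, it is unaffected by the velocity decomposition and immediately reproduces the two right-hand sides of \eqref{Kel-trsw-frame} (up to relabelling the loop, discussed below).

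The left-hand side then requires only algebra. Inserting $\bu=\bu_{TG}+\epsilon\sym{\upsilon}$ into the circulation integrand of \eqref{Kel-trsw} rewrites it as
\[
\bu+\frac1\epsilon\bR=\epsilon\sym{\upsilon}+\bu_{TG}+\frac1\epsilon\bR,
\]
which is exactly the one-form integrated in \eqref{Kel-trsw-frame}. Equivalently, and this is the self-contained route I would write out, one applies the Kelvin--Noether theorem~\ref{thm:Kelvin} directly to the balanced Euler--Poincar\'e equation \eqref{eq:EPeq-circ1}: the chain rule gives $\delta\ell/\delta\sym{\upsilon}=\epsilon\,\delta\ell/\delta\bu=\epsilon\eta(\bu+\epsilon^{-1}\bR)$, so the advected momentum per unit mass $\tfrac1\eta\,\delta\ell/\delta\sym{\upsilon}$ equals $\epsilon(\epsilon\sym{\upsilon}+\bu_{TG}+\epsilon^{-1}\bR)$ and the overall constant $\epsilon$ cancels from the circulation identity, while the buoyancy term reproduces the forcing $\tfrac{1}{2\epsilon}(\epsilon\zeta-h)\nabla b$ via \eqref{eq:trsw-varderivatives} (using $\epsilon\zeta=\eta-h$). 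The final equality in \eqref{Kel-trsw-frame} is then the planar Stokes theorem applied to $(\epsilon\zeta-h)\nabla b$, using $\nabla\times\big((\epsilon\zeta-h)\nabla b\big)=\nabla(\epsilon\zeta-h)\times\nabla b$ since $\nabla\times\nabla b=0$.

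The step that genuinely requires care---the main obstacle---is the relabelling of the material loop from $c(\bu)$, which is what Theorem~\ref{thm:Kelvin} and equation \eqref{eq:EPeq-circ1} literally deliver, to the loop $c(\sym{\upsilon})$ that moves with the relative velocity in the frame of the prescribed background $\overline{\bu}_{TG}(\bx)+\epsilon^{-1}\bR(\bx)$. Here the hypothesis that the balance is \emph{stationary}, i.e.\ $\bu_{TG}=\overline{\bu}_{TG}(\bx)$ is time-independent, is essential: it makes the background one-form time-independent, so that splitting the advecting flow of $\bu=\overline{\bu}_{TG}+\epsilon\sym{\upsilon}$ into the fixed background part and the relative part $\sym{\upsilon}$ introduces no additional $\partial_t$ contributions, and the Lie-transport structure underlying the Kelvin--Noether identity survives the change of frame. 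I would justify this by invoking the preservation of Euler--Poincar\'e form already recorded in \eqref{eq:EPeq-circ1}: since that equation retains the transport operator $\partial_t+\bu\cdot\nabla+(\nabla\bu)\cdot$ acting on $\tfrac1\eta\,\delta\ell/\delta\sym{\upsilon}$, the circulation of this momentum around a loop carried by the flow obeys \eqref{Kel-trsw-frame} with the loop now naturally parametrised by its motion relative to the stationary frame, which is precisely $c(\sym{\upsilon})$.
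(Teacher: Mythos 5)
Your proposal is correct and follows essentially the same route as the paper, which states this result as an immediate consequence of the Kelvin--Noether theorem~\ref{thm:Kelvin} applied to the balanced Euler--Poincar\'e equation \eqref{eq:EPeq-circ1}, i.e.\ the circulation law \eqref{Kel-trsw} with the ordering $\alpha=\mathfrak{s}={\rm Ro}={\rm Fr}=\epsilon$ and the decomposition $\bu=\bu_{TG}+\epsilon\sym{\upsilon}$ inserted. Your explicit treatment of the loop relabelling $c(\bu)\to c(\sym{\upsilon})$, using the time-independence of $\overline{\bu}_{TG}(\bx)$, is a point the paper leaves implicit, but it is the same argument, not a different one.
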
\medskip

\begin{remark}
Thus, in the mean thermal geostrophic balance frame, the frame velocity $\overline{\bu}_{TG}(\bx)$ simply adds another contribution to the momentum per unit mass. In turn, this contributes an additional `Coriolis' force in the dynamics of the relative velocity $ \epsilon\boldsymbol \upsilon= \bu - \bu_{TG} $. The corresponding SALT version in this case would simply replace $\mathbf{u}(\mathbf{x},t)$ by $\sym{\upsilon}(\mathbf{x},t)$ as the drift velocity of the stochastic vector field ${\sf d}\boldsymbol \chi_{t}$ defined in \eqref{def:chi}.
\end{remark}
In the next sections, we will consider the thermal versions of two of the classic GFD approximations of RSW developed previously in the absence of buoyancy. Namely, we will consider thermal versions of the Eliassen approximation and the quasigeostrophic approximation. 

\subsection{The Eliassen approximation}
The starting point in deriving the thermal Eliassen approximation is the Lagrangian for the thermal rotating shallow water equations \eqref{lag:TRSW}
\begin{equation}
\ell_{TRSW} = \int \frac{1}{2}\eta |\bu|^2 + \frac{1}{{\rm Ro}}\eta\bu\cdot\bR - \frac{1}{2\,{\rm Fr}^2} (1+\mathfrak{s}b)(\eta^2-2\eta h)\,dx\,dy \,.
\label{lag:trswl1}
\end{equation}
Since the thermal geostrophic velocity field \eqref{trsw-tg} is  divergence-free, it is useful to transform the velocity variables inside the Lagrangian to vorticity and divergence by using the Helmholtz decomposition. The forward transformation is $(\bu,\eta,b)\mapsto(\omega,D,\eta,b)$, which amounts to 
\begin{equation}
\begin{aligned}
\omega &= \zh\cdot\nabla\times\bu,\\
D &= \nabla\cdot\bu,\\
\eta &= \eta,\\
b &= b.
\end{aligned}
\end{equation}
The inverse transformation is unique if the kernel of the Laplacian is trivial, which is the same as saying that there are no harmonic functions for the domain $\cal D$ for the boundary conditions on $\partial \cal D$. In this case, the inverse transformation $(\omega,D,\eta,b)\mapsto(\bu,\eta,b)$ is given by
\begin{equation}
\begin{aligned}
\bu &= \zh\times\nabla\Delta^{-1}\omega + \nabla\Delta^{-1} D,\\
\eta &= \eta,\\
b &= b.
\end{aligned}
\label{Invert-Hodge}
\end{equation}
The inverse transformation \eqref{Invert-Hodge} uniquely defines the vector $\bu$ in terms of its divergence and curl. The inverse of the Laplacian can be interpreted in terms of the appropriate Green's function in two dimensions. Boundary conditions need to be dealt with carefully when taking the Green's function approach. Assuming that this is the case, the symbol $\Delta^{-1}$ denotes the correct Green's function. For doubly periodic boundary conditions, the Green's function for the Laplacian takes the form
\begin{equation}
\Delta^{-1} F = -\frac{1}{2\pi}\int \ln\|\bx - \bx'\| F(\bx') d\bx'.
\end{equation}
Changing variables using the classical Helmholtz decomposition leads to the following formulation of the Lagrangian for thermal rotating shallow water
\begin{equation}
\begin{aligned}
\ell_{TRSW} &= \int \frac{1}{2} \eta\left|\nabla\Delta^{-1}\omega\right|^2 +\eta J\left(\Delta^{-1}\omega,\Delta^{-1}D\right) + \frac{1}{2}\eta|\nabla\Delta^{-1}D|^2
\\&\quad 
+ \frac{1}{{\rm Ro}}\eta\left(\zh\times\nabla\Delta^{-1}\omega + \nabla\Delta^{-1} D\right)\cdot\bR - \frac{1}{2\,{\rm Fr}^2} (1+\mathfrak{s}b)(\eta^2-2\eta h)\,dx\,dy \,.
\end{aligned}
\end{equation}
It should be noted that the vorticity and divergence are not orthogonal in a weighted $L^2$ space, so the Jacobian term $\int\eta J\left(\Delta^{-1}\omega,\Delta^{-1}D\right)\,dx\,dy\neq 0$. The Jacobian term does vanish in the standard $L^2$ space, in which $\int J\left(\Delta^{-1}\omega,\Delta^{-1}D\right)\,dx\,dy = 0$. By means of the ordering $\mathcal{O}(\alpha) = \mathcal O(\mathfrak{s}) = \mathcal O({\rm Ro}) = \mathcal O({\rm Fr})$ made in \eqref{ordering} and the decomposition of $\bu$ into a thermal geostrophic part and a higher order part \eqref{trsw-geostrophicdecomposition}, one can take the following asymptotic expansions for the vorticity and the divergence
\begin{equation}
\begin{aligned}
\omega &= \omega_0 + \epsilon \omega_1 + o(\epsilon),\\
D &= \epsilon D_1 + o(\epsilon).
\end{aligned}
\label{exp:lagtqg}
\end{equation}
The thermal geostrophic balance implies a decomposition of the velocity field in terms a leading order  divergence-free component and a higher order general component. This means that one can identify $\omega_0$ with the curl of the thermal geostrophic velocity field \eqref{trsw-tg}, but it also means that one must keep a higher order vorticity term around. Substituting \eqref{exp:lagtqg} into the Lagrangian yields
\begin{equation}
\begin{aligned}
\ell_{TRSW} &= \int \frac{1}{2} \eta\left|\nabla\Delta^{-1}\left(\omega_0 + \epsilon\omega_1\right)\right|^2 
\\&\quad 
+\eta J\left(\Delta^{-1}\left(\omega_0 + \epsilon\omega_1\right),\Delta^{-1}\epsilon D_1\right) + \frac{1}{2}\eta|\nabla\Delta^{-1}\epsilon D_1|^2
\\&\quad 
+ \frac{1}{\epsilon}\eta\left(\zh\times\nabla\Delta^{-1}\left(\omega_0 + \epsilon\omega_1\right) + \nabla\Delta^{-1}\epsilon D_1\right)\cdot\bR
\\&\quad 
- \frac{1}{2\epsilon^2}(1+\epsilon b)(\eta^2-2\eta h)\,dx\,dy\, + o(\epsilon).
\end{aligned}
\end{equation}
By expanding and collecting all terms that are of higher order than $\mathcal O(1)$, the Lagrangian can be written as
\begin{equation}
\begin{aligned}
\ell_{TRSW} &= \int \frac{1}{2} \eta\left|\nabla\Delta^{-1}\omega_0\right|^2 
+ \frac{1}{\epsilon}\eta\left(\zh\times\nabla\Delta^{-1}(\omega_0+\epsilon\omega_1) + \epsilon\nabla\Delta^{-1} D_1\right)\cdot\bR 
- \frac{1}{2\epsilon^2}b(\eta^2-2\eta h)\,dx\,dy 
+ o(1).
\end{aligned}
\end{equation}
We now use the fact that the velocity field also decomposes into a thermal geostrophic part and an ageostrophic part and apply the inverse transformation to recover the original fluid variables.  This yields 
\begin{equation}
\ell_{TL1} = \int \frac{1}{2}\eta|\bu_{TG}|^2 + \frac{1}{\epsilon}\eta\bu\cdot\bR - \frac{1}{2\epsilon^2}(1+\epsilon b)(\eta^2-2\eta h)\,dx\,dy \,.
\end{equation}
The subscript $TL1$ refers to the thermal $L1$ model, which is an extension of \cite{salmon1983practical} to include horizontal variations in buoyancy and bathymetry. However, at this stage we do not have enough information to execute the variational principle. To obtain the required information, we will introduce a higher order term which completes the Lagrangian, by enabling us to vary with respect to the full velocity field $\bu$, interpreted as a Lagrange multiplier
\begin{equation}
\ell_{TL1} = \int \bu\cdot\left(\eta\bu_{TG}+\frac{1}{\epsilon}\eta\bR \right) - \frac{1}{2}\eta|\bu_{TG}|^2 - \frac{1}{2\epsilon^2}(1+\epsilon b)(\eta^2-2\eta h)\,dx\,dy + \mathcal O(\epsilon).
\label{lag:tl1}
\end{equation}
Equivalently, one can truncate the Lagrangian for TRSW rewritten in the balanced frame \eqref{lag:trswbalanced} at $\mathcal{O}(1)$ to obtain this Lagrangian. This emphasises the fact that a component of the velocity field which can be expressed in terms of the other variables in the problem can be used to change the reference frame. At this point, several important questions arise. How does one take variations of this Lagrangian? Can the equation for the Lagrange multiplier $\bu$ be found? To answer these questions, we use the methods of \cite{allen1996extended}.
\subsection{The Allen-Holm approach}
The \cite{allen1996extended} approach is based on the following observation. A Lagrangian $\ell$ leads to the corresponding Hamiltonian $\hslash$ via the Legendre transform (which is assumed to be invertible for the given $\ell$)
\begin{equation}
\hslash(\mathbf{m},\eta) = \int\left(\mathbf{m}-\frac{\delta \ell}{\delta \bu}\right)\cdot \bu + \overline{\mathcal E}(\bu, \eta, b, \nabla\eta, \nabla b, {\rm etc}.),
\label{lt:ah}
\end{equation} 
where $\overline{\mathcal E}$ can be interpreted as the energy density. The momentum density $\mathbf{m}$ is given in terms of the other fluid variables by the condition $\delta \hslash/\delta \bu = 0$, where $\hslash$ is the Hamiltonian defined by the Legendre transform in \eqref{lt:ah}. In the Legendre transform, the fluid velocity $\bu$ appears as a Lagrange multiplier which enforces the relation of $\mathbf{m}$ to the other fluid variables as a dynamically preserved constraint. This definition is usually taken for granted, but in what follows, we shall model the momentum density as a \emph{prescribed} function of the other fluid variables. This means that we will define
\begin{equation}
\mathbf{m} = \overline{\mathbf{m}}(\eta,b,\nabla\eta,\nabla b,{\rm etc.})=:\overline{\mathbf{m}}[\eta,b].
\label{ah-momentum}
\end{equation}
In this type of modelling, it is necessary to have the explicit enforcement of the momentum definition \eqref{ah-momentum}, both as a constraint as well as a means of determining the fluid velocity for the model by using Lagrange multipliers. We rearrange the Lagrangian in \eqref{lag:trswl1} using $\overline{\mathbf{m}}$ as the momentum density, defined by
\begin{equation}
\overline{\mathbf{m}} := \frac{\delta \ell_{TRSW}}{\delta \bu} = \eta\bu + \frac{1}{{\rm Ro}}\eta\bR.
\label{trsw-mbar-def}
\end{equation}
The Lagrangian in \eqref{lag:trswl1} can then be written as
\begin{equation}
\ell_{TRSW} = \int_{\cal D} \overline{\mathbf{m}}\cdot\bu - \frac{1}{2}\eta|\bu|^2 - \frac{1}{2\,{\rm Fr}^2}(1+\mathfrak{s} b)(\eta^2 - 2\eta h) \,dx\,dy\,.
\label{trsw-lagAH}
\end{equation} 
Substitution of the decomposition of the velocity field into its geostrophic and ageostrophic components in \eqref{trsw-geostrophicdecomposition} with $\bu_{TG}$ defined in \eqref{trsw-tg} into the Lagrangian \eqref{trsw-lagAH} now leads, without approximation, to the following Lagrangian, which is \emph{linear} in the velocity $\bu$,
\begin{equation}
\ell_{TRSW}
=
 \int_{\cal D} \bu\cdot\left(\eta\bu_{TG} + {\rm Ro}\,\eta\bu_A + \frac{1}{{\rm Ro}}\eta\bR\right) - \frac{1}{2}\eta|\bu_{TG} + {\rm Ro}\,\bu_A|^2 - \frac{1}{2\,{\rm Fr}^2} (1+\mathfrak{s}b)(\eta^2-2\eta h) \,dx\,dy.
\label{trsw-lagAH2}
\end{equation}
In line with the ordering scheme $\mathcal O(\alpha) = \mathcal{O}(\mathfrak{s}) = \mathcal O({\rm Ro}) = \mathcal O({\rm Fr})$, we formulate the Lagrangian in terms of a single parameter $\epsilon$. When $\epsilon \ll 1$, one could simply drop the $\bu_A$ terms in the Lagrangian to obtain
\begin{equation}
\ell_{TL1}
=
 \int_{\cal D} \bu\cdot\left(\eta\bu_{TG} + \frac{1}{\epsilon}\eta\bR \right) - \frac{1}{2}\eta|\bu_{TG}|^2 - \frac{1}{2\epsilon^2} (1+\epsilon b)(\eta^2-2\eta h) \,dx\,dy.
\label{trsw-lagAH4}
\end{equation}
One recalls that this is the Lagrangian obtained in \eqref{lag:tl1} by using the Helmholtz decomposition to decompose $\bu$ into vorticity and divergence. 
\bigskip

\begin{remark} \label{rem:strictasymptotics}
Since we will use \eqref{trsw-tg} as our definition for $\bu_{TG}$, we should keep in mind that according to strict asymptotics the potential energy term should also be expanded using the same assumptions that led to \eqref{trsw-tg}. By keeping the Lagrangian in the form \eqref{trsw-lagAH4} we have included higher order terms, but not all of them, since we have truncated the kinetic energy. In terms of strict asymptotics this means that we do not have a balance among terms in the Lagrangian. A benefit of not expanding the potential energy at this stage is that the variational derivatives can be taken in the usual way and are thus closer to the variational derivatives of the TRSW system. 
\end{remark}
\bigskip

\noindent The approximate Lagrangian \eqref{trsw-lagAH4} is also linear in the velocity $\bu$, since it has the form
\begin{equation}
\ell_{TL1} = \int \overline{\mathbf{m}}[\eta,b]\cdot\bu-\overline{\mathcal E}[\eta,b]\,dx\,dy,
\label{AH-ansatz}
\end{equation}
with
\begin{equation}
\overline{\mathbf{m}}[\eta,b]=\frac{\delta \ell_{TL1}}{\delta\bu}= \frac{1}{\epsilon}\eta\bR + \eta\bu_{TG} \qquad \text{ and } \qquad \overline{\mathcal E}[\eta,b] = \frac{1}{2}\eta|\bu_{TG}|^2 + \frac{1}{2\epsilon^2} (1+\epsilon b)(\eta^2-2\eta h) .
\end{equation}
Note that $\bu_{TG}$ can be expressed in terms of $\eta$ and $b$. At this stage, in \cite{allen1996extended}, the next step after having obtained the Lagrangian in the form above would have been to take the Legendre transformation and obtain the Hamiltonian. Then, by requiring the first variation of the Hamiltonian to vanish, one would obtain the equations of motion. In \cite{holm1998euler} it was shown, however, that one can obtain the same equations of motion by applying the variational principle on the Lagrangian side, by means of the Euler--Poincar\'e theorem. The first variation of the Lagrangian is given by
\begin{equation}
\begin{aligned}
\delta \ell_{TL1} &= \int_{\cal D} \left(\eta\bu_{TG} + \frac{1}{\epsilon}\eta\bR\right)\cdot \delta \bu + \Big(\eta(\bu-\bu_{TG}) \Big)\cdot\delta\bu_{TG}
\\&\qquad
+\left(\bu_{TG}\cdot\bu+\frac{1}{\epsilon}\bu\cdot\bR - \frac{1}{2}|\bu_{TG}|^2 - \frac{1}{\epsilon}(1+\epsilon b)\zeta\right) \delta \eta
\\&\qquad
+\left(-\frac{1}{2\epsilon}(\eta^2-2\eta h)\right)\delta b \,dx\,dy\,.
\end{aligned}
\label{1st-var-L1}
\end{equation}
From the definition of $\bu_{TG}$ in \eqref{trsw-tg}, we now substitute
\begin{equation}
\delta\bu_{TG} = \frac{1}{\epsilon}\zh\times\nabla\delta\eta + \frac{1}{2}\zh\times\nabla \delta b
\,.\label{uTG-var}
\end{equation}
Integration by parts in \eqref{1st-var-L1} then yields
\begin{equation}
\begin{aligned}
\delta \ell_{TL1} &= \int_{\cal D} \left(\eta\bu_{TG} + \frac{1}{\epsilon}\eta\bR\right)\cdot \delta \bu
\\&\quad
+\left(\bu_{TG}\cdot\bu + \frac{1}{\epsilon}\bR\cdot\bu - \frac{1}{2}|\bu_{TG}|^2 - \frac{1}{\epsilon}(1+\epsilon b)\zeta - \frac{1}{\epsilon}\zh\cdot\nabla\times\Big(\eta(\bu-\bu_{TG})\Big)\right) \delta \eta
\\&\quad
+\left(-\frac{1}{2\epsilon}(\eta^2-2\eta h) - \frac{1}{2}\zh\cdot\nabla\times\Big(\eta(\bu-\bu_{TG})\Big)\right)\delta b \,dx\,dy
\\&\quad 
- \oint_{\partial\cal D}\zh\times\Big(\eta(\bu-\bu_{TG})\Big)\left(\frac{1}{\epsilon}\delta\eta + \frac{1}{2}\delta b\right)\cdot d\bx
\,.\end{aligned}
\label{1st-var-L1-byParts}
\end{equation}
The integral over the boundary vanishes provided that the ageostrophic velocity field has no tangential component on the boundary. This boundary condition is satisfied since the ageostrophic velocity field can be represented by a velocity potential, as in \eqref{Invert-Hodge}. By means of the Euler--Poincar\'e theorem, we find the equations of motion given in a form first proposed by Eliassen \cite{eliassen1949quasi} as
\begin{equation}
\begin{aligned}
\frac{\partial}{\partial t}\bu_{TG} - \bu\times\left(\nabla\times\Big(\bu_{TG}+\frac{1}{\epsilon}\bR\Big)\right) 
+ \nabla\left(\frac{1}{\epsilon}(1+\epsilon b)\zeta + \frac{1}{2}|\bu_{TG}|^2 
+ \frac{1}{\epsilon} B\right) - \left(\frac{1}{2\epsilon}(\epsilon\zeta-h)
+\frac{1}{2\eta}B\right)\nabla b &= 0,
\\
\frac{\partial}{\partial t}\eta + \nabla\cdot(\eta\bu) = 0, 
\hspace{2cm}
\frac{\partial}{\partial t} b + \bu\cdot\nabla b &= 0.
\end{aligned}
\label{thermal-l1}
\end{equation}
The notation $B$ in the first of these equations is defined by
\begin{equation}
B := \zh\cdot\nabla\times\Big(\eta(\bu-\bu_{TG})\Big).
\label{B-def}
\end{equation}
The function $B$ keeps track of effects that are generated by higher order vorticity terms, since $B$ includes the curl of the velocity difference $\bu-\bu_{TG}$ , which is of order $\mathcal O(\epsilon)$. These higher order vorticity terms will contribute in the Kelvin circulation theorem \ref{Kel-thm-TL1} arising from equations \eqref{thermal-l1}. Equations \eqref{thermal-l1} extend Salmon's $L1$ model \cite{salmon1983practical} to include horizontal buoyancy variations and bottom topography. The boundary conditions carry over from thermal rotating shallow water and are given by
\begin{equation} 
\mathbf{\hat{n}}\cd \bu=0
\quad\hbox{and}\quad
\mathbf{\hat{n}}\times\nabla b = 0
\quad\hbox{on the boundary $\partial{\cal D}$.}
\label{eq:bdytl1}  
\end{equation}
Having been derived from the Euler--Poincar\'e variational principle \cite{holm1998euler}, the deterministic TL1 equations in \eqref{thermal-l1} satisfy the following Kelvin--Noether circulation theorem.
\medskip

\begin{theorem}[Kelvin theorem for the deterministic TL1 model]\label{Kel-thm-TL1}
The deterministic TL1 equations \eqref{thermal-l1} imply the following Kelvin circulation law
\begin{equation}
\begin{aligned}
\frac{d}{dt}\oint_{c(\bu)} \left(\bu_{TG} + \frac{1}{\epsilon}\bR\right)\cdot d\bx &= \oint_{c(\bu)} \left( \frac{1}{2\epsilon}(\epsilon\zeta-h)+\frac{1}{2\eta}B\right)\nabla b\cdot d\bx\\
&= \int\!\!\int_{\partial S = c(\bu)} \zh\cdot\nabla\left( \frac{1}{2\epsilon}(\epsilon\zeta-h)+\frac{1}{2\eta}B\right)\times\nabla b \, dx\,dy\,.
\end{aligned}
\label{l1-kelvin}
\end{equation} 
\end{theorem}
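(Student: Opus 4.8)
The plan is to derive the Kelvin circulation law \eqref{l1-kelvin} as a direct application of the general Kelvin--Noether theorem \ref{thm:Kelvin} to the TL1 Lagrangian \eqref{trsw-lagAH4}, exactly as was done for TRSW itself in Theorem \ref{Kel-thmTRSW}. The key observation is that the theorem \ref{thm:Kelvin} expresses the circulation dynamics entirely in terms of the two variational derivatives $\delta\ell/\delta\bu$ and $\delta\ell/\delta b$, so the whole proof reduces to computing these for $\ell_{TL1}$ and substituting into \eqref{eq:EPeq-circ-thm}.

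First I would read off the circulation integrand. From the Allen--Holm ansatz \eqref{AH-ansatz}--\eqref{trsw-lagAH4} we have $\delta\ell_{TL1}/\delta\bu = \overline{\mathbf{m}}[\eta,b] = \eta\bu_{TG} + \epsilon^{-1}\eta\bR$, so the momentum per unit mass is $\eta^{-1}\delta\ell/\delta\bu = \bu_{TG} + \epsilon^{-1}\bR$. This is precisely the integrand appearing on the left-hand side of \eqref{l1-kelvin}, which confirms that the loop integral $\mathcal{I}$ of Theorem \ref{thm:Kelvin} matches the claimed circulation. Next I would compute $\delta\ell_{TL1}/\delta b$; here is the one subtlety, because in \eqref{trsw-lagAH4} the buoyancy $b$ enters \emph{both} explicitly through the potential-energy term $-\tfrac{1}{2\epsilon^2}(1+\epsilon b)(\eta^2-2\eta h)$ \emph{and} implicitly through $\bu_{TG}=\zh\times\nabla(\zeta+\tfrac12 b)$ inside the kinetic term $-\tfrac12\eta|\bu_{TG}|^2$ and inside $\overline{\mathbf{m}}\cdot\bu$. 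The implicit dependence, after integration by parts (as carried out explicitly in \eqref{1st-var-L1-byParts}), is exactly what produces the higher-order vorticity contribution $B=\zh\cdot\nabla\times(\eta(\bu-\bu_{TG}))$ defined in \eqref{B-def}. Collecting the coefficient of $\delta b$ from \eqref{1st-var-L1-byParts} gives $\delta\ell_{TL1}/\delta b = -\tfrac{1}{2\epsilon}(\eta^2-2\eta h) - \tfrac12\,\zh\cdot\nabla\times(\eta(\bu-\bu_{TG})) = -\tfrac{1}{2\epsilon}(\eta^2-2\eta h) - \tfrac12 B$. Hence the forcing coefficient in Theorem \ref{thm:Kelvin} becomes $-\eta^{-1}\delta\ell/\delta b = \tfrac{1}{2\epsilon}(\epsilon\zeta-h) + \tfrac{1}{2\eta}B$, using $\eta-2h$ rewritten via $\eta=\alpha\zeta+h$ with $\alpha\to\epsilon$ so that $\tfrac{1}{2\epsilon\eta}(\eta^2-2\eta h)=\tfrac{1}{2\epsilon}(\eta-2h)=\tfrac{1}{2\epsilon}(\epsilon\zeta-h)$.

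Substituting these two quantities into the Kelvin--Noether identity \eqref{eq:EPeq-circ-thm} immediately yields the first equality in \eqref{l1-kelvin}, namely $\tfrac{d}{dt}\oint_{c(\bu)}(\bu_{TG}+\epsilon^{-1}\bR)\cdot d\bx = \oint_{c(\bu)}\big(\tfrac{1}{2\epsilon}(\epsilon\zeta-h)+\tfrac{1}{2\eta}B\big)\nabla b\cdot d\bx$. The second equality is then obtained purely by Stokes' theorem, converting the loop integral of $(\text{scalar})\nabla b\cdot d\bx$ into a surface integral of the curl; since $\nabla\times(g\nabla b)=\nabla g\times\nabla b$ for any scalar $g$, the area-integral form $\iint \zh\cdot\nabla(\cdots)\times\nabla b\,dx\,dy$ follows exactly as in Corollary \ref{KN-Stokes} and in \eqref{Kel-trsw}.

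The main obstacle is the correct and rigorous bookkeeping of the implicit $b$-dependence through $\bu_{TG}$: one must be careful that the variation $\delta\bu_{TG}$ in \eqref{uTG-var} feeds into the kinetic and momentum terms and that the resulting boundary integral in \eqref{1st-var-L1-byParts} genuinely vanishes. I would rely on the stated fact that the ageostrophic field $\bu-\bu_{TG}$ has a velocity-potential representation \eqref{Invert-Hodge} with no tangential boundary component, so $\oint_{\partial\cal D}\zh\times(\eta(\bu-\bu_{TG}))\cdot d\bx$ drops out; this is the step where the TL1 forcing correctly acquires the $B$-term rather than the bare TRSW expression. Once that is settled, the remainder is the routine substitution and Stokes' argument described above, so no genuinely new estimate is needed beyond Theorem \ref{thm:Kelvin}.
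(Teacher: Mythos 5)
Your proposal is correct and follows essentially the same route as the paper, whose one-line proof simply invokes the Kelvin--Noether theorem \ref{thm:Kelvin}: you supply exactly the bookkeeping the paper delegates to its earlier derivation, namely reading off $\eta^{-1}\delta\ell_{TL1}/\delta\bu = \bu_{TG}+\epsilon^{-1}\bR$ and extracting $\delta\ell_{TL1}/\delta b = -\tfrac{1}{2\epsilon}(\eta^2-2\eta h)-\tfrac12 B$ from \eqref{1st-var-L1-byParts}, including the implicit $b$-dependence through $\bu_{TG}$ and the vanishing boundary term. The substitution into \eqref{eq:EPeq-circ-thm} and the concluding Stokes step match the paper's intended argument, so there is nothing to add.
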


\begin{proof}
This result follows from the Kelvin--Noether theorem  \ref{thm:Kelvin} for Euler--Poincar\'e fluid equations. 
\end{proof}

\begin{remark}\label{rem:kel-tl1}
The Kelvin circulation theorem \ref{Kel-thm-TL1} for the deterministic TL1 model  \eqref{thermal-l1} implies that the misalignment of the horizontal gradients of the free surface elevation and the bathymetry with the horizontal gradient of the buoyancy generates circulation. This result is similar to the corresponding Kelvin circulation theorem \ref{Kel-thmTRSW} for the deterministic thermal rotating shallow water (TRSW) model. An additional contribution to the generation of circulation relative to theorem \ref{Kel-thmTRSW} is made by the misalignment of the gradient of the quantity $(B/\eta)$ defined in \eqref{B-def} with the gradient of the buoyancy. This additional contribution is due to misalignment of the horizontal gradients of the ageostrophic vorticity and the buoyancy.
\end{remark}
\medskip

\noindent The potential vorticity $q$ for TL1 is defined as
\begin{equation}
q = \frac{1}{\eta}\left(\omega + \frac{1}{\epsilon}f\right) = \frac{1}{\eta}\left(\Delta\Big(\zeta + \frac{1}{2}b\Big)+\frac{1}{\epsilon} + f_1\right)
. 
\label{def:pvtL1}
\end{equation} 
Note that the potential vorticity $q$ in \eqref{def:pvtL1} contains a term in the Coriolis parameter which is order $\mathcal O(\epsilon^{-1})$. This feature will become important in the asymptotic expansion of $q$ later, in deriving the thermal QG model at the beginning of section \ref{sec:TQGmodel}. In the presence of buoyancy, the evolution of potential vorticity along Lagrangian fluid trajectories is not conserved. Instead, potential vorticity is generated, as indicated in the circulation theorem \eqref{l1-kelvin} via misalignment of gradients in 
\begin{equation}
\frac{\partial}{\partial t}q + \bu\cdot\nabla q = \frac{1}{\eta}\zh\cdot\nabla\left( \frac{1}{2\epsilon}(\epsilon\zeta-h)+\frac{1}{2\eta}B\right)\times\nabla b.
\label{eq:l1pv}
\end{equation}
Although the potential vorticity is not conserved along Lagrangian fluid trajectories, the TL1 equations do preserve energy, as well as an infinity of integral conservation laws involving buoyancy and potential vorticity.
\paragraph{Conservation laws for deterministic TL1.} The deterministic TL1 equations \eqref{thermal-l1} conserve the energy
\begin{equation}
\mathcal E_{TL1}(\bu_{TG},\eta,b) =  \frac12 \int_{\cal D} \eta |\bu_{TG}|^2 + \frac{1}{\epsilon^2}(1+\epsilon b)(\eta^2-2\eta h)\,dx\,dy\,.
\label{eq:erg-tL1}
\end{equation}
Equations \eqref{thermal-l1} also conserve an infinity of integral conservation laws, determined by two arbitrary differentiable functions of buoyancy $\Phi(b)$ and $\Psi(b)$ as
\begin{equation}
C_{\Phi,\Psi} = \int_{\cal D} \eta\Phi(b) + \varpi\Psi(b) \,dx\,dy = \int_{\cal D} \Big( \Phi(b) + q \Psi(b)\Big)\eta\,dx\,dy\,,
\label{eq:casimirstL1}
\end{equation}
where $\omega$ and $q$ are defined in equation \eqref{def:pvtL1}. Notice that this family of integral conservation laws for the TL1 equations has the same form as the family of integral conserved quantities for the TRSW equations, defined in \eqref{eq:casimirstrsw}. The proof that $C_{\Phi,\Psi}$ is conserved in time, for any choice of differentiable $\Phi$ and $\Psi$, follows from a direct calculation involving the boundary conditions \eqref{eq:bdytl1}.
The integral conserved quantities in equations \eqref{eq:erg-tqg} and \eqref{eq:casimirstL1} are associated with smooth transformations which leave invariant the Eulerian fluid quantities in the Lagrangian. 
As with the TRSW equations, upon introducing stochasticity via the Euler--Poincar\'e theorem \ref{thm:SEP}, the latter conservation laws persist. However, energy conservation does not persist because the stochastic Lagrangian depends explicitly on time through the Brownian motion. 

In order to use the TL1 equations \eqref{thermal-l1} as a predictive model, one needs to be able to determine the Lagrange multiplier $\bu$ from the other variables in the model. This will be our next task. 

\subsubsection{Determining the Lagrange multiplier}
By operating with $\zh\times$ on the momentum equation in \eqref{thermal-l1} and using the definition of $\bu_{TG}$ in \eqref{trsw-tg}, we obtain
\begin{equation}
-\frac{\partial}{\partial t}\nabla\left(\frac{1}{\epsilon}\eta+\frac{1}{2} b\right) - \eta q \bu = \zh\times\nabla\left(\frac{1}{\epsilon}(1+\epsilon b)\zeta + \frac{1}{2}|\bu_{TG}|^2 + \frac{1}{\epsilon} B\right) - \left(\frac{1}{2\epsilon}(\epsilon\zeta-h)+\frac{1}{2\eta}B\right)\zh\times\nabla b\,.
\label{eq:rearranged-l1-q1}
\end{equation}
The first term above follows from using the definition of $\bu_{TG}$, by noting that the bathymetry has no time derivative. This allows us to rewrite $\bu_{TG}$ in terms of $\eta$ and $b$. Taking the time derivative through the gradient in \eqref{eq:rearranged-l1-q1} allows us to use the continuity equation and the advection equation to obtain an elliptic equation. Substituting the definition for $B$ in \eqref{B-def} then leads to the following equation which determines the Lagrange multiplier $\bu$,
\begin{equation}
\begin{aligned}
\frac{1}{\epsilon}\nabla(\nabla\cdot\eta\bu) + \frac{1}{2}\nabla(\bu\cdot\nabla b) &+ \zh\times \nabla\Big(\frac{1}{\epsilon}(1+\epsilon b)\zeta + \frac{1}{2}|\bu_{TG}|^2 + \frac{1}{\epsilon} \zh\cdot\nabla\times\Big(\eta(\bu-\bu_{TG})\Big)\Big)
\\&\quad
- \Big(\frac{1}{2\epsilon}(\epsilon \zeta-h)+\frac{1}{2\eta}\zh\cdot\nabla\times\Big(\eta(\bu-\bu_{TG})\Big)\Big)\zh\times\nabla b - \eta q\bu = 0\,.
\end{aligned}
\label{l1-diagnostic}
\end{equation}
Before going to the general case, let us consider the case in which the horizontal gradient of buoyancy \emph{vanishes}. 
\paragraph{No horizontal buoyancy gradients.} In this case, equation \eqref{l1-diagnostic} reduces to
\begin{equation}
\frac{1}{\epsilon}\nabla(\nabla\cdot\eta\bu) + \zh\times\nabla\Big(\frac{1}{\epsilon}\zeta + \frac{1}{2}|\bu_{TG}|^2 + \frac{1}{\epsilon}\zh\cdot\nabla\times\Big(\eta(\bu-\bu_{TG})\Big)\Big) -  \eta q\bu = 0\,.
\label{eq:AH}
\end{equation}
This is the diagnostic partial differential equation used to determine $\bu$ in \cite{salmon1983practical} when variations in bathymetry are absent and it is identical to equation (3.16) in \cite{allen1996extended}. After applying the identity 
\begin{equation}
\Delta \bu = \nabla^\perp(\nabla^\perp\cdot\bu) + \nabla(\nabla\cdot\bu)
\label{eq:Laplace-identity}
\end{equation}
in equation \eqref{eq:AH}, we can rewrite the diagnostic equation \eqref{l1-diagnostic} in simpler form. Here we have used the perpendicular "$\perp$" notation for brevity, see remark \ref{rem:perpnotation}. The Laplacian identity \eqref{eq:Laplace-identity} implies that the equation which determines $\bu$ is a linear non-autonomous elliptic partial differential equation (PDE), given by
\begin{equation}
\frac{1}{\epsilon}\Delta(\eta\bu) -  \eta q \bu = -\nabla^\perp\left(\frac{1}{\epsilon}\zeta + \frac{1}{2}|\bu_{TG}|^2 - \frac{1}{\epsilon}\nabla^\perp\cdot\eta\bu_{TG}\right) .
\label{eq:inhomhelmholtz}
\end{equation}
Note that the coefficient $\eta q$ in \eqref{eq:inhomhelmholtz} is the \emph{total vorticity}, since $\eta q \bu = (\frac{1}{\epsilon} +f_1+\Delta\zeta)\bu$. The solution behaviour of the elliptic equation \eqref{eq:inhomhelmholtz} for the quantity $\eta\bu$ depends on the sign of the potential vorticity, $q$, in the following three cases
\begin{enumerate}
\item $q >0$. The equation for $\eta \bu$ is a screened Poisson equation.
\item $q <0$. The equation for $\eta \bu$ is an inhomogeneous Helmholtz equation.
\item $q =0$. The equation for $\eta \bu$ is a Poisson equation.
\end{enumerate}
In the situation being considered at the moment, there are no horizontal buoyancy gradients. Consequently, the potential vorticity $q$ is preserved along Lagrangian particle trajectories and does not change sign during the calculation. This means that such changes in the solution behaviour of \eqref{eq:inhomhelmholtz} do not occur. Thus, the `equator', where $f$ changes sign, acts as a boundary between the `northern and southern hemispheres', in the absence of horizontal buoyancy gradients. In this case, Lagrangian particles which start in the northern hemisphere stay in the northern hemisphere, because their potential vorticity is conserved in the absence of horizontal buoyancy gradients. The case with non-zero horizontal buoyancy gradients is the general case, which we will discuss now.

\paragraph{General case.} 
When horizontal gradients of buoyancy are nonzero, equation \eqref{eq:inhomhelmholtz} for the determination of the Lagrange multiplier $\bu$ becomes considerably more extensive
\begin{equation}
\begin{aligned}
\frac{1}{\epsilon}\Delta(\eta\bu) &+ \frac{1}{2}\nabla(\bu\cdot\nabla b)
- \frac{1}{2}(\nabla^\perp\cdot\bu)\nabla^\perp b -\frac{1}{2\eta} (\bu\cdot\nabla^\perp\eta)\nabla^\perp b
- \eta q \bu 
=  
- \frac{1}{\epsilon}\nabla^\perp \big((1+\epsilon b)\zeta\big)
+ \frac{1}{2\epsilon}(\epsilon\zeta-h)\nabla^\perp b
\\&\quad
- \frac{1}{2}\nabla^\perp|\bu_{TG}|^2- \frac{1}{2\eta}\left(\nabla^\perp\cdot\eta\bu_{TG}\right)\nabla^\perp b  + \frac{1}{\epsilon}\nabla^\perp\nabla^\perp\cdot\Big(\eta\bu_{TG}\Big) .
\end{aligned}
\label{TL1-soln}
\end{equation}
The coefficient for the $\bu$ term now has an additional contribution from the perpendicular gradient of the buoyancy, which changes the conditions for the type of PDE. The zeroth order terms in the presence of horizontal buoyancy gradients indicate that the equator is no longer a stationary boundary between the northern and southern hemispheres. Indeed, the perpendicular gradients of buoyancy in combination with perpendicular gradients of the depth have removed the `equatorial boundary'. Likewise, when the horizontal gradient of buoyancy is included, the potential vorticity is not conserved along Lagrangian particle trajectories. Moreover, the effects of the first order terms at this point remain unexamined. At this point, we shall defer further discussion of these elliptic equations until section \ref{sec:TQGmodel} and leave the discussion of the interpretation of the effects of horizontal buoyancy gradients on the solution behaviour of the elliptic equation \eqref{TL1-soln} for the quantity $\eta\bu$ for the TL1 model as an open problem. 
\medskip

\noindent Before substituting the asymptotic expansions and truncating the equations to achieve thermal geostrophic balance in terms of strict asymptotics in section \ref{sec:TQGmodel}, we will first derive the \emph{stochastic} thermal L1 equations. 

\section{The Eliassen approximation of stochastic TRSW} \label{sec:stochEliassen}
The equation sets for the deterministic and stochastic TRSW models in section \ref{DTRSW-sec} and the deterministic TL1 model in the previous section have all been derived in the variational framework of the Euler--Poincar\'e theorem introduced in section \ref{sec-EPframework}. The corresponding Kelvin circulation laws for each of these theories follows from their Kelvin--Noether theorem \ref{thm:Kelvin}, proved in section \ref{sec-KNthm}. Let us now derive the stochastic version of the TL1 equations and their corresponding Kelvin circulation law by following the SALT formulation in the Euler--Poincar\'e variational framework. To do so, we first investigate the balance relation in the presence of stochasticity.

\subsection{Stochastic thermal geostrophic balance}
To obtain the stochastic thermal geostrophic balance, we start from the TRSW equations with SALT, given in \eqref{eq:STRSW} by
\begin{equation}
\begin{aligned}
{\sf d}\bu + ({\sf d}\boldsymbol \chi_t\cdot\nabla)\bu  + \sum_k(\nabla \boldsymbol \xi_k)\cdot\bu\circ dW_t^k 
&= - \frac{\alpha}{{\rm Fr}^2}\nabla\big((1+\mathfrak{s}b) \zeta\big)\,dt + \frac{\mathfrak{s}}{2\,{\rm Fr}^2}(\alpha\zeta-h)\nabla b\,dt 
\\&\qquad 
- \frac{1}{{\rm Ro}}f\hat{\mathbf{z}}\times {\sf d}\boldsymbol \chi_t - \frac{1}{{\rm Ro}}\sum_k\nabla(\boldsymbol \xi_k\cdot\mathbf{R})\circ dW_t^k
\,,\\
{\sf d}\eta + \nabla\cdot(\eta\, {\sf d}\boldsymbol \chi_t) &= 0
\,,\\
{\sf d} b + ({\sf d}\boldsymbol \chi_t\cdot\nabla)b &= 0\,
\end{aligned}
\label{eq:STRSW-tgb}
\end{equation}
with boundary conditions in \eqref{eq:bdySTRW}. We recall the assumptions that led to deterministic thermal geostrophic balance. As before, we assume that the magnitudes of the dimensionless numbers in the problem is $\mathcal O(\alpha) = \mathcal O(\mathfrak{s}) = \mathcal O({\rm Ro}) = \mathcal O({\rm Fr})$. This asymptotic regime allows us to continue with a single small parameter, $\epsilon$. So, we formulate \eqref{qg-assumptions} as
\begin{equation}
\begin{aligned}
f(\bx) &= 1 + \epsilon f_1(\bx),\\
h(\bx) &= 1 + \epsilon h_1(\bx),\\
\bR(\bx) &= \bR_0(\bx)+\epsilon \bR_1(\bx),
\end{aligned}
\label{qg-assumptions2}
\end{equation}
where the additional relations $\zh\cdot\nabla\times\bR_0=1$ and $\zh\cdot\nabla\times\bR_1=f_1$ hold. Upon substituting the asymptotic expansions \eqref{qg-assumptions2} into the stochastic TRSW equations \eqref{eq:STRSW-tgb} and collecting all terms of $\mathcal O(\epsilon^{-1})$, we find
\begin{equation}
\left(\nabla \zeta + \frac{1}{2}\nabla b  + \zh\times\bu \right)\,dt + \sum_k\big(\zh\times\boldsymbol \xi_k + \nabla(\boldsymbol \xi_k\cdot\bR_0)\big)\circ dW_t^k = 0\,.
\end{equation}
The drift part of this stochastic partial differential equation is the deterministic thermal geostrophic balance \eqref{trsw-bal} and the diffusion part provides us with a relation between the noise amplitude $\boldsymbol \xi_k$ and the vector potential for the Coriolis parameter,
\begin{equation}
\begin{aligned}
\bu &= \zh\times\nabla\left(\zeta + \frac{1}{2}b\right) + \mathcal O(\epsilon),\\
\boldsymbol \xi_k &= \zh\times\nabla(\boldsymbol \xi_k\cdot\bR_0)+ \mathcal O(\epsilon)\,.
\end{aligned}
\label{eq:stochasticgeostrophicbalance}
\end{equation}
Since the Brownian motions are assumed to be independent, \eqref{eq:stochasticgeostrophicbalance} needs to be satisfied for each $k$. We can identify the thermal geostrophic balance velocity field as $\bu_{TG} = \zh\times\nabla(\zeta + \frac{1}{2}b)$ and expand the velocity field $\bu$ as in the deterministic case
\begin{equation}
\bu = \bu_{TG} + \mathcal O(\epsilon),
\end{equation}
and following the same reasoning, the $\boldsymbol \xi_k$ can be expanded as
\begin{equation}
\boldsymbol \xi_k = \boldsymbol \xi_{TG_k} + \mathcal O(\epsilon).
\end{equation}
We can now investigate the stochastic thermal L1 model.

\subsection{Stochastic TL1}
The equations governing the stochastic TL1 model are obtained in the SALT formulation by applying the stochastic Euler--Poincar\'e theorem \ref{thm:SEP} to the TL1 Lagrangian, given by \eqref{trsw-lagAH4}. This incorporates the deterministic geostrophic balance. We find the stochastic version of the TL1 equations \eqref{thermal-l1}, given by
\begin{equation}
\begin{aligned}
{\sf d} \bu_{TG} - {\sf d}\boldsymbol \chi_t &\times \left( \nabla\times\Big(\bu_{TG}+\frac{1}{\epsilon}\bR\Big)\right)
+ \nabla\left(\frac{1}{\epsilon}(1+\epsilon b)\zeta + \frac{1}{2}|\bu_{TG}|^2 + \frac{1}{\epsilon} B\right)\,dt
\\&\quad
+ \nabla\left(\sum_k \boldsymbol \xi_k\cdot \Big(\bu_{TG}+\frac{1}{\epsilon}\bR\Big)\right)\circ dW_t^k- \left(\frac{1}{2\epsilon}(\epsilon\zeta-h)+\frac{1}{2\eta}B\right)\nabla b\,dt = 0\,.
\\
{\sf d} \eta + \nabla\cdot\big(\eta\,{\sf d}\sym{\chi}_t\big) &= 0,\\
{\sf d} b + {\sf d}\sym{\chi}_t\cdot\nabla b &= 0\,.
\end{aligned}
\label{Stoch-thermal-l1}
\end{equation}
Here, the function $B$ is defined by
\begin{equation}
B := \zh\cdot\nabla\times\Big(\eta\big(\bu-\bu_{TG}\big)\Big)\,.
\label{dB-def}
\end{equation}
The boundary conditions are given by
\begin{equation}
\mathbf{\hat{n}}\cd \bu=0
\quad\hbox{and}\quad
\mathbf{\hat{n}}\cd \boldsymbol{\xi}_k = 0
\quad\hbox{and}\quad
\mathbf{\hat{n}}\times\nabla b = 0
\quad\hbox{on the boundary $\partial{\cal D}$.}
\label{eq:bdySTL1}
\end{equation} 
The Kelvin--Noether theorem \ref{thm:Kelvin} for the stochastic TL1 model is given by the following theorem.
\medskip

\begin{theorem}[Kelvin theorem for the stochastic TL1 model]
The stochastic TL1 equations \eqref{Stoch-thermal-l1} imply the following Kelvin circulation law
\begin{equation}
\begin{aligned}
{\sf d}\oint_{c({\sf d}\sym{\chi}_t)} \left(\bu_{TG} + \frac{1}{\epsilon}\bR\right)\cdot d\bx 
&= \oint_{c({\sf d}\sym{\chi}_t)} \left( \frac{1}{2\epsilon}(\epsilon\zeta-h) + \frac{1}{2\eta} B\right)\nabla b\cdot d\bx\,dt
\\
&= \int\!\!\int_{\partial S = c({\sf d}\sym{\chi}_t)} \zh\cdot\nabla\left( \frac{1}{2\epsilon}(\epsilon\zeta-h)
+\frac{1}{2\eta} B\right)\times\nabla b \, dx\,dy\,dt\,.
\end{aligned}
\label{StochL1-kelvin}
\end{equation} 
\end{theorem}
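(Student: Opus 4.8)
The plan is to deduce this circulation law as a direct corollary of the stochastic Kelvin--Noether theorem, exactly as the deterministic TL1 law \eqref{l1-kelvin} followed from Theorem \ref{thm:Kelvin} and the stochastic TRSW law \eqref{EP-Kelvin-Noether-trsw} was established in Theorem \ref{KelThm-stochTRSW}. The only inputs I need are the two effective variational derivatives of the TL1 Lagrangian \eqref{trsw-lagAH4}. First, $\delta\ell_{TL1}/\delta\bu = \eta\bu_{TG} + \epsilon^{-1}\eta\bR$, so the momentum per unit mass in the circulation integrand is $\eta^{-1}\delta\ell_{TL1}/\delta\bu = \bu_{TG} + \epsilon^{-1}\bR$. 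Second, the effective buoyancy derivative, read off from the coefficient of $\delta b$ in the integrated-by-parts first variation \eqref{1st-var-L1-byParts}, is $\eta^{-1}\delta\ell_{TL1}/\delta b = -\big(\tfrac{1}{2\epsilon}(\epsilon\zeta-h) + \tfrac{1}{2\eta}B\big)$, with $B$ given in \eqref{dB-def}. Substituting these into the stochastic Kelvin--Noether identity ${\sf d}\oint_{c({\sf d}\boldsymbol\chi_t)} \eta^{-1}(\delta\ell/\delta\bu)\cdot d\bx = -\oint_{c({\sf d}\boldsymbol\chi_t)} \eta^{-1}(\delta\ell/\delta b)\,\nabla b\cdot d\bx\,dt$ yields the first line of \eqref{StochL1-kelvin}, and Stokes' theorem converts the loop integral of the forcing into the surface integral in the second line.

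To keep the argument self-contained I would run the loop-transport computation directly on the momentum equation in \eqref{Stoch-thermal-l1}. Writing $\mathbf{w} := \bu_{TG} + \epsilon^{-1}\bR$ and using that $\bR$ is time-independent so ${\sf d}\mathbf{w} = {\sf d}\bu_{TG}$, the first equation of \eqref{Stoch-thermal-l1} is in the stochastic Cartan form of \eqref{eq:SEPeq-Cartan},
\begin{equation}
{\sf d}\mathbf{w} - {\sf d}\boldsymbol\chi_t\times(\nabla\times\mathbf{w}) = -\nabla P\,dt - \nabla\Big(\sum_k\boldsymbol\xi_k\cdot\mathbf{w}\Big)\circ dW_t^k + F\,\nabla b\,dt,
\end{equation}
with $P = \epsilon^{-1}(1+\epsilon b)\zeta + \tfrac12|\bu_{TG}|^2 + \epsilon^{-1}B$ and $F = \tfrac{1}{2\epsilon}(\epsilon\zeta-h) + \tfrac{1}{2\eta}B$. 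Pulling the integrand back to the initial loop along the stochastic flow of ${\sf d}\boldsymbol\chi_t$ (legitimate because Stratonovich calculus obeys the ordinary product and chain rules, as in the proof of Theorem \ref{KelThm-stochTRSW}), the differential passes through the integral and produces the stochastic Lie derivative
\begin{equation}
{\sf d}\oint_{c({\sf d}\boldsymbol\chi_t)}\mathbf{w}\cdot d\bx = \oint_{c({\sf d}\boldsymbol\chi_t)}\Big({\sf d}\mathbf{w} - {\sf d}\boldsymbol\chi_t\times(\nabla\times\mathbf{w}) + \nabla({\sf d}\boldsymbol\chi_t\cdot\mathbf{w})\Big)\cdot d\bx.
\end{equation}
The added exact term $\nabla({\sf d}\boldsymbol\chi_t\cdot\mathbf{w})$ supplies precisely the noise gradient $\nabla(\sum_k\boldsymbol\xi_k\cdot\mathbf{w})\circ dW_t^k$, cancelling the matching term from the momentum equation; every remaining contribution except $F\,\nabla b\,dt$ is a perfect gradient and integrates to zero around the closed loop, leaving the first equality of \eqref{StochL1-kelvin}.

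The main obstacle, and the only place genuine care is required, is the Allen--Holm character of the TL1 model: the loop $c({\sf d}\boldsymbol\chi_t)$ is transported with drift velocity $\bu$, the Lagrange multiplier, whereas the integrand is the diagnostic momentum $\bu_{TG}+\epsilon^{-1}\bR$, and these differ at $\mathcal{O}(\epsilon)$. I must therefore confirm that the $\delta\ell_{TL1}/\delta b$ entering the forcing is the effective derivative obtained \emph{after} accounting for the dependence of $\bu_{TG}$ on $(\eta,b)$ through \eqref{uTG-var}; it is exactly the integration by parts of the $\delta\bu_{TG}$ contribution in \eqref{1st-var-L1-byParts} that generates the curl term $B$ of \eqref{dB-def}, hence the extra $\tfrac{1}{2\eta}B$ piece of $F$ that distinguishes \eqref{StochL1-kelvin} from the TRSW law \eqref{EP-KN-trsw}. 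A secondary check is that no It\^o correction intrudes: because ${\sf d}\boldsymbol\chi_t$ is a Stratonovich semimartingale and the SALT transport places the noise inside the Lie-derivative operator, the noise terms organise into the single exact differential that cancels above, with no residual quadratic-variation drift.
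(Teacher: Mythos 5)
Your proposal is correct and takes essentially the same route as the paper: the paper's proof simply states that the result follows the pattern of the Kelvin--Noether theorem \ref{thm:Kelvin}, modulo the Kunita--It\^o--Wentzell chain rule for stochastic Lie transport from \cite{de2020implications}, which is exactly what you carry out by substituting the TL1 variational derivatives (momentum $\eta\bu_{TG}+\epsilon^{-1}\eta\bR$ and the effective $\delta\ell_{TL1}/\delta b$ with its $\tfrac{1}{2\eta}B$ piece from \eqref{1st-var-L1-byParts}) into the stochastic Kelvin--Noether identity and then verifying via the Stratonovich pull-back computation in the manner of Theorem \ref{KelThm-stochTRSW}. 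Your explicit Cartan-form calculation, the Allen--Holm caveat about the loop moving with the Lagrange multiplier $\bu$ while the integrand carries the diagnostic momentum, and the check that no It\^o correction survives merely spell out details the paper delegates to its references.
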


\begin{proof}
The proof follows the pattern of the standard Kelvin--Noether theorem \ref{thm:Kelvin}, modulo an application of the Kunita--It\^o--Wentzell theorem which provides the chain rule for the Lie derivatives of differential forms by stochastic vector fields, as proved in \cite{de2020implications}. 
The loop $c({\sf d}\sym{\chi}_t)$ does not explicitly require the evaluation of a stochastic integral, as it is the push-forward of a stationary loop by the flow which is generated by the vector field ${\sf d}\sym{\chi}_t$, see remark \ref{rem:kelvin}. 
\end{proof}
\medskip

\noindent The stochastic TL1 equations do not conserve energy due to their explicit dependence on time via the noise. From the Kelvin circulation theorem associated to the stochastic TL1 equations, an evolution equation for potential vorticity can be derived. This equation shows that potential vorticity is not conserved along Lagrangian particle trajectories, but is generated by the effect also present on the right hand side in the Kelvin circulation theorem \ref{StochL1-kelvin}. Recall that the potential vorticity $q$ is defined by 
\begin{equation}
q = \frac{1}{\eta}\left(\omega + \frac{1}{\epsilon}f\right) = \frac{1}{\eta}\left(\Delta\Big(\zeta + \frac{1}{2}b\Big)+\frac{1}{\epsilon} + f_1\right)
. 
\label{def:pvSTL1}
\end{equation} 
The evolution equation for $q$ is given by
\begin{equation}
{\sf d}q + {\sf d}\boldsymbol \chi_t\cdot\nabla q = \frac{1}{\eta}\zh\cdot\nabla\left( \frac{1}{2\epsilon}(\epsilon\zeta-h)
+\frac{1}{2\eta}B\right)\times\nabla b\,dt\,.
\end{equation}
Even though potential vorticity is not a Lagrangian invariant, the stochastic TL1 equations have an infinite family of integral conservation laws, given by
\begin{equation}
C_{\Phi,\Psi} = \int_{\cal D} \eta\Phi(b) + \eta q\Psi(b) \,dx\,dy\,.
\end{equation}
The proof for these conservation laws is a direct calculation that uses the boundary conditions \eqref{eq:bdySTL1}. In order to use \eqref{Stoch-thermal-l1} as a predictive model, one must be able to determine $\bu$ from the other variables in the model. We can proceed as in the deterministic case and derive an elliptic equation for $\bu$.

\subsubsection{Determining the Lagrange multiplier}
By operating with $\zh\times$ on the momentum equation in \eqref{Stoch-thermal-l1} and using the definition of $\bu_{TG}$ \eqref{trsw-tg}, we obtain
\begin{equation}
\begin{aligned}
{\sf d}\nabla\left(\frac{1}{\epsilon}\eta + \frac{1}{2}b\right) - \eta q\, {\sf d}\boldsymbol \chi_t &= \zh\times\nabla\left(\frac{1}{\epsilon}(1+\epsilon b)\zeta + \frac{1}{2}|\bu_{TG}|^2 + \frac{1}{\epsilon} B\right)dt + \nabla\left(\sum_k \boldsymbol \xi_k\cdot \Big(\bu_{TG}+\frac{1}{\epsilon}\bR\Big)\right)\circ dW_t^k
\\&\quad
- \left(\frac{1}{2\epsilon}(\epsilon\zeta-h)+\frac{1}{2\eta}B\right)\zh\times\nabla b\,dt\,.
\end{aligned}
\end{equation}
We continue by taking the stochastic differential through the gradient and substitute the continuity equation and the advection equation for the buoyancy from \eqref{Stoch-thermal-l1}. By using the definition for $B$ in \eqref{dB-def}, we can then use the vector calculus identity $\Delta \bu = \nabla^\perp\nabla^\perp\cdot \bu + \nabla\nabla\cdot \bu$. This leads to two linear non-autonomous elliptic partial differential equations, one for the drift part and one for the diffusion part. The elliptic equation for the drift part is given by
\begin{equation}
\begin{aligned}
\frac{1}{\epsilon}\Delta(\eta\bu) &+ \frac{1}{2}\nabla(\bu\cdot\nabla b)
- \frac{1}{2}(\nabla^\perp\cdot\bu)\nabla^\perp b -\frac{1}{2\eta} (\bu\cdot\nabla^\perp\eta)\nabla^\perp b
- \eta q \bu 
=  
- \frac{1}{\epsilon}\nabla^\perp \big((1+\epsilon b)\zeta\big)
+ \frac{1}{2\epsilon}(\epsilon\zeta-h)\nabla^\perp b
\\&\quad
- \frac{1}{2}\nabla^\perp|\bu_{TG}|^2- \frac{1}{2\eta}\left(\nabla^\perp\cdot\eta\bu_{TG}\right)\nabla^\perp b  + \frac{1}{\epsilon}\nabla^\perp\nabla^\perp\cdot\Big(\eta\bu_{TG}\Big),
\end{aligned}
\label{eq:stl1-drift}
\end{equation}
and the elliptic equation for the diffusion part is given by
\begin{equation}
\frac{1}{\epsilon}\nabla(\nabla\cdot\eta\boldsymbol \xi_k) + \frac{1}{2}\nabla(\boldsymbol\xi_k\cdot\nabla b) + \nabla^\perp\left(\boldsymbol \xi_k\cdot\Big(\bu_{TG}+\frac{1}{\epsilon}\bR\Big)\right) - \eta q\boldsymbol \xi_k = 0.
\label{eq:stl1-diffusion}
\end{equation}
These elliptic equations at leading order provide the deterministic and the stochastic geostrophic balance conditions. 

We will not pursue the properties of the stochastic TL1 equations any further here. 
Instead, we will apply further asymptotic analysis to derive the thermal quasi-geostrophic (TQG) equations from the TL1 model, investigate the properties of their deterministic solutions and then derive the corresponding stochastic TQG equations by using the SALT approach.

\section{Thermal QG model}\label{sec:TQGmodel}
In sections \ref{sec:detEliassen} and \ref{sec:stochEliassen}, we made an approximation to the kinetic energy by assuming that the velocity field can be decomposed into a thermal geostrophic part and a higher order part. This led to the thermal L1 (TL1) equations, given by \eqref{thermal-l1} in the deterministic case, and by \eqref{Stoch-thermal-l1} in the stochastic case. These TL1 equations, in turn, can be approximated further to yield the motion equations for thermal quasi-geostrophy (TQG), which will be the subject of this section. We will start with the deterministic TL1 case, continuing from where we stopped in section \ref{sec:detEliassen} and discuss the solution properties of a numerical example of the TQG equations. We will finish this section by looking into the Hamiltonian formulation of the deterministic TQG equations. We will then use the Hamiltonian framework to derive the stochastic TQG equations.

\subsection{Deterministic TQG model}\label{sec:detTQG}
To obtain the thermal quasi-geostrophic model, one could expand the TL1 equations \eqref{thermal-l1} and find that the continuity equation features the divergence of $\bu$, which can then be solved for by substitution. This approach has the disadvantage that equations are expanded before substitution, which loses accuracy. Instead we follow the derivation for the elliptic equation which determines the Lagrange multiplier $\bu$. By operating with $\zh\times$ on the TL1 momentum equation in \eqref{thermal-l1} and using the definition of $\bu_{TG}$, we obtain
\begin{equation}
\frac{\partial}{\partial t}\nabla\left(\frac{1}{\epsilon}\eta+\frac{1}{2} b\right) + \eta q \bu = \zh\times \left(\nabla\Big(\frac{1}{\epsilon}(1+\epsilon b)\zeta + \frac{1}{2}|\bu_{TG}|^2 + \frac{1}{\epsilon} B\Big) - \Big(\frac{1}{2\epsilon}(\epsilon\zeta-h)+\frac{1}{2\eta}B\Big)\nabla b\right).
\label{eq:rearranged-l1-q12}
\end{equation} 
We now take the divergence of \eqref{eq:rearranged-l1-q12} and substitute the continuity equation for the depth, which yields
\begin{equation}
\begin{aligned}
\frac{\partial}{\partial t}\Delta\Big(\frac{1}{\epsilon}\eta+\frac{1}{2}b\Big)-q\frac{\partial}{\partial t}\eta + \eta\bu\cdot\nabla q &= J\left(\frac{1}{2\epsilon}(\epsilon\zeta-h) + \frac{1}{2\eta}B,b\right)
,\end{aligned}
\label{eq:preTQG}
\end{equation}
so the potential energy term $\partial \eta/\partial t$ also appears in the vorticity equation. 
Equivalently, one can take the two dimensional curl, or the perpendicular divergence, to arrive directly at \eqref{eq:preTQG}. In a moment, we will expand these equations in the asymptotic regime introduced in \eqref{qg-assumptions} and truncate at $\mathcal O(1)$ to obtain the thermal quasi-geostrophic equations (TQG). In the asymptotic expansions to follow, it will be helpful to note that the potential vorticity, defined in \eqref{def:pvtL1}, contains a term that is of order $\mathcal O(\epsilon^{-1})$, since it allows for the substitution of $\zeta$.

\paragraph{Asymptotic expansion to the deterministic TQG regime.}
By means of the asymptotic expansions \eqref{qg-assumptions} which were used to derive an expression for $\bu_{TG}$, one can expand equation \eqref{eq:preTQG} and collect terms of the same order. Truncating at $\mathcal O(1)$ then leads to the following motion equation for thermal quasi-geostrophy (TQG) \cite{warneford2013quasi, zeitlin2018geophysical}
\begin{equation}
\frac{\partial}{\partial t}\left(\Delta\Big(\zeta+\frac{1}{2}b\Big)-\zeta\right) + \bu_{TG}\cdot\nabla\left(\Delta\Big(\zeta+\frac{1}{2}b\Big)+f_1 \right)  = \frac12 \zh\times \nabla (\zeta-h_1)\cdot \nabla b
\,.
\label{eq:PVdyn-TQG1}
\end{equation}
Because the depth equation in \eqref{thermal-l1} was substituted into equation \eqref{eq:preTQG}, the potential energy term $\partial \zeta/\partial t$ remains in the vorticity equation \eqref{eq:PVdyn-TQG1}. In the asymptotic expansion, the deterministic buoyancy equation keeps its form, as
\begin{equation}
\frac{\partial}{\partial t}b + \bu_{TG}\cdot\nabla b = 0,
\label{eq:bTQG}
\end{equation}
and the boundary conditions at this order become
\begin{equation} 
\mathbf{\hat{n}}\cd \bu_{TG}=0
\quad\hbox{and}\quad
\mathbf{\hat{n}}\times\nabla b = 0
\quad\hbox{on the boundary $\partial{\cal D}$.}
\label{eq:bdyTQG}
\end{equation}
Equations \eqref{eq:PVdyn-TQG1} and \eqref{eq:bTQG} together with the boundary conditions \eqref{eq:bdyTQG} form a closed model which approximates the TL1 model \eqref{thermal-l1}. This completes the present derivation of the thermal quasi-geostrophic (TQG) model, cf., \cite{warneford2013quasi, zeitlin2018geophysical}.

The vorticity equation \eqref{eq:PVdyn-TQG1} for TQG can be rewritten in terms of the stream function $\psi:=\zeta+\frac{1}{2}b$ and the Jacobian operator $J(\psi,a):=\zh\cdot \nabla \psi \times \nabla a = \bu_{TG}\cdot\nabla a$ for a function $a(\bx)$ to obtain the more compact form,
\begin{equation}
\frac{\partial}{\partial t}\left(\Delta\psi-\psi+b\right) + J(\psi,\Delta\psi+f_1) = -\,\frac12 J(h_1, b).
\label{eq:qTQG}
\end{equation}
Here, we have added and subtracted $\frac{1}{2}b$ in the time derivative in vorticity equation \eqref{eq:PVdyn-TQG1}. Another $\frac{1}{2}b$ contribution comes from the forcing term on the right hand side, upon replacing the free surface elevation $\zeta$ by the stream function $\psi$ and then using the buoyancy equation \eqref{eq:bTQG}, written now as
\begin{equation}
\frac{\partial}{\partial t}b + J(\psi,b) = 0.
\label{eq:bTQGstream}
\end{equation} 
The boundary conditions are
\begin{equation} 
\mathbf{\hat{n}}\times \nabla\psi=0
\quad\hbox{and}\quad
\mathbf{\hat{n}}\times\nabla b = 0
\quad\hbox{on the boundary $\partial{\cal D}$.}
\label{eq:bdyTQGstream}
\end{equation}
When the bathymetry is flat and the Coriolis parameter is constant, equation \eqref{eq:qTQG} reduces to the TQG equation found in \cite{warneford2013quasi, zeitlin2018geophysical}. Since the Jacobian operator is zero when the arguments are functionally related, it is possible to write the deterministic TQG equation in terms of a type of potential vorticity variable, which we will call $q$. This notation forms a close link between QG without buoyancy and TQG, with
\begin{equation}
q := \Delta \psi - \psi + f_1.
\label{def:qfortqg}
\end{equation}
The definition of $q$ in \eqref{def:qfortqg} allows us to formulate the vorticity equation in \eqref{eq:qTQG} as
\begin{equation}
\frac{\partial}{\partial t}(q+b) + J(\psi, q) = -\frac{1}{2}J(h_1,b),
\label{eq:TQGinqform}
\end{equation}
The formulation in terms of $q$ as in \eqref{eq:TQGinqform} is particularly useful in showing that these equations conserve energy, but also to note that the TQG equations can be related to Rayleigh-B\'enard convection. 
\medskip

\begin{remark}[Rayleigh-B\'enard convection]
One can write \eqref{eq:TQGinqform} in such a way that all terms that depend on the buoyancy variations appear on the right hand side,
\begin{equation}
\begin{aligned}
\frac{\partial}{\partial t}q + J(\psi, q) &= \frac{1}{2}J(\psi, b)+\frac{1}{2}J(\zeta-h_1,b),\\
\frac{\partial}{\partial t}b + J(\psi, b) &= 0.
\end{aligned}
\label{eq:TQGinqform2}
\end{equation}
This notation reveals that the equations \eqref{eq:TQGinqform2} are reminiscent of the ideal Rayleigh-B\'enard convection equations. The Rayleigh-B\'enard convection problem in a vertical $xz$--plane, formulated in terms of vorticity and stream function, is given by
\begin{equation}
\begin{aligned}
\frac{\partial}{\partial t} \widetilde{q} + J(\psi,\widetilde{q}) &= \alpha g T_z,\\
\frac{\partial}{\partial t} T + J(\psi,T) &= 0.
\end{aligned}
\end{equation}
Here $\widetilde{q}=\Delta\psi$ is the vorticity, $T$ is the temperature, $\psi$ is the stream function, $g$ is gravity and $\alpha$ is the thermal expansion coefficient. For the typical Rayleigh-B\'enard convection problem, in the vertical direction there are two solid boundaries and in the horizontal direction, one either uses periodic boundary conditions or solid boundaries. The bottom boundary is heated and the top boundary is cooled, in such a way that the temperature difference is constant. Similar boundary conditions can be established for the thermal quasi-geostrophic equations. The main differences between the two models is that the forcing terms in TQG involve derivatives in every direction, whereas Rayleigh-B\'enard convection only involves derivatives of the temperature in the vertical $z$-direction. Also, TQG is obtained as a model based on thermal geostrophic balance, whereas the Rayleigh-B\'enard model does not impose any balance relation.
\end{remark}

\paragraph{Elliptic equation for TL1.}
Upon substituting the asymptotic expansions \eqref{qg-assumptions} and collecting the leading order terms, the elliptic equation for TL1 \eqref{TL1-soln} reads
\begin{equation}
\begin{aligned}
\frac{1}{\epsilon}\Delta\bu_{TG} - \frac{1}{\epsilon}\bu_{TG} 
=  
- \frac{1}{\epsilon}\nabla^\perp\zeta
+ \frac{1}{2\epsilon}\nabla^\perp b
+ \frac{1}{\epsilon}\nabla^\perp\nabla^\perp\cdot\bu_{TG} + \mathcal O(1) .
\end{aligned}
\label{eq:ellipticTQG}
\end{equation}
By means of the identity $\Delta\bu_{TG} = \nabla\nabla\cdot\bu_{TG} + \nabla^\perp\nabla^\perp\cdot\bu_{TG}$ and using the fact that $\bu_{TG}$ is  divergence-free, we obtain at leading order the definition for $\bu_{TG}=\nabla^\perp(\zeta+\frac12 b)$. At the next order, the elliptic equation will provide an expression for $\bu$ that is consistent with the asymptotic regime. 

\subsection{Numerical TQG example}\label{sec:TQG-numerical example}

We implemented the TQG equations \eqref{eq:TQGinqform2} using finite element methods (FEM) for the spatial variables. The FEM algorithm we used is an adaptation of the algorithm formulated in \cite{bernsen2006femdis}, and was implemented using
Firedrake\footnote{http://www.firedrakeproject.org/index.html}, see \cite{Rathgeber2017}.
In particular, we approximate the vorticity and buoyancy fields in first order discrete Galerkin finite element space, and approximate the stream function in first order continuous Galerkin finite element space.
For the time step, we used an optimal third order strong stability preserving Runge-Kutta method, see \cite{gottlieb2005high,cotter2019numerically}.

Figure \ref{Fig:CourtesyofWeiPan2} shows a snapshot at a certain time taken from
a high resolution numerical run of the TQG equations. In this numerical example, we used the following boundary and initial conditions. The domain is $[0, 2\pi]^2$ discretised at a resolution of $512^2$. The boundary conditions are periodic in the vertical direction and walls in the horizontal direction. The parameters and initial conditions are
\begin{equation}
\begin{aligned}
h_1(x,y) &= \cos x + \frac{1}{2}\cos 2x + \frac{1}{3}\cos 3x\,,\\
f_1(x,y) &= 0\,,\quad\hbox{($f$-plane)}\\
b(x,y,0) &= -\frac{1}{1+\exp(-x+\pi)}+\frac{1}{2}\,,\\
q(x,y,0) &= -\exp(-5(y-\pi)^2)\,,\\
\zeta(x,y,0) &= \psi(x,y,0) - \frac{1}{2}b_1(x,y,0)\,.
\end{aligned}
\end{equation}
The stream function $\psi=\zeta + b/2$ is calculated from the potential vorticity $q$ by means of the elliptic problem given in \eqref{def:qfortqg}.

\begin{figure}[H]
\begin{minipage}{0.5\textwidth}
\centering
\includegraphics[width=.95\textwidth]{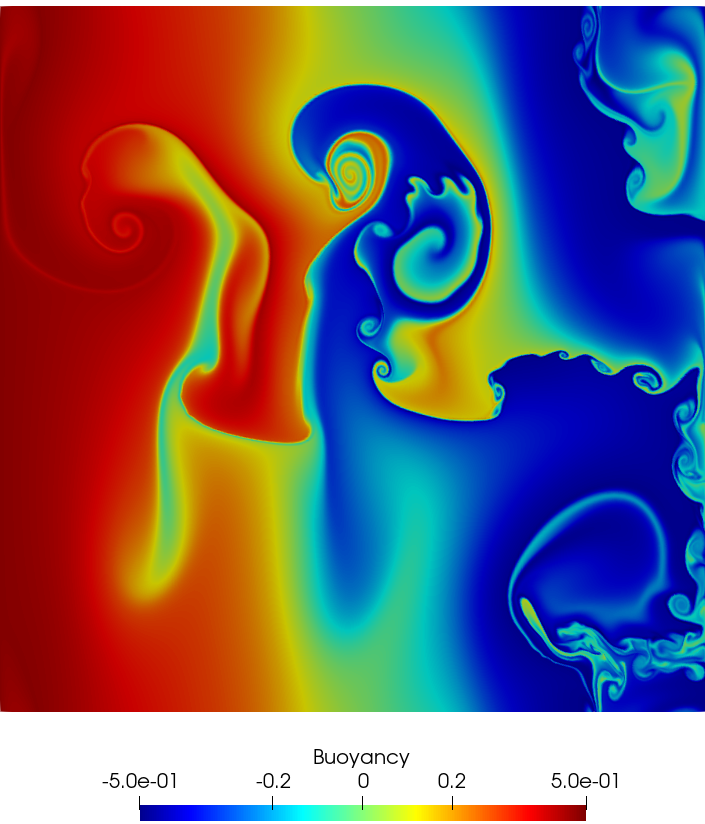}

\vspace{1cm}

\includegraphics[width=.95\textwidth]{ssh1.png}
\end{minipage}
\begin{minipage}{0.5\textwidth}
\centering
\includegraphics[width=.95\textwidth]{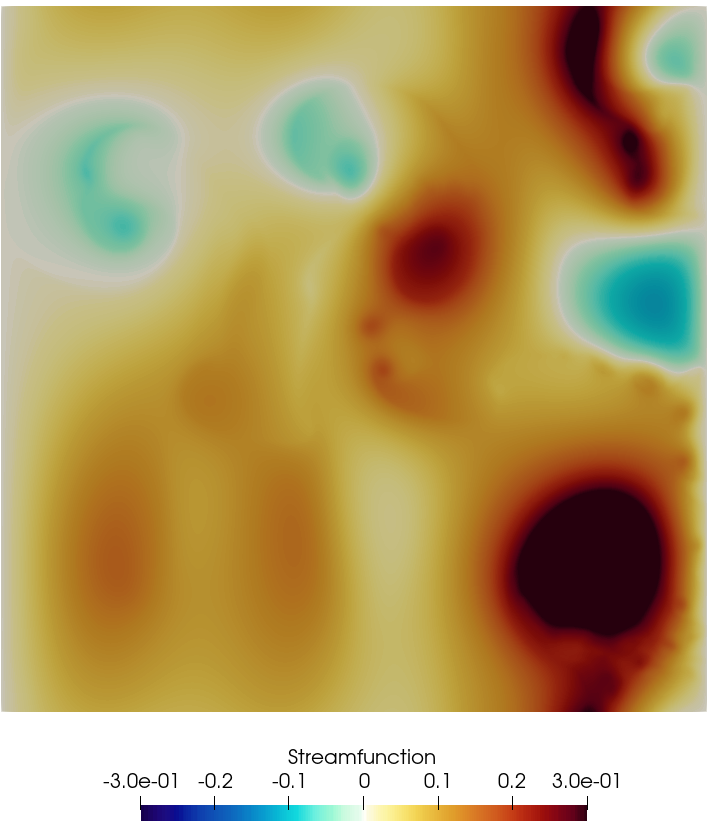}

\vspace{1cm}

\includegraphics[width=.95\textwidth]{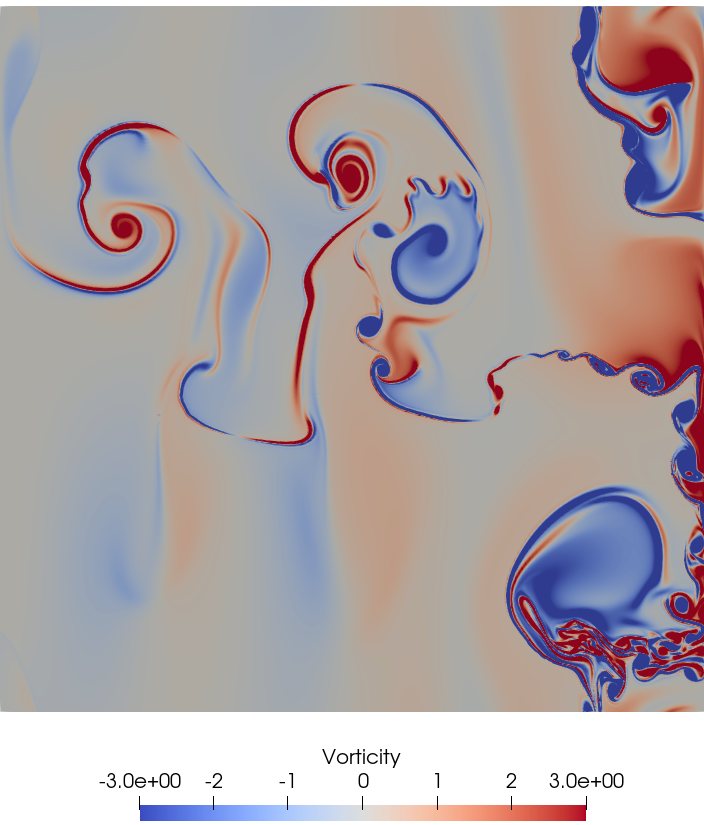}
\end{minipage}
\caption{\footnotesize These images from an evolutionary computational simulation of deterministic TQG illustrate the result of the creation of a buoyancy front from an initial state and its subsequent emergent instabilities which produce a cascade of horizontal circulations to smaller scales. In the top left, the buoyancy profile is shown and in the bottom left, the sea surface height is shown. In the top right, the stream function is shown and in the bottom right, the vorticity is shown. The domain is periodic at the upper and lower boundaries, while the stream functions are  constant and equal on the lateral boundaries. The bathymetry varies in the lateral direction as a sum of cosines with wavenumbers $k=1, 2, \hbox{ and } 3$. The initial condition had a Gaussian-profile strip of vorticity along the lateral mid-line and the buoyancy began with a $k = 1$ sine profile in the lateral direction. The figure shows that a buoyancy front had developed and then generated a cascade of smaller mushroom-like dipole circulations and trains of Kelvin-Helmholtz roll-ups via interaction between the buoyancy and bathymetry gradients, followed by shear instability.  Compare this configuration with the chlorophyll tracers shown in Figure \ref{Fig:globcurrent1}. }\label{Fig:CourtesyofWeiPan2}
\end{figure}

\subsection{Hamiltonian formulation of TQG}\label{sec:hamTQG}
In this section we will investigate the Hamiltonian structure of TQG. The Hamiltonian formulation of baroclinic QG and QG was analysed in \cite{holm1986hamiltonian,holm1998hamilton}, respectively.
\paragraph{Conservation laws for deterministic TQG.} The deterministic TQG system \eqref{eq:qTQG} and \eqref{eq:bTQGstream} conserves the energy
\begin{equation}
H_{TQG}(q,b) =  -\,\frac12 \int_{\cal D} (q-f_1)(\Delta-1)^{-1}(q-f_1) + h_1  b\Big)\,dx\,dy\,.
\label{eq:erg-tqg}
\end{equation}
The proof that the TQG equations conserve energy is a direct calculation that requires the boundary conditions \eqref{eq:bdyTQGstream}. 
The deterministic TQG equations also conserve an infinity of integral conservation laws, determined by two arbitrary differentiable functions of buoyancy $\Phi(b)$ and $\Psi(b)$ as
\begin{equation}
C_{\Phi,\Psi} = \int_{\cal D}\Phi(b) + q \Psi(b)\,dx\,dy\,.
\label{eq:casimirstqg}
\end{equation}
The proofs that the integral quantities $H_{TQG}$ in \eqref{eq:erg-tqg} and $C_{\Phi,\Psi}$ in \eqref{eq:casimirstqg} are conserved by the TQG equations  in \eqref{eq:bTQGstream} and \eqref{eq:TQGinqform} are both direct calculations which invoke the boundary conditions \eqref{eq:bdyTQGstream}.
The integral conservation laws $C_{\Phi,\Psi}$ for the TQG equations have the same form as the family of integral conserved quantities for the TRSW equations, defined in \eqref{eq:casimirstrsw}, and likewise the integral conserved quantities \eqref{eq:casimirstL1} for the TL1 equations with the corresponding potential vorticity variable for TL1  in \eqref{def:pvtL1}. The persistence of these integral conservation laws for the deterministic TQG equations is best explained in terms of their Hamiltonian formulation.

\paragraph{Hamiltonian formulation of deterministic TQG}
The Hamiltonian form of the deterministic TQG equations in \eqref{eq:bTQGstream} and \eqref{eq:TQGinqform} is given by 
\begin{align}
\frac{dF}{dt} = \{ F, H\} = - \int 
\begin{bmatrix}
\delta F/ \delta q \\ \delta F/ \delta b
\end{bmatrix}^T
\begin{bmatrix}
J(\,q-b\,,\,\cdot\,) & J(\,b\,,\,\cdot\,)
\\
J(\,b\,,\,\cdot\,) & 0
\end{bmatrix}
\begin{bmatrix}
\delta H_{TQG}/ \delta q = -\,\psi \\ \delta H_{TQG}/ \delta b = -\,h_1/2
\end{bmatrix}
d^2x
\label{eqn:TQG-LPB}
\end{align}
for the energy Hamiltonian in equation \eqref{eq:erg-tqg}.
The Poisson matrix in \eqref{eqn:TQG-LPB} is a Poisson deformation of the Lie--Poisson Hamiltonian matrix by the invertible linear transformation $(q,b)\to (q-b,b)$. This deformation preserves the the integral conservation laws $C_{\Phi,\Psi}$ in equation \eqref{eq:casimirstqg} because their variational derivatives lie in the kernel of the Lie--Poisson Hamiltonian matrix operator. That is the Poisson bracket $\{C_{\Phi,\Psi},H\}$ vanishes for \emph{every} choice of functional $H(q,b)$, not just for the energy Hamiltonian $H_{TQG}(q,b)$ in  \eqref{eq:erg-tqg} which generates the deterministic TQG equations in \eqref{eq:bTQGstream} and \eqref{eq:TQGinqform} under the action of the Poisson bracket in \eqref{eqn:TQG-LPB}. The Hamiltonian structure allows one to investigate linear and nonlinear stability of the TQG model. This will be done in future work.


\paragraph{Hamiltonian formulation of stochastic TQG}
As with the TRSW equations, introducing stochasticity via the Euler--Poincar\'e theorem will preserve the conservation laws in \eqref{eq:casimirstqg}. However, introducing stochasticity would not preserve energy in \eqref{eq:erg-tqg}, because the stochastic Lagrangian depends explicitly on time through the Brownian motion. Nonetheless, as discussed in remark \ref{remark:Casimirs} for the stochastic TRSW equations,  the stochastic TQG equations may still possess a Hamiltonian formulation. By coupling the potential vorticity to noise we can construct the noise Hamiltonians as follows
\begin{equation}
H_i\circ dW_t^i = \int_{\cal D} q\,\vartheta_i(\bx) \,dx\,dy\circ dW_t^i,
\label{eq:noisehamTQG}
\end{equation}
where $\vartheta_i(\bx)$ are functions of space but not of time. Each $\vartheta_i(\bx)$ is associated with an independent Brownian motion $W_t^i$. For each $i$, we have $\nabla^\perp\vartheta_i(\bx) = \zh\times\nabla\vartheta_i(\bx) = \boldsymbol \xi_i(\bx)$. Thus, by definition, the divergence of $\boldsymbol \xi_i(\bx)$ vanishes for each $i$. By taking the sum of the Hamiltonian $H_{TQG}\,dt$ in \eqref{eq:erg-tqg} and the noise Hamiltonians $H_i\circ dW_t^i$ in \eqref{eq:noisehamTQG}, we obtain a semimartingale Hamiltonian. Inserting this augmented Hamiltonian into the Poisson bracket \eqref{eqn:TQG-LPB} yields the following stochastic TQG equations
\begin{equation}
\begin{aligned}
{\sf d}q + J(\psi,q-b)\,dt + J(\vartheta_i,q-b)\circ dW_t^i &= -\frac{1}{2} J\left(h_1,b\right)dt,\\
{\sf d}b + J(\psi,b)\,dt + J(\vartheta_i,b)\circ dW_t^i &= 0.
\end{aligned}
\label{eq:STQGham}
\end{equation}
Since we changed only the Hamiltonian to obtain the stochastic TQG equations, the conservation laws that correspond to \eqref{eq:STQGham} are immediate. We should remark that \eqref{eq:STQGham} are not the equations one would obtain by applying asymptotic analysis to the stochastic thermal L1 equations. Following the same methods as in the deterministic case leads to a stochastic set of equations with a much smaller family of integral quantities. Since adding SALT, besides the energy, preserves the conservation laws, proceeding via asymptotic analysis does not produce the SALT stochastic version of TQG.

\paragraph{Conservation laws for stochastic TQG.}
The stochastic TQG equations \eqref{eq:STQGham} do not conserve energy. This is because the Hamiltonian that generates the dynamics depends on time explicitly, due to the presence of the Brownian motions. However, equations \eqref{eq:STQGham}  do conserve the same set of integral quantities $C_{\Phi,\Psi}$, defined in \eqref{eq:casimirstqg}, since the Poisson bracket for the deterministic TQG equations and the stochastic TQG equations is the same.

\section{Conclusion and outlook}\label{sec:openprobs}

Our motivation in this paper has been to prepare the mathematical framework for our ongoing investigations of Stochastic Transport in Upper Ocean Dynamics (STUOD) by using the stochastic data assimilation algorithms developed and applied previously to determine the eigenvectors $\sym{\xi}_i(\bx)$ in the cases of the stochastic Euler fluid equation and the 2-layer stochastic QG model in \cite{cotter2018modelling, cotter2019numerically}. This framework has been established by deriving a sequence of realistic 2D models of Upper Ocean Dynamics with buoyancy effects by using nested asymptotic expansions with a shared stochastic variational structure. The process of developing these sequential derivations has also revealed several open mathematical problems at each level of approximation for these new nonlinear stochastic partial differential equations, as listed below.
\begin{enumerate}[(i)]
\item An extensive computational simulation study will be needed for classifying the solution behaviour of these new stochastic TRSW and TQG equations. This computational study has been left as a future step, after having established its efficacy in section \ref{sec:TQG-numerical example}. 
\item These computational simulations will be required in the calibration of the eigenvectors $\nabla^\perp\vartheta_i(\bx)=\boldsymbol \xi_i(\bx)$ for the noise in equation \eqref{eq:STQGham} and their subsequent use in the new framework for data calibration, uncertainty quantification and data assimilation using particle filters following the SALT algorithm developed in \cite{cotter2018modelling, cotter2019numerically}. This future simulation study will prepare these models for applications in the analysis of the observed detailed upper ocean dynamics as seen in Figure \ref{Fig:globcurrent1} . 
\item Investigation of the numerous potential effects of the horizontal buoyancy gradients appearing in the elliptic equation for the thermal L1 Lagrange multiplier $\bu$ \eqref{TL1-soln} has been left as an open mathematical problem for further analysis and computational simulation. 
\item The issue of well-posedness of these new nonlinear stochastic partial differential equations \eqref{EP-KN-trsw} for TRSW and \eqref{eq:STQGham} for TQG has also been left for future mathematical investigation. 
\item Finally, we recall that our derivation of the stochastic barotropic TQG balanced model in \eqref{eq:STQGham} has neglected the potentially important effects of baroclinic instabilities which tend to re-stratify the fluid. In particular, a future study with baroclinic TQG would extend the QG analysis of baroclinic instability of the currents around the Lofoten Basin given in \cite{isachsen2015baroclinic} to include thermal effects. For an in-depth discussion of baroclinic effects in comparison to balanced models, see \cite{callies2016role}.

\end{enumerate}

\subsection*{Acknowledgments}
We are grateful for constructive suggestions offered in discussions with C. J. Cotter, B. Chapron, D. Crisan, S.R. Ephrati, B. Fox-Kemper, R. Hu, O. Lang, J.M. Leahy, J. C. McWilliams, E. M\'emin, O. Street and S. Takao. We are also grateful to the European Space Agency for access to the Sentinel 3B satellite data shown in Figure \ref{Fig:globcurrent1}. During this work, DDH and WP were partially supported by ERC Synergy Grant 856408 - STUOD (Stochastic Transport in Upper Ocean Dynamics). EL was supported by EPSRC grant [grant number EP/L016613/1]. EL is also grateful for the warm hospitality shown to him  at the Imperial College London EPSRC Centre for Doctoral Training in the Mathematics of Planet Earth \url{mpecdt.org}.

\bibliographystyle{alpha}
\bibliography{biblio}

\end{document}